\documentclass[10pt]{article}
\usepackage[left=0.8in,top=0.8in,right=0.8in,bottom=0.8in]{geometry}
\usepackage{graphicx}
\usepackage{subfig}
\usepackage{amsmath}
\usepackage{amssymb}
\usepackage{amsthm}
\usepackage{latexsym}
\usepackage{bm}
\usepackage{array,supertabular}
\usepackage{color}
\usepackage{framed}
\usepackage{setspace}
\usepackage{fancyhdr}
\usepackage{stmaryrd}
\usepackage{mathrsfs}
\usepackage{mdframed}
\usepackage{url}
\usepackage{multirow}
\usepackage{algorithm}
\usepackage{algpseudocode}
\usepackage{pifont}
\usepackage{longtable}
\usepackage{booktabs}
\usepackage{graphicx}
\usepackage{rotating}
\usepackage{thmtools}
\usepackage{xcolor}
\usepackage{makecell}


\makeatletter
\newcommand*{\rom}[1]{\expandafter\@slowromancap\romannumeral #1@}
\makeatother

\newcolumntype{P}[1]{>{\centering\arraybackslash}p{#1}}

\SetSymbolFont{stmry}{bold}{U}{stmry}{m}{n}

\newtheorem{definition}{Definition}
\newtheorem{lemma}{Lemma}

\newtheorem{proposition}{Proposition}

\newtheorem{remark}{Remark}
\newtheorem{assumption}{Assumption}

\usepackage[many]{tcolorbox}

\numberwithin{equation}{section}

\begin{document}
\title{A framework for finite-strain viscoelasticity based on rheological representations}
\author{Chongran Zhao, Hongyan Yuan, and Ju Liu\\
\textit{\small Department of Mechanics and Aerospace Engineering,}\\
\textit{\small Southern University of Science and Technology,}\\
\textit{\small 1088 Xueyuan Avenue, Shenzhen, Guangdong 518055, China}\\
\textit{\small Email: chongranzhao@outlook.com, yuanhy3@sustech.edu.cn, liuj36@sustech.edu.cn}
}
\date{}
\maketitle

\section*{Abstract}
This work presents a new constitutive and computational framework based on strain-like internal variables belonging to Sym(3) and two representative rheological configurations. The generalized Maxwell and generalized Kelvin-Voigt models are considered as prototypes for parallelly and serially connected rheological devices, respectively. For each configuration, distinct kinematic assumptions are introduced. The constitutive theory is derived based on thermomechanical principles, where the free energies capture recoverable elastic responses and dissipation potentials govern irreversible mechanisms. The evolution equations for the internal variables arise from the principle of maximum dissipation. A key insight is the structural distinction in the constitutive laws resulted from the two rheological architectures. In particular, the Kelvin–Voigt model leads to evolution equations with non-equilibrium processes coupled, which pose computational challenges for the constitutive integration. To address this, we exploit the Sherman–Morrison–Woodbury formula and extend it to tensorial equations to design an efficient strategy during constitutive integration. With that strategy, the integration can be performed based on an explicit update formula, and the algorithmic complexity scales linearly with the number of non-equilibrium processes. This framework offers both modeling flexibility and computational feasibility for simulating materials with multiple non-equilibrium processes and complex rheological architectures under finite strain.

\vspace{5mm}

\noindent \textbf{Keywords:} 
Viscoelasticity, Generalized Kelvin-Voigt model, Generalized standard materials, Micromechanical model, Constitutive integration, Sherman-Morrison-Woodbury formula

\section{Introduction}
\label{sec:introduction}
Rheological representations, or mechanical analogs, motivate and guide constitutive model development \cite{Steinmann2021}. These analogs use basic devices, such as springs, dashpots, and friction devices, and arrange them in various series and parallel configurations to characterize complex material behavior, especially inelastic phenomena \cite{Simo2006,Wineman2009,Flaschel2023,Carvalho2023}. In the realm of viscoelasticity, two fundamental building blocks are the Maxwell and Voigt elements \cite{Ferry1980}. The former, consisting of an elastic spring connected in series with a viscous dashpot, captures fluid-like stress relaxation behavior; the latter models solid-like creep, in which the viscous component causes gradual strain accumulation under sustained load. The Zener \cite{Zener1949} and Poynting-Thomson models \cite{Poynting1909} were further introduced as the three-parameter formulation capable of simultaneously describing creep and relaxation. More advanced approaches, such as Prony series, generalize these concepts by incorporating discrete relaxation or creep spectra \cite{Tschoegl1989,Ferry1980}, providing a more accurate representation of complex viscoelasticity. Although they encode the same material information in principle, the generalized Maxwell and Kelvin-Voigt models are naturally suited to describe different viscoelastic responses, i.e., relaxation and creep, respectively. Their distinct practical implications warrant the study of both models.

At the micromechanical level, the generalized Maxwell model admits a variety of physical interpretations. Polymeric materials are typically represented by polymer chain networks, comprising a static network of chemically cross-linked polymer chains that governs elastic behavior, along with multiple transient networks formed by chains connected through weak, reversible bonds or entanglements. In the transient network theory \cite{Green1946,Tobolsky1945}, the continuous bond detachment and reattachment give rise to the viscous, rate-dependent response \cite{Tanaka1992,Linder2011,Vernerey2018}. Alternatively, in the tube theory \cite{De1971,Doi1988}, the viscous effects are attributed to the chain reptation along the tube-like region formed by surrounding chains \cite{Zhou2018,Zhao2022}. The parallel network architecture aligns naturally with the generalized Maxwell model, where total stress results from the additive contributions of elastic and viscous networks. In contrast, viscous behavior may also originate from interchain slippage in polymer strands containing multiple chains, while the elastic response stems from robust crosslinks that preserve network integrity. This microscopic mechanism, as evidenced in hydrogels \cite{Cai2024,Suriano2014}, is captured by the parallel spring-dashpot configuration in the Kelvin-Voigt model. The different micromechanical origins justify a detailed study of both viscoelastic models.

To interpret the rheological representation into a sound continuum model, the bridge is the \textit{kinematic assumption}. It posits the decomposition of the total deformation into elastic and viscous components, enabling consistent definitions of both deformation and rate of deformation within each rheological element. A commonly employed kinematic assumption is the multiplicative decomposition \cite{Lee1969,Sidoroff1974}, which has been extensively applied for the generalized Maxwell model \cite{Reese1998,Tallec1993,Lion1997a,Gouhier2024,Oskui2025}. This decomposition is indeed directly applicable to the generalized Maxwell model, as each Maxwell element involves only two serially connected components. However, we point out that this assumption lacks a strong physical basis for polymeric materials, rendering the notion of the intermediate configuration unclear. Moreover, since the internal variable evolves in GL$_+$(3),  modeling anisotropic materials requires careful specification of the evolution of material symmetry groups  \cite{Sadik2024,Bahreman2022,Ciambella2021}. The generalized Kelvin-Voigt model is less commonly employed in finite viscoelasticity. An exception is the Poynting-Thomson model \cite{Huber2000a,Laiarinandrasana2003,Meo2002,Cai2024}, which involves a single Voigt element. When applied to a model with multiple Voigt elements connected in series, an apparent challenge is the emergence of multiple intermediate configurations. Given the aforementioned issues related to the intermediate configuration, handling multiple such hypothetical configurations is by no means trivial, as it demands distributing the total deformation across all elements. Nevertheless,  rheological models involving multiple serially connected components may arise in practical applications. In this context, taking the generalized Kelvin-Voigt model as a prototype offers a meaningful representative case for investigating more complex rheological models.

A kinematic assumption was introduced for finite viscoelasticity \cite{Liu2024}, drawing inspiration from the Green-Naghdi decomposition in plasticity \cite{Green1965,Naghdi1990} and Miehe's concept of plastic metric \cite{Miehe2000,Miehe1998a}. Unlike the original Green-Naghdi decomposition, the use of generalized strains offers a more flexible framework for kinematically separating elastic and viscous deformations. The concept of generalized strains \cite{Hill1979} has been continuously enriched over time \cite{Liu2024}, thereby offering a rich set of tools for the kinematic decomposition in inelastic theories. The internal variable was taken as a rank-two Lagrangian tensor $\bm \Gamma \in$ Sym$_+$(3) \cite{Liu2024}, akin to the deformation tensor $\bm C$. That choice was largely inspired by the model of Simo \cite{Simo1987,Holzapfel1996b,Liu2021b} and the plastic metric concept \cite{Miehe2000,Miehe1998a}. Importantly, the resulting constitutive theory does not involve the controversial and ambiguous notion of an intermediate configuration, thereby eliminating the need to prescribe additional kinematic laws for a purely hypothetical concept. This feature represents a key conceptual advantage of the proposed formulation.

\subsection*{Contribution}
The choice of the internal variable plays a pivotal role in shaping the constitutive theory. In this work, we develop a theory based on strain-like internal variables belonging to Sym(3). It will be shown that, when paired with quadratic free energies (i.e., hyperelasticity of Hill's class), this formulation presents a unified family of finite deformation linear viscoelasticity, encompassing the models of Simo \cite{Simo1987,Liu2021b}, of Green and Tobolsky \cite{Green1946,Lubliner1985}, and of Miehe and Keck \cite{Miehe2000} as speicial cases. More general constitutive relations can be systematically constructed by adopting a broader class of generalized strains. A key feature of our approach is its ability to calibrate the strain parameters directly from experimental data. As a result, the kinematic decomposition is not imposed a priori but is instead adapted to the specific material under consideration. More importantly, by adopting coercive strain measures, we enable the consistent construction of elastic and viscous deformation tensors. This generalization permits the use of more general hyperelastic energy functions, such as the micromechanically motivated eight-chain model \cite{Arruda1993,Bischoff2001}, and leads to a more general constitutive framework.

We construct our theory based on two representative rheological models: the generalized Maxwell model, characterized by multiple parallel elements, and the generalized Kelvin-Voigt model, composed of rheological devices connected in series. The two configurations serve as prototypes for more complex rheological architectures, representing the essential features of parallel and serial organizations. Owing to their distinct arrangements, we propose two kinematic assumptions. The construction of the constitutive theory follows the fundamental thermomechanical principles. The free energy is employed to characterize the reversible phenomena, while the dissipation potential is introduced to depict the irreversible, dissipative mechanism. In particular, the evolution of the internal variables is derived from the principle of maximum dissipation, placing the theory within the category of generalized standard materials \cite{Ziegler1983,Ziegler1987,Halphen1975,Martyushev2006}. The introduction of the dissipation potential provides a rational basis for handling viscous dissipation within a thermodynamically consistent framework, offering a theoretical foundation for modeling non-Newtonian viscous behaviors.

The resulting structure of the constitutive equations reflects the underlying rheological architecture. While the generalized Maxwell model admits independent evolution equations for each internal variable, the generalized Kelvin–Voigt model exhibits a coupled system of flow rules, in which all non-equilibrium processes are mutually dependent. From a computational perspective, this coupling appears to result in a larger local system during constitutive integration. However, for the finite linear viscoelasticity with the generalized Kelvin–Voigt representation, the matrix arising in the constitutive integration can be expressed as a rank-one modification of a diagonal matrix. This structure allows the matrix inverse to be computed efficiently using the Sherman–Morrison–Woodbury formula \cite{Sherman1950,Hager1989}, enabling an effective and efficient treatment of the coupled evolution equations. For nonlinear models, we draw inspiration from this strategy and develop a decoupling approach to facilitate efficient constitutive updates. As a result, the algorithm complexity of the constitutive integration for the generalized Kelvin–Voigt model scales linearly with the number of non-equilibrium processes, similar to the generalized Maxwell model.

The remainder of this article is organized as follows. Section \ref{sec:continuum_basis} outlines the theoretical foundations. Section \ref{sec:constitutive_theory} develops the constitutive framework for viscoelasticity using the two rheological representations. Detailed numerical formulations are provided in Section \ref{sec:numerical_formula}. Section \ref{sec:results} presents model calibration and finite element analysis results. Concluding remarks are given in Section \ref{sec:conclusion}.

\section{Continuum basis}
\label{sec:continuum_basis}
Before delving into the theoretical discussion, we want to emphasize that, in this work, the summation convention is not adopted. We use a summation sign explicitly for all summations. The algebraic operations of tensors follow the definitions provided in the textbook \cite{Holzapfel2000}.

\subsection{Kinematics}
\label{sec:kinematics}
Consider the reference configuration of the body denoted by $\Omega_{\bm X} \subset \mathbb{R}^3$ with particles labeled by $\bm X \in \Omega_{\bm X}$. Its motion over time $t$ is described by the mapping $\bm x = \bm \varphi(\bm X, t) = \bm \varphi_t(\bm X)$, which establishes a one-to-one correspondence between each material point $\bm X$ and its current spatial position $\bm x$. The deformation gradient $\bm F$ and its Jacobian $J$ are defined as $\bm F := \partial \bm \varphi_t/\partial \bm X$ and $J := \det(\bm F)$, respectively. The right Cauchy-Green deformation tensor $\bm C := \bm F^\mathrm{T} \bm F$ admits the following spectral decomposition
\begin{align}
\label{eq:C_spectral}
\bm C = \sum_{a=1}^3 \lambda_a^2 \bm M_a,
\end{align}
where $\lambda_a$ are the principal stretches, and $\bm M_a := \bm N_a \otimes \bm N_a$ is the self-dyads of the principal referential directions $\bm N_a$, for $a=1,2,3$.

The concept of generalized strains plays a fundamental role in characterizing finite deformation and separating inelastic effects. The  Lagrangian generalized strain is defined as
\begin{align}
\label{eq:def_E}
\bm{E} := \sum_{a=1}^{3} E (\lambda_a) \bm{M}_a,
\end{align}
where $E: (0,\infty) \rightarrow \mathbb{R}$ is a scale function that maps the principal stretches to the principal values of the generalized strain \cite{Hill1968}. To make the definition \eqref{eq:def_E} mathematically sound and physically meaningful, it is demanded that the scale function $E$ has to be at least twice continuously differentiable, monotonically increasing (i.e., $E' > 0$), vanish at the reference state (i.e., $E(1)=0$), and satisfy the normality condition (i.e., $E'(1)=1$) \cite{Hill1979}. Here, we use a prime to denote the first derivative of the univariate function $E$. The generalized strain induces the following rank-four and rank-six tensors:
\begin{align}
\label{eq:def_Q&K}
\mathbb Q  := 2 \frac{\partial \bm E}{\partial \bm C},
\quad
\mathbb Q^{-1} := \frac12 \frac{\partial \bm C}{\partial \bm E},
\quad
\bm{\mathcal L} := 2 \frac{\partial \mathbb Q}{\partial \bm C} = 4 \frac{\partial^2 \bm E}{\partial \bm C \partial \bm C},
\quad
\bm{\mathcal{K}}:= \frac12 \frac{\partial \mathbb Q^{-1}}{\partial \bm E}= \frac14 \frac{\partial^2\bm C}{\partial\bm E\partial \bm E}.
\end{align}
These tensors will be useful in the subsequent discussion. Their explicit forms in terms of the principal stretches and directions can be found in \cite{Miehe2001b,Liu2024}. We cautiously mention that $\mathbb Q$ is not a projection because it does not satisfy the idempotent property (i.e., $\mathbb Q : \mathbb Q \neq \mathbb Q$), although it was previously referred to as such in earlier literature. 

For the Seth-Hill strains, the scale function is $E(\lambda) = (\lambda^{m}-1)/m$, where $m \neq 0$ is a strain parameter. A major drawback of this strain family is that the strain value approaches a finite limit under extreme compression or tension. To rectify this unphysical behavior, an additional property, known as \textit{coerciveness}, is additionally imposed on the scale function,
\begin{align}
\label{eq:E_coerciveness}
\lim_{\lambda \rightarrow 0} E(\lambda)= -\infty \quad \mbox{and} \quad \lim_{\lambda \rightarrow \infty} E(\lambda) = +\infty.
\end{align}
For a coercive strain, its scale function becomes a bijection from $\mathbb{R}_+$ to $\mathbb{R}$. From a mathematical standpoint, strains that lack coerciveness suffer from a compatibility issue in their eigenstructure, as not every symmetric rank-two tensor can be represented in the form of \eqref{eq:def_E} when the range of $E$ is a strict subset of $\mathbb{R}$. For example, a symmetric tensor with eigenvalues less than $-1/2$ cannot be represented as a Green-Lagrange strain. Recognizing the importance of coerciveness, generalized strains developed after the Seth-Hill strain family all incorporate this feature \cite{Liu2024}. Based on the above discussion, we have the following results regarding the representation of a symmetric tensor as a coercive strain.
\begin{lemma}
\label{lemma:representation}
Given a symmetric rank-two tensor $\bm W$ and a coercive scale function $E$, there exists a unique set of positive scalars $w_a \in \mathbb R_{+}$ and an orthonormal basis $\tilde{\bm N}_{a}$ such that
\begin{align*}
\bm W = \sum_{a=1}^{3} E(w_a) \tilde{\bm N}_a \otimes \tilde{\bm N}_a.
\end{align*}
Moreover, the symmetric positive semi-definite tensor
\begin{align*}
\sum_{a=1}^{3} w^2_a \tilde{\bm N}_a \otimes \tilde{\bm N}_a
\end{align*}
serves as a generalized deformation tensor associated with $\bm W$.
\end{lemma}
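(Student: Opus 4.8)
The plan is to build the representation directly from the spectral decomposition of $\bm W$ and to exploit the bijectivity that coerciveness confers on the scale function. First I would invoke the spectral theorem for real symmetric tensors: since $\bm W$ is symmetric, there exist real eigenvalues $\mu_a$ and an associated orthonormal eigenbasis $\tilde{\bm N}_a$, $a=1,2,3$, such that $\bm W = \sum_{a=1}^{3} \mu_a\, \tilde{\bm N}_a \otimes \tilde{\bm N}_a$. The eigenvalues $\mu_a$ may be arbitrary real numbers, not necessarily positive, which is exactly the situation that the coerciveness hypothesis is designed to accommodate.

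Next I would establish that a coercive scale function is a bijection from $(0,\infty)$ onto $\mathbb R$. Twice continuous differentiability gives continuity of $E$; the monotonicity requirement $E'>0$ makes $E$ strictly increasing and hence injective; and the coerciveness limits \eqref{eq:E_coerciveness}, combined with the intermediate value theorem, force the image of $E$ to be all of $\mathbb R$. Consequently the inverse $E^{-1}:\mathbb R \to (0,\infty)$ exists and is single-valued. Setting $w_a := E^{-1}(\mu_a) \in \mathbb R_{+}$ and substituting $\mu_a = E(w_a)$ back into the spectral decomposition yields the asserted representation, which settles existence.

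For uniqueness I would argue that the eigenvalues $\mu_a$ are intrinsic to $\bm W$, being the roots of its characteristic polynomial, so that each $w_a = E^{-1}(\mu_a)$ is pinned down by the injectivity of $E$, and the corresponding eigenprojections are likewise determined by $\bm W$. The one point that genuinely requires care—and the main obstacle of the argument—is the case of repeated eigenvalues, where the orthonormal basis $\tilde{\bm N}_a$ is not literally unique because any orthonormal basis of a degenerate eigenspace will serve. Here I would observe that within such a degenerate eigenspace all the associated $w_a$ coincide, so that the contribution $\sum_a w_a^2\, \tilde{\bm N}_a \otimes \tilde{\bm N}_a$ collapses to a common scalar multiple of the eigenprojection onto that eigenspace, which is itself independent of the chosen basis. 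Uniqueness therefore holds at the level of the decomposition, and no genuine ambiguity survives.

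Finally, for the second assertion I would define the candidate tensor $\sum_{a=1}^{3} w_a^2\, \tilde{\bm N}_a \otimes \tilde{\bm N}_a$ and verify its two required properties. Since each $w_a>0$, the entries $w_a^2$ are strictly positive, so the tensor is symmetric positive definite and in particular positive semi-definite. Reading the $w_a$ as principal stretches and the $\tilde{\bm N}_a$ as principal referential directions, this tensor has precisely the structural form of $\bm C$ in \eqref{eq:C_spectral}; applying the generalized strain map \eqref{eq:def_E} to it returns $\sum_a E(w_a)\, \tilde{\bm N}_a \otimes \tilde{\bm N}_a = \bm W$. Hence it is a bona fide generalized deformation tensor whose associated generalized strain is exactly $\bm W$, which completes the plan.
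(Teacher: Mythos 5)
Your proposal is correct and follows exactly the argument the paper intends: the paper gives no formal proof but derives the lemma ``based on the above discussion,'' namely the spectral decomposition of a symmetric tensor combined with the fact that coerciveness, strict monotonicity, and continuity make $E$ a bijection from $\mathbb{R}_+$ onto $\mathbb{R}$, which is precisely your route. Your extra care with the repeated-eigenvalue case and the observation that the resulting deformation tensor is in fact positive definite go slightly beyond what the paper records, but they are consistent refinements rather than a different approach.
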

This lemma claims that any symmetric tensor can be represented as a coercive strain, and the positive scalars can be interpreted as the stretches associated with the tensor. Importantly, this construction enables the introduction of elastic and viscous deformation measures, which in turn facilitate the formulation of more general constitutive models beyond those based on the classical Green-Naghdi kinematic assumption.

\subsection{Kinematic assumptions}
\label{sec:green_naghdi_assumption}
Kinematic assumption distinguishes elastic and viscous kinematics from the total deformation state and plays a fundamental role in shaping the constitutive theory. In this section, we discuss kinematic assumptions for viscoelasticity using basic mechanical analogs. Our focus is on the kinematic assumption of the Green-Naghdi type \cite{Liu2024,Naghdi1990}, and we generalize it through the notion of generalized strains to flexibly capture elastic and viscous deformations. This assumption forms the foundation of our model construction for multiple non-equilibrium processes. Different from the choice made in \cite{Liu2024}, where the internal variable is deformation-like and belongs to Sym(3)$_+$, the internal variable invoked here is strain-like and belongs to Sym(3). It will be demonstrated that this choice enjoys appealing attributes in both theory and computation. 

\begin{figure}[h]
\centering
\includegraphics[trim=0 370 440 0, clip,  scale=0.45]{./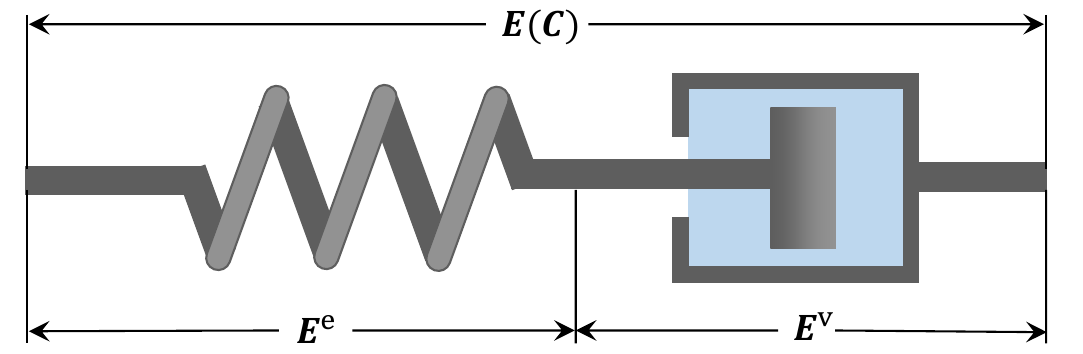}
\caption{Maxwell element.}
\label{fig:single_Maxwell_element}
\end{figure}

To begin with, we consider the Maxwell element, a conceptual device consisting of a spring and a dashpot connected in series (see Figure \ref{fig:single_Maxwell_element}). A key implication of this particular device is that the mechanical response of the spring equals that in the connected-in-series dashpot, where the spring is elastic and the dashpot obeys a viscous law. Given the total deformation of the Maxwell element, a central question lies in distinguishing the deformation of each component, as their separate kinematics directly determine the overall mechanical response. Leveraging the concept of generalized strains, we may give the kinematic assumption of the Green-Naghdi type for the Maxwell element as follows.
\begin{center}
\emph{The internal variable $\bm E^{\mathrm v} \in$ Sym(3) enters into the constitutive relation through the term $\bm E - \bm E^{\mathrm v}$.}
\end{center}
The internal variable $\bm E^{\mathrm v}$ can be conceptually interpreted as the strain in the dashpot, and the strain $\bm E$ is a generalized strain based on the spectral representation of $\bm C$. We may thus introduce
\begin{align}
\label{eq:def_Ee}
\bm E^\mathrm{e} := \bm E(\bm C) - \bm E^\mathrm{v},
\end{align}
and regard it as the strain in the spring (Figure \ref{fig:single_Maxwell_element}). Since $\bm E^\mathrm{v}$ evolves in Sym(3), the above elastic strain $\bm E^\mathrm{e}$ is also symmetric and thus admits a spectral decomposition. In the original proposal by Green and Naghdi, they used the Green-Lagrange strain for $\bm E$ \cite{Green1965}. As a non-coercive strain, the eigenvalues of $\bm E^{\mathrm e}$ do not guarantee that it can be represented as a Green-Lagrange strain, as was pointed out by Lee in \cite{Lee1969}. Thereafter, Green and Naghdi asserted that $\bm E - \bm E^{\mathrm v}$ enters into the stress response as a whole and avoided the concept of additive decomposition of the strain \cite{Naghdi1990}. This makes their approach somewhat abstract without a clear intuitive background. Due to the lack of a physical interpretation, the inelasticity theories based on the original Green-Naghdi assumption rely exclusively on the hyperelasticity of Hill's class, which has a quadratic form for the strain energy \cite{Papadopoulos1998,Schroeder2002}. Following our discussion based on Lemma \ref{lemma:representation}, it becomes clear that $\bm E^{\mathrm e}$ can always be expressed as a strain with a coercive scale function. As long as the scale function associated with $\bm E^{\mathrm e}$ is coercive, Lemma \ref{lemma:representation} guarantees the existence of elastic stretches $\lbrace \lambda_a^{\mathrm e} \rbrace$ such that
\begin{align*}
\bm E^\mathrm{e} = \sum_{a=1}^{3} E^{\mathrm e}(\lambda^\mathrm{e}_a) \hat{\bm M}_a,
\end{align*}
where $\hat{\bm M}_a$ is the self-dyads of the principal directions of $\bm E^{\mathrm e}$. The inverse of $E^{\mathrm e}$ does not always have a closed-form expression, and the values of $\lambda_a^\mathrm{e}$ are thus obtained via an iterative procedure in practice. The concept of elastic stretches naturally leads to the introduction of an elastic deformation tensor as
\begin{align}
\label{eq:def_Ce}
\bm C^\mathrm{e} := \sum_{a=1}^{3} \lambda_a^{\mathrm{e}\:2} \hat{\bm M}_a.
\end{align}
The primary benefit of introducing $\bm C^{\mathrm e}$ is that we may design the strain energy in a more flexible manner by invoking the modern theory of hyperelasticity modeling. In the later part of this work, we demonstrate this flexibility by using the eight-chain model \cite{Bischoff2001,Arruda1993} to characterize the elastic stress responses.

\begin{figure}[h]
\centering
\includegraphics[trim=0 280 440 0, clip,  scale=0.45]{./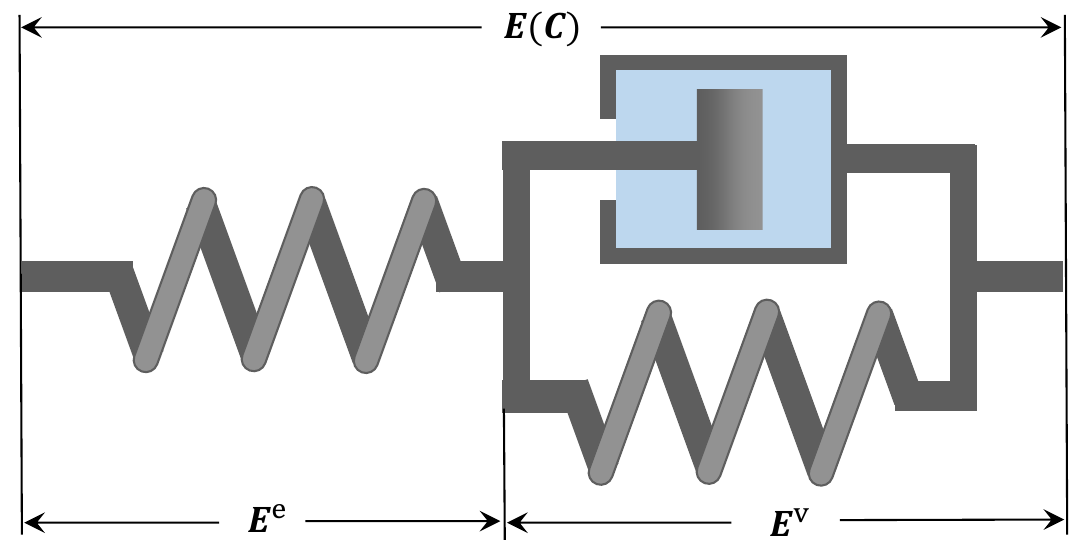}
\caption{The Poynting-Thomson model.}
\label{fig:Poynting_Thomson_model}
\end{figure}

In addition to the Maxwell element, the Poynting-Thomson model, also known as the classical Kelvin-Voigt model, is a critical rheological model for viscoelastic modeling \cite{Thomson1865}. It augments the Maxwell element by introducing an additional spring in parallel with the dashpot (Figure \ref{fig:Poynting_Thomson_model}). The resulting combination of a spring and a dashpot in parallel is referred to as the Voigt element \cite{Ferry1980}. Within the Voigt element, the spring is referred to as the \textit{non-equilibrium spring}, while the spring connected in series with the Voigt element is known as the \textit{equilibrium spring}, which is responsible for carrying the instantaneously recoverable deformation. An underlying implication of the Voigt element is that the non-equilibrium spring and the dashpot share the same deformation state characterized by the internal variable $\bm E^{\mathrm v}$. In order to describe the non-equilibrium spring using a general hyperelasticity theory, it is necessary to give a detailed characterization of the internal variable. If the scale function $E^{\mathrm v}$ associated with $\bm E^{\mathrm v}$ is coercive, Lemma \ref{lemma:representation} ensures the existence of uniquely defined viscous stretches $\lbrace \lambda^{\mathrm v}_a \rbrace$ such that
\begin{align*}
\bm E^\mathrm{v} = \sum_{a=1}^{3} E^{\mathrm v}(\lambda^\mathrm{v}_a) \bar{\bm M}_a,
\end{align*}
where $\bar{\bm M}_a$ is the self-dyads of the principal direction of $\bm E^{\mathrm v}$.  As a result, the viscous deformation tensor $\bm C^\mathrm{v}$ is defined as
\begin{align}
\label{eq:def_Cv}
\bm C^\mathrm{v} := \sum_{a=1}^{3} \lambda^\mathrm{v\:2}_a \bar{\bm M}_a,
\end{align}
which facilitates the modeling of the non-equilibrium spring by a broader class of strain energy designs.

As mentioned at the beginning of this section, the choice of the internal variable and the related kinematic assumption dictates the constitutive theory. In \cite{Liu2024}, our constitutive theory was formulated using an internal variable $\bm \Gamma \in$ Sym(3)$_+$, which is analogous to the viscous deformation tensor $\bm C^{\mathrm v}$ introduced here. That choice was motivated by earlier works, including the viscoelastic model of Simo \cite{Simo1987}, the identical polymer chain model \cite{Govindjee1992,Holzapfel1996b}, and Miehe's notion of plastic metric \cite{Miehe1998,Miehe1998a,Miehe2000}. Based on the spectral decomposition of $\bm \Gamma$, a generalized viscous strain was defined as $\bm E^{\mathrm v}(\bm \Gamma)$. The model was then constructed under a similar assumption with $\bm E(\bm C) - \bm E^{\mathrm v}(\bm \Gamma)$ entering the strain energy. In contrast to the theory of \cite{Liu2024}, where the viscous strain was introduced via a nonlinear tensorial function of the internal variable $\bm \Gamma$, the present theory directly takes the strain-like tensor as the internal variable from the outset. As will be shown in Section \ref{sec:Gen_Maxwell_model_FLV}, this choice ensures a linear flow rule when the free energy is quadratic, thus yielding a family of \textit{finite linear viscoelastic} models \cite{Liu2021b}. This attribute, however, does not generally hold for the previous theory of \cite{Liu2024}. Moreover, integrating an internal variable of Sym(3)$_+$ can be non-trivial due to the need to preserve positive definiteness \cite[p.~74]{Simo1992e}. This often necessitates formulating the evolution equation in a manner that the exponential integrator can be applied \cite{Miehe2000,Simo1992e}.

The multiplicative decomposition of the deformation gradient $\bm F = \bm F^{\mathrm e} \bm F^{\mathrm v}$ is another widely adopted kinematic assumption in modeling viscoelasticity \cite{Sidoroff1974}. This formulation introduces a conceptual intermediate configuration. The deformation gradient $\bm F^\mathrm{v}$ maps the reference configuration to the intermediate state, which is then elastically deformed to the current configuration via $\bm F^\mathrm{e}$. Under this setting, the elastic deformation tensor $\hat{\bm C}^\mathrm{e}:= \bm F^{\mathrm{e}\:\mathrm T} \bm F^\mathrm{e}$ and the viscous deformation tensor $\hat{\bm C}^\mathrm{v} := \bm F^{\mathrm{v}\:\mathrm T} \bm F^\mathrm{v}$ are then defined. These quantities characterize the deformation of the spring within the Maxwell element \cite{Reese1998} and the Voigt element  \cite{Huber2000a}, and are analogous to $\bm C^{\mathrm e}$ and $\bm C^{\mathrm v}$ defined in \eqref{eq:def_Ce} and \eqref{eq:def_Cv}. Here we used the hat symbol to distinguish the two from the deformation tensors introduced in our framework. The multiplicative decomposition found a sound micromechanical interpretation for polycrystalline materials. The inelastic deformation gradient is due to the sliding of crystal blocks along slip planes, and $\bm F^{\mathrm e}$ describes the distortion and rotation of the crystal lattice \cite{Asaro2006}. However, this physical interpretation does not generally extend to other materials, such as polymers or biological tissues, where viscous effects stem from completely different mechanisms.

A deeper issue with the multiplicative decomposition is the non-uniqueness of the intermediate configuration, since its internal variable belongs to GL(3)$_+$. For isotropic materials, this indeterminacy does not cause trouble and is usually fixed by absorbing all rotation into $\bm F^{\mathrm e}$ in practice \cite{Reese1998,LeTallec1994}. The ambiguity is exacerbated in anisotropic materials, where the evolution of material symmetry groups must be explicitly specified \cite{Ciambella2021,Sadik2024}. To mitigate the issue caused by the intermediate configuration, a reverse decomposition $\bm F = \bm F^{\mathrm v} \bm F^{\mathrm e}$ was proposed \cite{Latorre2016}, which allows formulating the anisotropic free energy completely on the reference configuration. Interestingly, it was recently demonstrated that the reverse decomposition enjoys better thermomechanical consistency than the original Sidoroff decomposition \cite{Bahreman2022}. While the multiplicative decomposition is the prevailing kinematic assumption, it is not without controversy and should not be considered the sole admissible choice for modeling general inelastic materials. This work aims to develop a kinematic decomposition that allows flexible treatment of complex rheological representations while providing effective descriptions of inelastic material behavior.

\section{Modeling viscoelasticity based on rheological representations}
\label{sec:constitutive_theory}
This study focuses on two representative rheological representations for viscoelasticity. The generalized Kelvin-Voigt model has been rarely discussed in the literature. The decomposition of generalized strains offers a robust basis for model construction, which lays the groundwork for more elaborate inelastic models composed of multiple serially connected rheological elements. Before discussing the two models in detail, we construct a thermomechanical foundation for viscoelasticity with multiple non-equilibrium processes.

Consider a material characterized by $M$ non-equilibrium processes. We assume the existence of $M$ internal variables $\lbrace \bm E^\mathrm{v}_\alpha \rbrace_{\alpha=1}^M$. Based on our discussion in Section \ref{sec:green_naghdi_assumption}, the internal variables are symmetric rank-two Lagrangian tensors and have the same invariance property as the strain $\bm E$ under superimposed rigid body motion. Under the isothermal condition, the Helmholtz free energy $\Psi$ is postulated to depend on the deformation tensor $\bm C$ and the internal variables:
\begin{align}
\label{eq:def_energy}
\Psi = \Psi(\bm C, \bm E^\mathrm{v}_1, \dots, \bm E^\mathrm{v}_M).
\end{align}
The invariance requirement ensures the objectivity of the free energy. The Clausius-Plank inequality takes the form
\begin{align*}
\mathcal{D} := \left( \bm S - 2 \frac{\partial \Psi}{\partial \bm C} \right) : \frac{1}{2} \dot{\bm C} + \sum_{\alpha=1}^{M} \bm T_{\alpha} : \dot{\bm E}^\mathrm{v}_\alpha \geq 0,
\end{align*}
where $\mathcal D$ stands for the internal dissipation, and $\bm T_\alpha$ represents the thermodynamic force conjugate to the $\alpha$-th internal variable $\bm E^{\mathrm v}_{\alpha}$, i.e.,
\begin{align}
\label{eq:def_T_alpha}
\bm T_\alpha := -\frac{\partial \Psi}{\partial \bm E^\mathrm{v}_\alpha}.
\end{align}
We adopt the second Piola-Kirchhoff stress $\bm S$ as a non-dissipative stress, meaning
\begin{align}
\label{eq:S-constitution}
\bm S = 2 \frac{\partial \Psi}{\partial \bm C}.
\end{align}
In the modeling of viscoelasticity, collective experimental observation indicates the presence of both equilibrium and non-equilibrium material responses \cite{Bergstrom1998,Wang2018,Xiang2019}. Correspondingly, the free energy and the stress can be decomposed into two parts as
\begin{align}
\label{eq:S_eq_S_neq}
\Psi = \Psi^\infty + \Psi^\mathrm{neq}.
\end{align}
The superscript ‘$\infty$' denotes the equilibrium response, while ‘neq’ refers to the dissipative, non-equilibrium response. For instance, the associated stress components are introduced as
\begin{align*}
\bm S^\infty = 2\frac{\partial \Psi^\infty}{\partial \bm C}
\quad \mbox{and} \quad
\bm S^\mathrm{neq} = 2\frac{\partial \Psi^\mathrm{neq}}{\partial\bm C}.
\end{align*}
With the relation \eqref{eq:S-constitution} for $\bm S$, the internal dissipation reduces to 
\begin{align*}
\mathcal{D} = \sum_{\alpha=1}^{M} \bm T_{\alpha} : \dot{\bm E}^\mathrm{v}_\alpha.
\end{align*}
To characterize the dissipative material behavior, a dissipation potential $\Phi(\dot{\bm E}_1^{\mathrm v}, \cdots, \dot{\bm E}_{M}^{\mathrm v})$ is introduced, which is required to be non-negative and convex. The principle of maximum entropy production has been employed as a general framework for describing the evolution of non-equilibrium systems. This principle can be traced back to the variational principle due to Onsager \cite{Onsager1931} and was generalized by Ziegler to account for continuum problems \cite{Ziegler1983}. It provides a thermodynamically motivated guideline for specifying the evolution of internal variables. Specifically, it claims that the evolution of internal variables should maximize the dissipation $\mathcal D$ while subjected to the constraint
\begin{align}
\label{eq:dissipation_constraint}
\sum_{\alpha=1}^{M} \bm T_{\alpha} : \dot{\bm E}^\mathrm{v}_\alpha = \Phi(\dot{\bm E}_1^{\mathrm v}, \cdots, \dot{\bm E}_{M}^{\mathrm v}).
\end{align}
The non-negativity of $\Phi$ ensures the satisfaction of the Clausius-Plank inequality. The above concept can be described by the constrained optimization problem,
\begin{align*}
\max_{(\delta, \dot{\bm E}^\mathrm{v}_\alpha)} \left\{ \sum_{\alpha=1}^{M}\bm T_\alpha : \dot{\bm E}_\alpha^\mathrm{v} - \delta \Big( \Phi(\dot{\bm E}_1^{\mathrm v}, \cdots, \dot{\bm E}_{M}^{\mathrm v}) - \sum_{\alpha=1}^{M}\bm T_\alpha : \dot{\bm E}^\mathrm{v}_\alpha \Big) \right\},
\end{align*}
where $\delta$ is a Lagrange multiplier. Examining the optimality condition leads to
\begin{align}
\label{eq:opt_optimality_cond}
\bm T_\alpha = \frac{\delta}{1 + \delta} \frac{\partial \Phi}{\partial \dot{\bm E}^\mathrm{v}_\alpha},
\end{align}
which suggests the thermodynamic force $\bm T_\alpha$ is orthogonal to the plane tangent to the iso-surface of the dissipation potential $\Phi(\dot{\bm E}_1^{\mathrm v}, \cdots, \dot{\bm E}_{M}^{\mathrm v}) = \Phi_0$. This condition is analogous to the normality rule in plasticity \cite{Ziegler1958}, which is typically given in the stress space by the dual dissipation function \cite{Miehe2002a,Moreau1976}. It is worth noting that the convexity of the dissipation potential guarantees that the optimization problem admits a unique solution and the resulting relation \eqref{eq:opt_optimality_cond} is well-defined. Combining the optimality condition with the definition of the thermodynamic forces in \eqref{eq:def_T_alpha}, we arrive at the general evolution equation for all internal variables as
\begin{align}
\label{eq:evo_general_alpha}
\underbrace{\frac{\partial \Psi}{\partial \bm E^\mathrm{v}_\alpha} }_{-\bm T_{\alpha}} + \frac{\delta}{1 + \delta} \frac{\partial \Phi}{\partial \dot{\bm E}^\mathrm{v}_\alpha} = \bm O, \qquad \mbox{for} \quad \alpha = 1, \cdots, M.
\end{align}
The Lagrange multiplier $\delta$ is determined from the following identity
\begin{align*}
\Phi(\dot{\bm E}_1^{\mathrm v}, \cdots, \dot{\bm E}_{M}^{\mathrm v}) = \frac{\delta}{1+\delta} \sum_{\alpha=1}^{M} \frac{\partial \Phi}{\partial \dot{\bm E}^\mathrm{v}_\alpha}:\dot{\bm E}^\mathrm{v}_\alpha,
\end{align*}
which ensures that the condition \eqref{eq:dissipation_constraint} is satisfied. The Lagrange multiplier $\delta$ obtained from above will enter into \eqref{eq:evo_general_alpha} to complete the specification of the evolution equation. Also, the factor $1+1/\delta$ reflects the homogeneity property of $\Phi$.

\begin{remark}
Alternatively, the Biot equation is often employed to derive the evolution equation \cite{Miehe2005,Kumar2016,Govindjee2019}. That equation is identical to \eqref{eq:evo_general_alpha} with the prefactor $\delta/(1+\delta)$ absorbed into the dissipation potential $\Phi$. As a result, the dissipation potential generally differs from the internal dissipation $\mathcal D$ by a scaling factor in that approach.
\end{remark}

A common choice for the dissipation potential is a quadratic form
\begin{align*}
\Phi = \sum_{\alpha=1}^{M} \dot{\bm E}^\mathrm{v}_\alpha : \mathbb V_\alpha : \dot{\bm E}^\mathrm{v}_\alpha,
\end{align*}
where each non-equilibrium process is characterized by a rank-four symmetric and positive semi-definite viscosity tensor $\mathbb V_\alpha$. Generally speaking, $\mathbb{V}_{\alpha}$ may depend on the internal variables or thermodynamic forces due to the non-Newtonian effects \cite{Lion1997a,Bergstrom1998,Kumar2016}. Moreover, the quadratic form is a homogeneous function of degree $2$, we thus have $\delta =1$ due to Euler's homogeneous function theorem. If the viscosity tensor is Newtonian and isotropic with identical deviatoric and volumetric viscosities, we have $\mathbb V_{\alpha} = \eta_{\alpha} \mathbb I$, where $\eta_{\alpha}$ is the viscosity coefficient. The evolution equation \eqref{eq:evo_general_alpha} further simplifies to
\begin{align}
\label{eq:evolution_simple_alpha}
\frac{\partial \Psi}{\partial \bm E^\mathrm{v}_\alpha} +  \eta_\alpha \dot{\bm E}^\mathrm{v}_\alpha = \bm O.
\end{align}
From an intuitive perspective, the term $\eta_\alpha \dot{\bm E}^\mathrm{v}_\alpha$ represents the viscous stress of a Newtonian dashpot, while the thermodynamic force $\bm T_\alpha$ drives the springs toward equilibrium with this viscous response. If $\bm T_\alpha$ depends linearly on the internal variables $\bm E^\mathrm{v}_\alpha$, the evolution equation reduces to a linear ordinary differential equation and admits an analytical solution via exponential integration. We refer to this type of model as \textit{finite (deformation) linear viscoelasticity}, meaning it is a finite strain model with linear evolution equations. In contrast, a nonlinear dependence necessitates numerical iterative methods for solving $\bm E^\mathrm{v}_\alpha$, characterizing \textit{nonlinear viscoelasticity}. In the remainder of this work, we adopt the above form of the dissipation potential, and the evolution equations are instantiated based on \eqref{eq:evolution_simple_alpha}. 

\begin{remark}
The constitutive theory here is fully characterized by a pair of scalar functions: the free energy $\Psi$ and the dissipation potential $\Phi$. This modeling approach falls into the category of generalized standard materials \cite{Halphen1975}. A wide range of dissipative mechanisms can be systematically treated through the design of these two potentials, including plasticity, damage, etc. \cite{Miehe2002a,Murakami2012,Peric1993}. In the context of viscoelasticity, the dissipation potential offers a rational means to introduce non-Newtonian effects \cite{Miehe2005,Kumar2016}. More recently, it is employed in data-driven constitutive modeling to provide a thermodynamically consistent structure for machine learning approaches \cite{Flaschel2023,Huang2022}.
\end{remark}

Under the isothermal condition, the body reaches the thermodynamic equilibrium state if there is no more change of the internal variables, or no further changes take place in the dashpots. A rigorous definition for this notion is given as follows.
\begin{definition}
For a viscoelastic material characterized by $M$ internal variables $\lbrace \bm E^{\mathrm v}_{\alpha}\rbrace_{\alpha=1}^{M}$, it is in the equilibrium state if $\dot{\bm E}^{\mathrm v}_{\alpha} = \bm O$ for $\alpha=1,\cdots, M$.
\end{definition}
We use the notation $|_{\mathrm{eq}}$ to indicate quantities evaluated at the equilibrium state. A direct consequence of \eqref{eq:evolution_simple_alpha} is that 
\begin{align}
\label{eq:T_alpha_relax_cond}
\bm T_{\alpha} |_{\mathrm{eq}} = \bm O, \quad \mbox{for} \quad \alpha =1, \cdots, M.
\end{align}
This property characterizes the state under thermodynamic equilibrium. It is not only physically intuitive but also mathematically relevant for well-behaved models \cite{Liu2021b}.

\subsection{The generalized Maxwell model}
\label{sec:Gen_Maxwell_model}
The generalized Maxwell model is one widely adopted rheological framework for viscoelastic modeling. It comprises an elastic spring arranged in parallel with multiple Maxwell elements (Figure \ref{fig:gen_Maxwell_model}). Conceptually, all parallel branches in the generalized Maxwell model undergo the same instantaneous deformation, while the total stress is obtained by summing the contributions from each branch. If there is only one non-equilibrium branch (i.e., $M=1$), it is also referred to as the Zener model. In this work, we extend the discussion made above to construct a constitutive theory based on the generalized Maxwell representation. A kinematic assumption is made as follows.
\begin{assumption}
\label{as:gen_maxwell}
The free energy \eqref{eq:def_energy} for the generalized Maxwell model with $M$ parallel branches can be represented as
\begin{align}
\label{eq:free_energy_maxwell}
\Psi(\bm C, \bm E^\mathrm{v}_1, \bm E^\mathrm{v}_2, \dots, \bm E^\mathrm{v}_M)= \Psi^\infty(\bm C)+\Psi^\mathrm{neq}(\bm C, \bm E^\mathrm{v}_1,\bm E^\mathrm{v}_2,\dots,\bm E^\mathrm{v}_M) \quad \mbox{with} \quad \Psi^\mathrm{neq} =\sum_{\alpha=1}^{M}\Psi_\alpha^\mathrm{neq}(\bm C, \bm E^\mathrm{v}_\alpha).
\end{align}
The internal variables enter into the free energy $\Psi^{\mathrm{neq}}$ through the terms $\bm E^{\mathrm e}_{\alpha} := \bm E_{\alpha}(\bm C) - \bm E^{\mathrm v}_{\alpha}$.
\end{assumption}

\begin{figure}[h]
\centering
\includegraphics[trim=180 120 180 70, clip,  scale=0.5]{./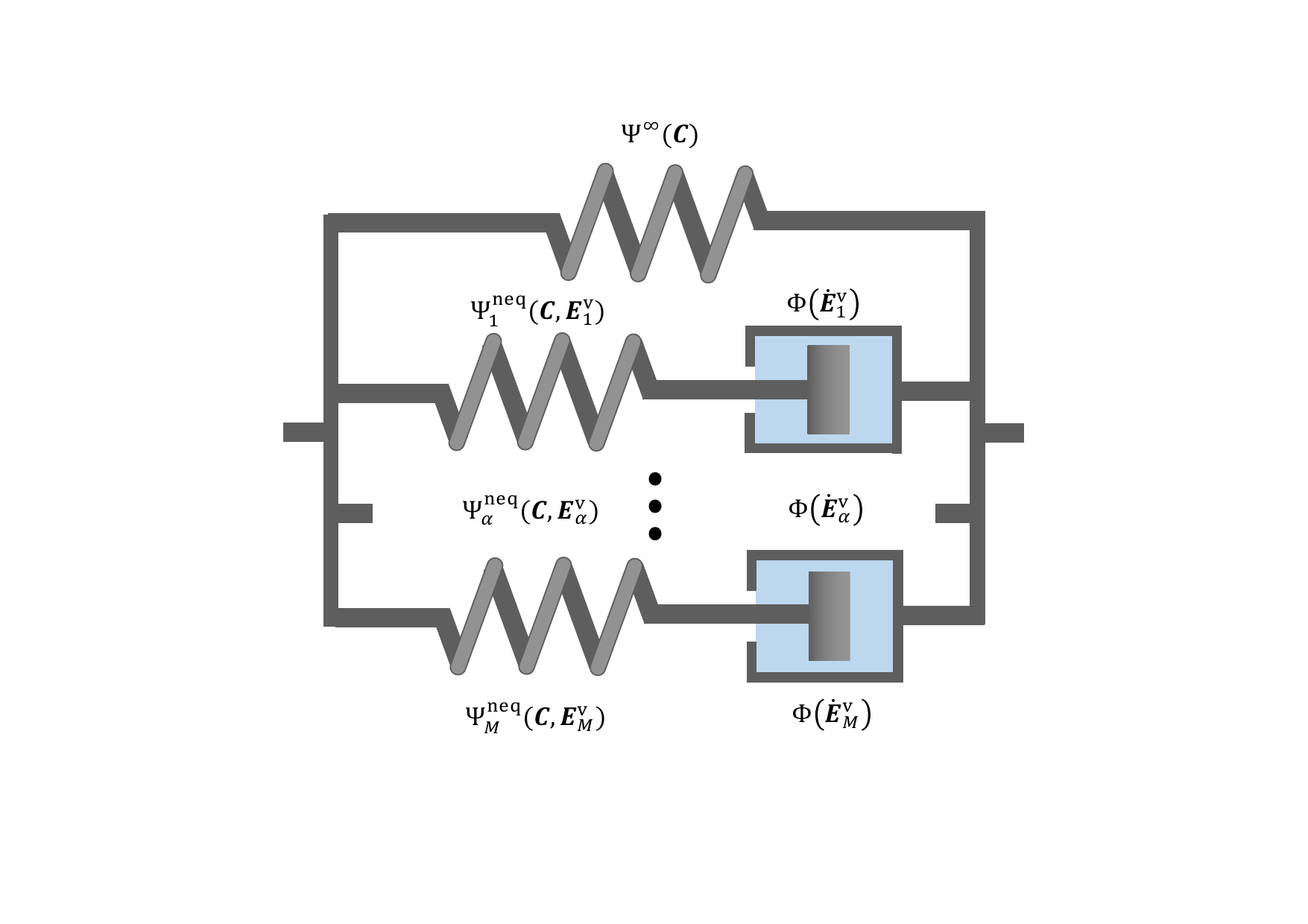}
\caption{The generalized Maxwell model.}
\label{fig:gen_Maxwell_model}
\end{figure}

In the above assumption, $\bm E^{\mathrm e}_\alpha$ is conceptually the strain of the spring within the $\alpha$-th Maxwell element. Consequently, we may express the non-equilibrium energies as
\begin{align*}
\Psi_\alpha^\mathrm{neq}(\bm C, \bm E^\mathrm{v}_\alpha) = \Psi_\alpha^\mathrm{neq}(\bm E^{\mathrm e}_{\alpha}), \quad \mbox{for} \quad \alpha = 1, \cdots, M.
\end{align*}
For notational simplicity, we do not introduce separate symbols for $\Psi_\alpha^\mathrm{neq}$, the energy stored in the $\alpha$-th Maxwell element, to explicitly indicate its dependency on $\bm E^{\mathrm e}_{\alpha}$, with the understanding that the argument of the function clarifies the relevant dependence. The non-equilibrium part of the second Piola-Kirchhoff stress can be explicitly represented as a summation of contributions from all Maxwell elements,
\begin{align*}
\bm S^\mathrm{neq} = \sum_{\alpha=1}^{M} \bm S^\mathrm{neq}_\alpha
\quad\mbox{with}\quad
\bm S^\mathrm{neq}_\alpha:= 2\frac{\partial \Psi^\mathrm{neq}_\alpha}{\partial \bm C}.
\end{align*}
Due to the form of the free energy \eqref{eq:free_energy_maxwell}, the total stress is the sum of the equilibrium and non-equilibrium contributions, i.e.,
\begin{align}
\label{eq:S_decomp_Maxwell}
\bm S = \bm S^{\infty} + \bm S^{\mathrm{neq}}.
\end{align}
The form of $\Psi^{\infty}$ follows the general design principle for hyperelasticity, and our focus is placed on the design of the non-equilibrium part of the free energy. Due to \eqref{eq:free_energy_maxwell} in the kinematic assumption, we immediately have the following identity,
\begin{align}
\label{eq:T_alpha_identity}
\bm T_{\alpha} := - \frac{\partial \Psi}{\partial \bm E^{\mathrm v}_{\alpha}} = - \frac{\partial \Psi_\alpha^\mathrm{neq}}{\partial \bm E^{\mathrm v}_{\alpha}} = \frac{\partial \Psi_\alpha^\mathrm{neq}}{\partial \bm E_{\alpha}}.
\end{align}
Consequently, the non-equilibrium part of the second Piola-Kirchhoff stress can be represented as
\begin{align}
\label{eq:Maxwell_S_neq}
\bm S^\mathrm{neq}_\alpha = \bm T_\alpha:\mathbb Q_\alpha
\quad\mbox{with}\quad
\mathbb Q_\alpha:= 2\frac{\partial \bm E_\alpha}{\partial \bm C}.
\end{align}
Recalling \eqref{eq:T_alpha_relax_cond}, we immediately have $\bm S^{\mathrm{neq}}_\alpha |_{\mathrm{eq}} = \bm S^{\mathrm{neq}} |_{\mathrm{eq}} = \bm O$.

\begin{remark}
As an alternative to Assumption \ref{as:gen_maxwell}, one may invoke the Sidoroff decomposition $\bm F = \bm F^{\mathrm e}_\alpha \bm F^{\mathrm v}_\alpha$ or the reverse decomposition $\bm F = \bm F^{\mathrm v}_\alpha \bm F^{\mathrm e}_\alpha$ for each branch. Correspondingly, the non-equilibrium free energy in the $\alpha$-th branch is represented as a function of $\hat{\bm C}^{\mathrm e}_\alpha := \bm F^{\mathrm{e}\:\mathrm T}_\alpha \bm F^\mathrm{e}_\alpha$. This strategy, used in combination with the generalized Maxwell model, is prevailing in contemporary finite viscoelastic models \cite{Reese1998,Gouhier2024}.
\end{remark}

\subsubsection{Finite linear viscoelasticity}
\label{sec:Gen_Maxwell_model_FLV}
A class of viscoelastic models is defined by prescribing the non-equilibrium free energies in quadratic forms as
\begin{align*}
\Psi^\mathrm{neq}_\alpha (\bm C, \bm E^\mathrm{v}_\alpha) = \frac{1}{2} \mu_\alpha \left \lvert \bm E^\mathrm{e}_\alpha \right \rvert^2, \quad \mbox{for} \quad \alpha = 1, \cdots, M,
\end{align*}
where $\mu_\alpha$ is the shear modulus of the $\alpha$-th Maxwell element, and $\lvert (\cdot) \rvert$ stands for the norm of a rank-two tensor $(\cdot)$. With the quadratic energy, the relation \eqref{eq:T_alpha_identity} leads to $\bm T_\alpha =\mu_\alpha\bm E^\mathrm{e}_\alpha = \mu_{\alpha} (\bm E_\alpha - \bm E^\mathrm{v}_\alpha)$. Based on the general evolution equation \eqref{eq:evolution_simple_alpha}, the evolution equation for the $\alpha$-th internal variable can be instantiated as
\begin{align}
\label{eq:FLV_Maxwell_evolution}
-\mu_\alpha (\bm E_\alpha - \bm E^\mathrm{v}_\alpha) + \eta_{\alpha} \dot{\bm E}^\mathrm{v}_\alpha = \bm O.
\end{align}
We may introduce $\tau_{\alpha} := \eta_{\alpha}/\mu_{\alpha}$ as the relaxation time for the $\alpha$-th process. The evolution equation can be rewritten as
\begin{align}
\label{eq:maxwell_flv_evo_eqn}
\dot{\bm E}^{\mathrm v}_{\alpha} + \frac{1}{\tau_{\alpha}} \bm E^{\mathrm v}_{\alpha} = \frac{1}{\tau_{\alpha}} \bm E_{\alpha}.
\end{align}
Noticing that \eqref{eq:FLV_Maxwell_evolution} suggests $\dot{\bm E}^{\mathrm v}_{\alpha} = \bm T_{\alpha} / \eta_{\alpha}$, we may alternatively obtain an evolution equation for $\bm T_\alpha$ by taking a time derivative on \eqref{eq:maxwell_flv_evo_eqn}:
\begin{align}
\label{eq:maxwell_flv_T_evo_eqn}
\dot{\bm T}_\alpha + \frac{1}{\tau_\alpha} \bm T_\alpha = \mu_\alpha \dot{\bm E}_\alpha,
\end{align}
where we have assumed that the shear modulus is time independent. Given the initial condition $\bm E^{\mathrm v}_{\alpha}|_{t=0} = \bm O$ and $\bm T_{\alpha}|_{t=0} = \bm O$, their solutions can be expressed in terms of the hereditary integrals
\begin{align}
\label{eq:maxwell_flv_evo_eqn_T}
\bm E^{\mathrm v}_\alpha = \frac{1}{\tau_{\alpha}} \int_{0}^{t} \exp\left(-\frac{t - s}{\tau_\alpha}\right) \bm E_{\alpha}(s) ds \quad \mbox{and} \quad
\bm T_\alpha = \mu_\alpha \int_{0}^{t} \exp\left(-\frac{t - s}{\tau_\alpha}\right) \dot{\bm E}_{\alpha}(s) ds.
\end{align}
Sometimes $\bm T_{\alpha}$ is referred to as the stress-type internal variable, and integrating $\bm T_{\alpha}$ can be convenient, as it directly leads to the second Piola-Kirchhoff stress. The convolution representations inspire the constitutive integration scheme for this type of model, to be detailed in Section \ref{sec:numerical_formulation_FLV_Maxwell}. Following \eqref{eq:S_decomp_Maxwell}, the total stress is represented as
\begin{align}
\label{eq:maxwell_flv_convolution}
\bm S = \bm S^{\infty} + \sum_{\alpha=1}^{M} \left( \bm T_{\alpha} : \mathbb Q_{\alpha} \right) = \bm S^{\infty} + \sum_{\alpha=1}^{M} \left\lbrace \mu_\alpha \int_{0}^{t} \exp\left(-\frac{t - s}{\tau_\alpha}\right) \dot{\bm E}_{\alpha}(s) ds : \mathbb Q_{\alpha} \right\rbrace,
\end{align}
which bears a resemblance to the convolution representation of the stress based on the relaxation function \cite[Chapter~10]{Simo2006}. We may examine the convolution representation \eqref{eq:maxwell_flv_convolution} by considering the strain history as $\bm E_{\alpha} = \bm E_{\alpha}(\bm C_0) H(t)$, where $\bm C_0$ is a constant deformation tensor and $H(t)$ is the Heaviside function. Physically, this setup corresponds to a relaxation test, and the stress can be obtained as
\begin{align*}
\bm S = \bm S^{\infty}(\bm C_0) + \sum_{\alpha=1}^{M} \Big( \mu_\alpha \exp\left(-t/\tau_\alpha\right) \bm E_{\alpha}(\bm C_0) : \mathbb Q_{\alpha} \Big).
\end{align*}
The non-equilibrium part of the stress is represented in terms of the Prony series, which has discrete spectra \cite{Lakes2009}. The instantaneous response of the material is obtained by letting $t \rightarrow 0_+$, and we have 
\begin{align*}
\bm S = \bm S^{\infty}(\bm C_0) + \sum_{\alpha=1}^{M} \Big( \mu_\alpha \bm E_{\alpha}(\bm C_0) : \mathbb Q_{\alpha} \Big).
\end{align*}
In contrast, the long-time response as $t \rightarrow \infty$ reduces to $\bm S = \bm S^{\infty}(\bm C_0)$, which can be interpreted as the equilibrium state. The same response is also directly recovered from \eqref{eq:maxwell_flv_convolution} under quasi-static loading conditions, where the strain rates $\dot{\bm E}_{\alpha}$ are negligible.

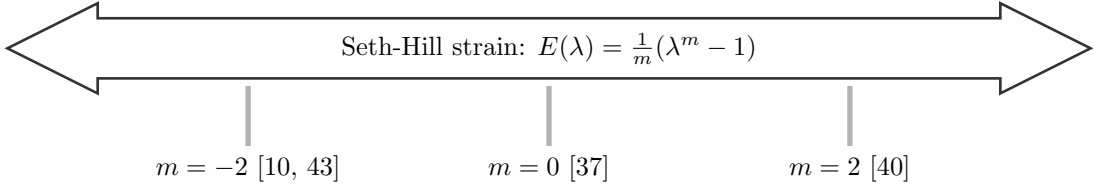
\begin{figure}[htbp]
\centering
\begin{tikzpicture}

  \def\bodylength{12}     
  \def\bodyheight{0.8}     
  \def\headlength{1.2}     
  \def\totalheight{1.2}      

  \def\xstart{0}
  \def\xend{\xstart + \bodylength}

  \draw[black!80, line width=1pt]
    (\xstart - \headlength, 0) -- 
    (\xstart, -\totalheight/2) --
    (\xstart, -\bodyheight/2) --
    (\xend, -\bodyheight/2) --
    (\xend, -\totalheight/2) --
    (\xend + \headlength, 0) --  
    (\xend, \totalheight/2) --
    (\xend, \bodyheight/2) --
    (\xstart, \bodyheight/2) --
    (\xstart, \totalheight/2) -- cycle;

  \node at (\xstart + \bodylength/2, 0.0) {Seth-Hill strain: $E(\lambda) = \frac{1}{m}(\lambda^m - 1)$};
  
  \foreach \x/\text in {2/{$m=-2$ \cite{Green1946,Lubliner1985}}, 6/{$m=0$ \cite{Miehe2000}}, 10/{$m=2$ \cite{Simo1987}}} {
    \draw[black!30, line width=2pt] (\x, -1.3) -- (\x, -0.5);
    \node[below] at (\x, -1.3) {\text};
  }

\end{tikzpicture}
\caption{Finite linear viscoelasticity based on the Seth-Hill strain, with classical models recovered as special cases.}
  \label{fig:FLV-maxwell-connection}
\end{figure}

\begin{remark}
In the finite linear viscoelastic model, the non-equilibrium energies are expressed in the quadratic form of the strain, and the strain does not need to be coercive. In particular, we may adopt the Seth-Hill strain family. The model of Green and Tobolsky \cite{Green1946,Lubliner1985} is recovered by adopting the Euler-Almansi strain, while the Green-Lagrange strain leads to the model of Simo \cite{Simo1987}. A model similar to that of Miehe and Keck \cite{Miehe2000} can be obtained by using the Hencky strain (Figure \ref{fig:FLV-maxwell-connection}). This demonstrates that, even within the framework of finite linear viscoelasticity, a broad spectrum of models can be constructed by varying the choice of strain, including the strain type and associated parameters. In particular, the strain can be selected and calibrated based on experimental data. In our experience, this endows the models with strong capabilities for material characterization, even when using linear evolution equations.
\end{remark}

\begin{remark}
In \cite{Liu2024}, the generalized Maxwell model is constructed by assuming that $\bm E_{\alpha}(\bm C) - \bm E^{\mathrm v}_{\alpha}(\bm \Gamma_\alpha)$ enters into the free energy, in which $\bm \Gamma_\alpha \in$ Sym(3)$_+$ are the primitive internal variables. The resulting viscous flow rule is nonlinear in general, even for quadratic free energies. A linear evolution equation is recovered only in the special case where the strain measure is of the Green–Lagrange type and the non-equilibrium free energy adopts a quadratic form.
\end{remark}

\subsubsection{Nonlinear viscoelasticity}
\label{sec:Gen_Maxwell_model_FV}
Given the elastic strain $\bm E^\mathrm{e}_\alpha$ associated with the $\alpha$-th Maxwell element, we introduce the corresponding elastic deformation tensor $\bm C^\mathrm{e}_\alpha$ by invoking Lemma \ref{lemma:representation}. To simplify our discussion, we choose the scale function of $\bm E^{\mathrm e}_\alpha$ to be identical to that of $\bm E_\alpha$ in this study. This choice not only simplifies the theoretical development but also reduces the number of parameters involved in model calibration. In specific, the elastic deformation tensor is constructed through the spectral representation of $\bm E^{\mathrm e}_{\alpha}$ as
\begin{align*}
\bm C^\mathrm{e}_\alpha := \sum_{a=1}^{3} \lambda^\mathrm{e\:2}_{\alpha\:a} \hat{\bm M}_{\alpha\:a}
\quad\mbox{with}\quad
\lambda^\mathrm{e}_{\alpha\:a} = E^{-1}_{\alpha}(E^\mathrm{e}_{\alpha\:a}),
\end{align*}
where $E^\mathrm{e}_{\alpha\:a}$ and $\hat{\bm M}_{\alpha\:a}$ are the principal values of the corresponding self-dyads of principal directions, and $E^{-1}_\alpha$ denotes the inverse of the scale function of the strain $\bm E_\alpha$. The non-equilibrium part of the free energy is then formulated as $\Psi^\mathrm{neq}_\alpha (\bm C, \bm E^\mathrm{v}_\alpha) = \Psi_\alpha^\mathrm{neq}(\bm C^\mathrm{e}_\alpha)$. Recalling the identity \eqref{eq:T_alpha_identity}, the thermodynamic force is given by
\begin{align}
\label{eq:FV_Maxwell_T_alpha}
\bm T_\alpha = - \frac{\partial \Psi_\alpha^\mathrm{neq}}{\partial \bm E^{\mathrm v}_{\alpha}} = \bm S^\mathrm{e}_\alpha:\mathbb Q^\mathrm{e\:-1}_\alpha
\quad\mbox{with}\quad
\bm S^\mathrm{e}_\alpha := 2\frac{\partial \Psi^\mathrm{neq}_\alpha}{\partial \bm C^\mathrm{e}_\alpha}
\quad \mbox{and} \quad
\mathbb Q^\mathrm{e\:-1}_\alpha := \frac12 \frac{\partial \bm C^\mathrm{e}_\alpha}{\partial \bm E^\mathrm{e}_\alpha}.
\end{align}
Substituting \eqref{eq:FV_Maxwell_T_alpha} into the general evolution equation \eqref{eq:evolution_simple_alpha} yields
\begin{align}
\label{eq:FV_Maxwell_evolution}
-\bm S^\mathrm{e}_\alpha : \mathbb Q^{\mathrm e\:-1}_\alpha + \eta_\alpha \dot{\bm E}^\mathrm{v}_\alpha = \bm O, \quad \mbox{for} \quad \alpha = 1, \cdots, M.
\end{align}
Notice that $\bm T_\alpha$ depends on $\bm E^{\mathrm v}_{\alpha}$ in a generally nonlinear fashion. It therefore demands a dedicated integration strategy, which is discussed in Section \ref{sec:numerical_formulation_FV_Maxwell}.

\paragraph{\textbf{Example}}
The introduction of elastic deformation tensors enables us to go beyond the quadratic form in the design of $\Psi_\alpha^\mathrm{neq}$. Here, we provide the free energy using the eight-chain model \cite{Bischoff2001,Arruda1993} for both equilibrium and non-equilibrium parts. This demonstrates the flexibility of the proposed framework in incorporating micromechanically motivated strain energy models. The free energies are given by
\begin{align}
\label{eq:example1_energy_eq}
\Psi^\infty(\bm C) =& \mu^\infty N^\infty \left( \lambda^\infty \mathfrak L^{-1}(\lambda^\infty) + \ln \frac{\mathfrak L^{-1}(\lambda^\infty)}{\sinh(\mathfrak{L}^{-1}(\lambda^\infty))}\right) 
- \frac{\mu^\infty \sqrt{N^\infty}}{3}\mathfrak{L}^{-1}(1/\sqrt{N^\infty}) \ln (J), \displaybreak[2] \\
\label{eq:example1_energy_neq}
\Psi^\mathrm{neq}_\alpha(\bm C^\mathrm{e}_\alpha) =& \mu_\alpha N_\alpha \left( \lambda^\mathrm{neq}_\alpha \mathfrak L^{-1}(\lambda^\mathrm{neq}_\alpha) + \ln \frac{\mathfrak L^{-1}(\lambda^\mathrm{neq}_\alpha)}{\sinh(\mathfrak{L}^{-1}(\lambda^\mathrm{neq}_\alpha))}\right) 
- \frac{\mu_\alpha \sqrt{N_\alpha}}{3}\mathfrak{L}^{-1}(1/\sqrt{N_\alpha}) \ln (J^\mathrm{e}_\alpha).
\end{align}
In the above, $\mu^\infty$ ($\mu_\alpha$) is the shear modulus for the (non-)equilibrium part; $N^\infty$ ($N_\alpha$) represents the number of chain segments for the (non-)equilibrium part; $\mathfrak{L}^{-1}$ stands for the inverse of the Langevin function $\mathfrak{L}(x) := \coth x - 1/x$, which accounts for the finite chain extensibility; the volume ratio of the $\alpha$-th non-equilibrium component is given by $J^\mathrm{e}_\alpha := \sqrt{\det \bm C^\mathrm{e}_\alpha}$; the relative average chain stretches for the equilibrium part $\lambda^\infty$ and the non-equilibrium part $\lambda^\mathrm{neq}_\alpha$ are defined as
\begin{align*}
\lambda^\infty := \sqrt{\frac{\bm C : \bm I}{3 N^\infty}}
\quad \text{and} \quad
\lambda^\mathrm{neq}_\alpha := \sqrt{\frac{\bm C^\mathrm{e}_\alpha : \bm I}{3 N_\alpha}}.
\end{align*}
The corresponding stresses are expressed as
\begin{align*}
\bm S^\infty =& \frac{\mu^\infty}{3\lambda^\infty} \mathfrak{L}^{-1}(\lambda^\infty)\bm I - \frac{\mu^\infty \sqrt{N^\infty}}{3} \mathfrak{L}^{-1}(1/\sqrt{N^\infty}) \bm C^{-1}, \displaybreak[2] \\
\bm S^\mathrm{e}_\alpha =& \frac{\mu_\alpha}{3\lambda^\mathrm{neq}_\alpha} \mathfrak{L}^{-1}(\lambda^\mathrm{neq}_\alpha)\bm I - \frac{\mu_\alpha \sqrt{N_\alpha}}{3}\mathfrak{L}^{-1}(1/\sqrt{N_\alpha}) \bm C^{\mathrm{e}\:-1}_\alpha.
\end{align*}
The thermodynamic force $\bm T_{\alpha}$ follows from $\bm S^\mathrm{e}_\alpha$ via \eqref{eq:FV_Maxwell_T_alpha} and is invoked to calculate the non-equilibrium stress $\bm S^{\mathrm{neq}}_\alpha$ according to \eqref{eq:Maxwell_S_neq}.

\subsection{The generalized Kelvin-Voigt model}
\label{sec:Gen_KV_model}
The classical Kelvin-Voigt model \cite{Thomson1865,Voigt1892}, also known as the Poynting–Thomson model, consists of a spring connected with a Voigt element in series. Together with the Zener model, the two are canonical three-parameter models commonly used to characterize viscoelastic material behaviors. In a generalized form, multiple Voigt elements are connected in series with a single equilibrium spring (Figure \ref{fig:gen_KV_model}), introducing multiple retardation mechanisms. In this study, we propose a Green-Naghdi-type kinematic assumption to formulate the generalized Kelvin-Voigt model at finite deformation. It is important to note that the separation of inelastic deformation in the generalized Kelvin-Voigt model differs from that in the generalized Maxwell model. In specific, the system consists of $M$ Voigt elements connected in series, with the deformation of the $\alpha$-th Voigt element characterized by an independent internal variable $\bm E^\mathrm{v}_\alpha$. To accommodate this configuration, we introduce the following kinematic assumption as the outset of our derivation.

\begin{figure}[h]
\centering
\includegraphics[trim=20 260 150 100, clip,  scale=0.5]{./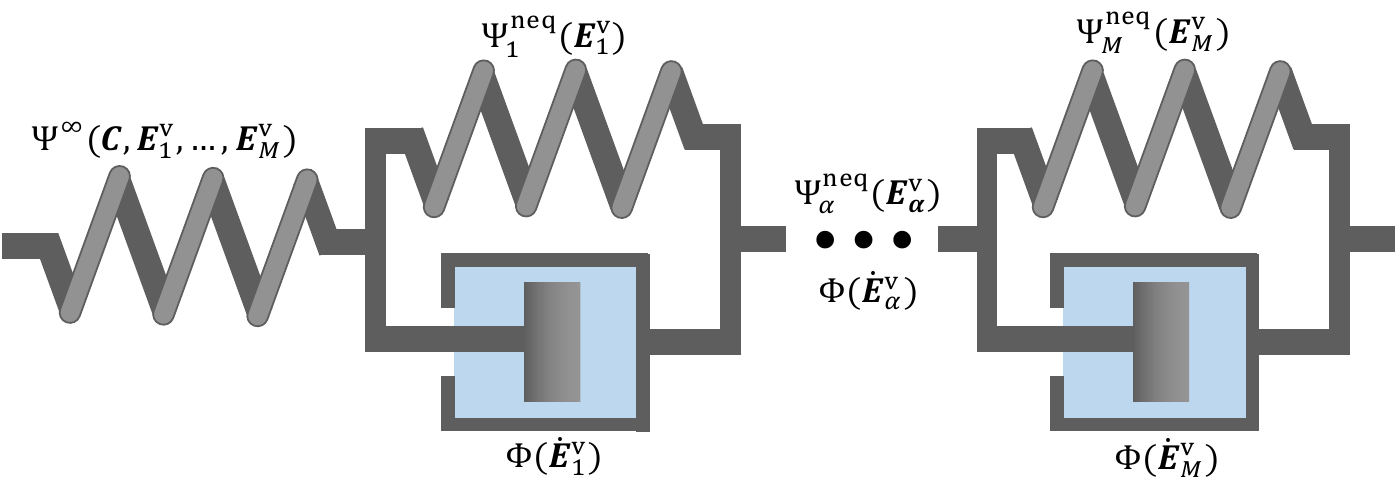}
\caption{The generalized Kelvin-Voigt model.}
\label{fig:gen_KV_model}
\end{figure}

\begin{assumption}
\label{as:gen_kelvin_voigt}
The free energy \eqref{eq:def_energy} for the generalized Kelvin-Voigt model with $M$ Voigt elements can be represented as
\begin{align}
\label{eq:strain_energy_kelvin_voigt}
\Psi(\bm C, \bm E^\mathrm{v}_1, \dots, \bm E^\mathrm{v}_M) = \Psi^\infty(\bm C,\bm E^\mathrm{v}_1, \dots, \bm E^\mathrm{v}_M ) + \Psi^\mathrm{neq}(\bm E^\mathrm{v}_1, \dots, \bm E^\mathrm{v}_M ) 
\end{align}
with
\begin{align*}
\Psi^\mathrm{neq}(\bm E^\mathrm{v}_1, \dots, \bm E^\mathrm{v}_M ) = \sum_{\alpha=1}^{M} \Psi^\mathrm{neq}_\alpha(\bm E^\mathrm{v}_\alpha).
\end{align*}
The internal variables enter into $\Psi^{\infty}$ through 
\begin{align}
\label{eq:kelvin_voigt_strain_decomp}
\bm E^\mathrm{e} := \bm E(\bm C) - \sum_{\alpha=1}^{M} \bm E^\mathrm{v}_\alpha.
\end{align}
\end{assumption}

In the above, $\bm E^{\mathrm e}$ stands for the elastic strain associated with the equilibrium spring. With this interpretation, the equilibrium energy $\Psi^{\infty}$ is expressed as
\begin{align*}
\Psi^\infty(\bm C,\bm E^\mathrm{v}_1, \dots, \bm E^\mathrm{v}_M ) = \Psi^{\infty}(\bm E^{\mathrm e}).
\end{align*}
Since $\Psi^{\mathrm{neq}}$ does not depend on $\bm C$ in Assumption \ref{as:gen_kelvin_voigt}, the second Piola-Kirchhoff stress is completely determined from the equilibrium spring, i.e., $\bm S = \bm S^\infty$. This marks a striking distinction between the two rheological representations, which arises fundamentally from their different configurations. Based on the relation \eqref{eq:def_T_alpha}, the thermodynamic force $\bm T_{\alpha}$ associated with the $\alpha$-th Voigt element becomes
\begin{align}
\label{eq:KV_T}
\bm T_\alpha := -\frac{\partial \Psi}{\partial \bm E^\mathrm{v}_\alpha} = \bm T^\infty - \bm T^\mathrm{neq}_{\alpha},
\quad \mbox{where} \quad
\bm T^\infty := -\frac{\partial \Psi^\infty}{\partial \bm E^\mathrm{v}_\alpha},
\quad
\bm T^\mathrm{neq}_{\alpha}:= \frac{\partial \Psi^\mathrm{neq}_\alpha}{\partial \bm E^\mathrm{v}_\alpha}.
\end{align}
This indicates that the thermodynamic force $\bm T_\alpha$ incorporates contributions from both the equilibrium spring and the $\alpha$-th Voigt element. Moreover, the strain decomposition \eqref{eq:kelvin_voigt_strain_decomp} implies that $\bm T^{\infty} = \partial \Psi^{\infty}/\partial \bm E$. Consequently, the second Piola-Kirchhoff stress can be represented as 
\begin{align*}
\bm S = \bm S^{\infty} = \bm T^\infty:\mathbb Q.
\end{align*}
The evolution equation \eqref{eq:evolution_simple_alpha} is instantiated as
\begin{align}
\label{eq:KV_evolution}
- \bm T^\infty + \bm T^\mathrm{neq}_\alpha + \eta_\alpha \dot{\bm E}^\mathrm{v}_\alpha = \bm O, \quad \mbox{for} \quad \alpha = 1, \cdots, M.
\end{align}
Here, the term $\eta_\alpha \dot{\bm E}^\mathrm{v}_\alpha$ and $\bm T^\mathrm{neq}_\alpha$ represent the viscous and elastic responses of the dashpot and spring in the $\alpha$-th Voigt element, respectively, while $\bm T^\infty$ corresponds to the response of the equilibrium spring. The evolution equation thus reflects the dynamic balance between the equilibrium spring and each Voigt element. In particular, at the equilibrium state, we have $\bm T^{\infty}|_{\mathrm{eq}} = \bm T^{\mathrm{neq}}_\alpha |_{\mathrm{eq}}$ for $\alpha=1,\cdots, M$ due to \eqref{eq:T_alpha_relax_cond}. This condition signifies that, in the long-time limit, all thermodynamic forces are balanced among the springs, representing an equilibrium state in the rheological model.

\begin{remark}
In the setting of the multiplicative decomposition, this rheological representation has been discussed only for the Poynting–Thomson model (i.e., $M=1$) \cite{Cai2024,Huber2000a,Laiarinandrasana2003}. In this study, one focus is the rheological model with multiple devices connected in series, such as the generalized Kelvin-Voigt model. If one invokes the Sidoroff decomposition for such models, the decomposition will be $\bm F = \bm F^{\mathrm e} \bm F^{\mathrm v}_1 \cdots \bm F^{\mathrm v}_{M}$, assuming there are $M$ non-equilibrium processes. Correspondingly, there will be $M$ intermediate configurations (Figure \ref{fig:multiplicative_GKV}) whose kinematics remain to be specified. The situation becomes even more intricate for anisotropic materials, as the treatment of the symmetry groups across multiple intermediate configurations can be challenging in theory and computation. Despite these complexities, the computational strategies developed in the following section remain applicable to this class of models.
\end{remark}

\begin{figure}[h]
\centering
\includegraphics[trim=0 280 210 0, clip,  scale=0.5]{./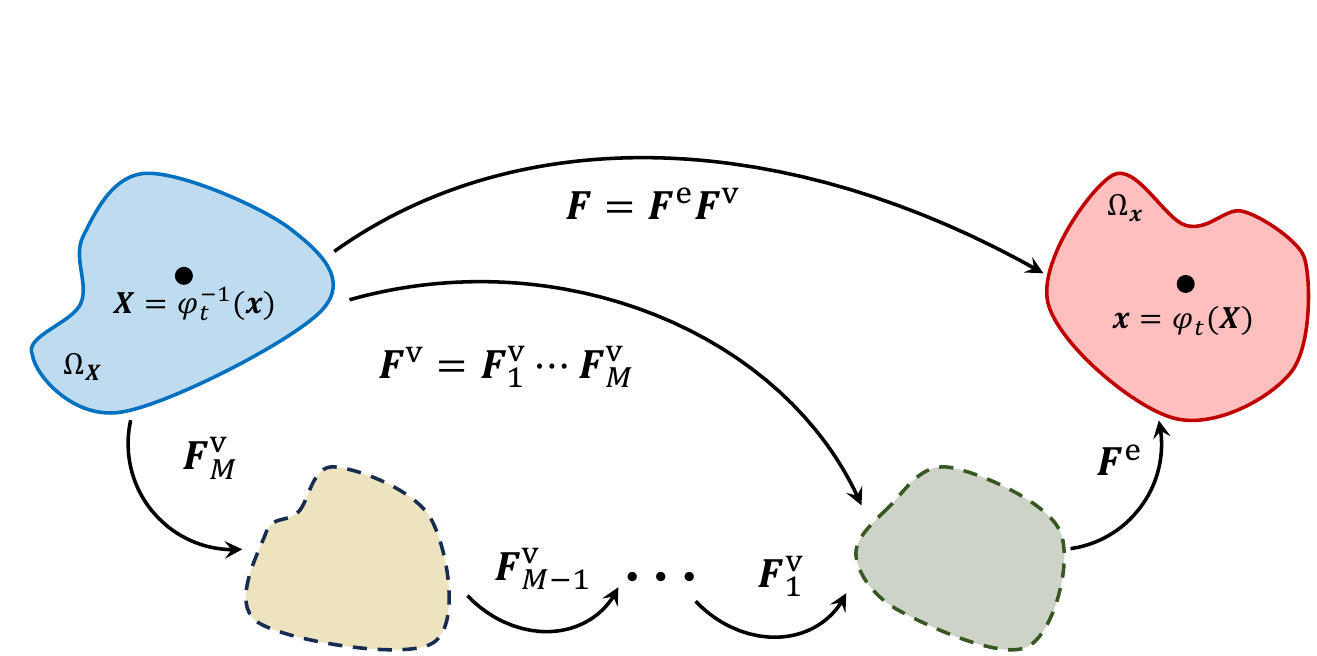}
\caption{Representation of the configurations under multiplicative decomposition for the generalized Kelvin-Voigt model with $M$ non-equilibrium processes.}
\label{fig:multiplicative_GKV}
\end{figure}

\subsubsection{Finite linear viscoelasticity}
\label{sec:Gen_KV_model_FLV}
A particular class of models is defined by the following strain energy functions,
\begin{gather*}
\Psi^\infty= \frac12\mu^\infty \lvert \bm E^\mathrm{e} \rvert^2
\quad \mbox{and} \quad
\Psi^\mathrm{neq}_\alpha (\bm E^\mathrm{v}_\alpha) = \frac12 \mu_\alpha\lvert \bm E^\mathrm{v}_\alpha\rvert^2,
\end{gather*}
where $\mu^\infty$ is the shear modulus of the equilibrium spring, and $\mu_\alpha$ is the shear modulus of the spring within the $\alpha$-th Voigt element. The thermodynamic force $\bm T_{\alpha}= \bm T^\infty - \bm T^\mathrm{neq}_{\alpha}$ is given by
\begin{align*}
\bm T^\infty= \mu^\infty \bm E^\mathrm{e}
\quad \mbox{and} \quad
\bm T^\mathrm{neq}_\alpha  = \mu_\alpha \bm E^\mathrm{v}_\alpha.
\end{align*}
Consequently, the evolution equation \eqref{eq:KV_evolution} for the $\alpha$-th Voigt element is instantiated as
\begin{align}
\label{eq:FLV_KV_evolution_1}
-\mu^\infty\bm E^\mathrm{e} +  \mu_\alpha \bm E^\mathrm{v}_\alpha + \eta_\alpha \dot{\bm E}^\mathrm{v}_\alpha  = \bm O.
\end{align}
The above evolution equation is linear with respect to the internal variables. In contrast to the generalized Maxwell representation, maintaining this linearity requires both the equilibrium and non-equilibrium strain energies to be quadratic. This requirement arises due ot the explicit presence of $\bm T^{\infty}$ in \eqref{eq:KV_evolution}. Moreover, we mention that the $M$ evolution equations are coupled, because the elastic strain $\bm E^{\mathrm e}$ involves contributions from all internal variables. The numerical integration scheme is detailed in Section~\ref{sec:numerical_formulation_FLV_KV}.

Recall that $\tau_{\alpha} := \eta_{\alpha}/\mu_{\alpha}$ reflects an intrinsic time scale for the material, and it is commonly referred to as the retardation time for the generalized Kelvin-Voigt model \cite{Ferry1980}. Henceforth, we do not distinguish between relaxation and retardation times, as both share the same formal definition. Dividing both sides of \eqref{eq:FLV_KV_evolution_1} by $\eta_{\alpha}$ yields
\begin{align*}
\dot{\bm E}^{\mathrm v}_{\alpha} + \frac{1}{\tau_{\alpha}} \bm E^{\mathrm v}_{\alpha} = \frac{1}{\eta_{\alpha}} \bm T^{\infty}.
\end{align*}
Given the initial condition $\bm E^{\mathrm v}_{\alpha}|_{t=0} = \bm O$ and $\bm T^{\infty}|_{t=0} =\bm O$, the solution can be written as
\begin{align*}
\bm E^{\mathrm v}_{\alpha} = \frac{1}{\eta_{\alpha}} \int_{0}^{t} \exp\left(-\frac{t - s}{\tau_\alpha}\right) \bm T^{\infty}(s)ds = \frac{1}{\mu_{\alpha}} \bm T^{\infty} - \frac{1}{\mu_{\alpha}} \int_{0}^{t} \exp\left(-\frac{t - s}{\tau_\alpha}\right) \dot{\bm T}^{\infty}(s)ds.
\end{align*}
Following \eqref{eq:kelvin_voigt_strain_decomp}, one has
\begin{align}
\label{eq:kelvin_voigt_flv_convolution}
\bm E = \int_{0}^{t} \mathcal J(t-s) \dot{\bm T}^{\infty}(s)ds, \quad \mbox{with} \quad \mathcal J(t) := \left(\frac{1}{\mu^{\infty}} + \sum_{\alpha=1}^{M}\frac{1}{\mu_{\alpha}} \right) - \sum_{\alpha=1}^{M} \frac{1}{\mu_{\alpha}} \exp\left(-\frac{t}{\tau_\alpha}\right).
\end{align}
The above essentially gives the convolution representation of the strain based on the creep function $\mathcal J$. This formulation is the counterpart to the solution representation \eqref{eq:maxwell_flv_convolution} of the generalized Maxwell model, after recognizing that $\bm T^{\infty}$ alone leads to the total stress $\bm S$ for the generalized Kelvin-Voigt model here. The relaxation and creep functions are complementary, each naturally capturing the material response under different loading conditions.

Further insights into \eqref{eq:kelvin_voigt_flv_convolution} can be gained by considering a creep test. The stress history is set as $\bm T^{\infty} = \bm T^{\infty}_0 H(t)$, where $\bm T^{\infty}_0$ stands for a constant stress. Substituting this into \eqref{eq:kelvin_voigt_flv_convolution} yields the following strain response
\begin{align*}
\bm E = \left(\frac{1}{\mu^{\infty}} + \sum_{\alpha=1}^{M}\frac{1}{\mu_{\alpha}} \big( 1 -  \exp\left(-t/\tau_\alpha\right) \big)  \right) \bm T^{\infty}_0.
\end{align*}
In the limit $t\rightarrow 0_+$, we obtain the instantaneous strain response given by $\bm E = \bm T^{\infty}_0 / \mu^{\infty}$, suggesting that the initial deformation only takes place in the equilibrium spring. As $t\rightarrow \infty$, we have the equilibrium strain as
\begin{align*}
\bm E = \left(\frac{1}{\mu^{\infty}} + \sum_{\alpha=1}^{M}\frac{1}{\mu_{\alpha}} \right) \bm T^{\infty}_0,
\end{align*}
representing the total long-time deformation after complete retardation. The results indicate that the applied stress is distributed across all springs in the generalized Kelvin-Voigt device. 

\paragraph{\textbf{On the relation of the two rheological representations}}
The generalized Maxwell and Kelvin–Voigt models differ in their structure and thermodynamic interpretations. It is intriguing to consider the conditions under which they yield identical constitutive responses. For the finite linear viscoelastic models discussed in Sections \ref{sec:Gen_Maxwell_model_FLV} and \ref{sec:Gen_KV_model_FLV}, one cannot establish a direct equivalence between them, as the generalized Maxwell model permits a fully nonlinear form for its equilibrium energy. To enable a meaningful comparison, we impose a restriction that the equilibrium part of the energy in the generalized Maxwell model is quadratic, mirroring the corresponding energy in the generalized Kelvin-Voigt model. If we further assume the same generalized strain is employed, the stress response \eqref{eq:maxwell_flv_convolution} reduces to the following convolution form based on a relaxation function:
\begin{align}
\label{eq:gen_maxwell_def_relaxation_fun_G}
\bm S = \int_{0}^{t} \mathcal G(t-s) \dot{\bm E}(s) ds : \mathbb Q, \quad \mbox{with} \quad \mathcal G(t) = \bar{\mu}^{\infty} + \sum_{\alpha=1}^{M} \bar{\mu}_\alpha\exp\left(-\frac{t}{\bar{\tau}_\alpha}\right)
\end{align}
where the overbar denotes moduli associated with the generalized Maxwell model. Achieving full equivalence over time requires matching the relaxation and creep functions of the two models. The result is summarized in the following proposition.

\begin{proposition}
\label{proposition1}
Consider a generalized Maxwell model and a generalized Kelvin-Voigt model with the following conditions:
\begin{itemize}
\item The generalized Maxwell model is characterized by the following quadratic free energies
\begin{align*}
\Psi^{\infty} = \frac12 \bar{\mu}^{\infty} \lvert \bm E \rvert^2 \quad \mbox{and} \quad \Psi^{\mathrm{neq}}_{\alpha} = \frac12 \bar{\mu}_{\alpha} \lvert \bm E - \bm E^{\mathrm v}_{\alpha} \rvert^2,
\end{align*}
where $\bar{\mu}^\infty$ and $\bar{\mu}_\alpha$ denote the corresponding shear moduli and we use $\bar{\eta}_\alpha$ to represent its viscosity coefficients;
\item The same generalized strain $\bm E$ is utilized to characterize the total deformation in the two models;
\item The relaxation function $\mathcal G$ of the generalized Maxwell model and the creep function $\mathcal J$ of the generalized Kelvin-Voigt model satisfy the following relation
\begin{align}
\label{eq:relax_creep_convolution}
\int_{0}^{t} \mathcal G(s)\mathcal J(t-s) ds = \int_{0}^{t} \mathcal G(t-s)\mathcal J(s) ds = t.
\end{align}
\end{itemize}
Then the two models give identical stress responses over time for the same strain history.
\end{proposition}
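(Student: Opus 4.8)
The plan is to recognize that the claim is, at heart, a statement that the relaxation and creep operators of the two models are mutual inverses, a fact made transparent by passing to the Laplace domain. For a prescribed strain history $\bm E(t)$ starting from the undeformed state ($\bm E(0)=\bm O$), the generalized Maxwell model delivers the stress through $\bm S = \bm \Sigma : \mathbb Q$, where I write $\bm \Sigma(t) := \int_0^t \mathcal G(t-s)\dot{\bm E}(s)\,ds$ for the strain-conjugate stress appearing in \eqref{eq:gen_maxwell_def_relaxation_fun_G}. For the generalized Kelvin–Voigt model the stress is $\bm S = \bm T^\infty : \mathbb Q$, while \eqref{eq:kelvin_voigt_flv_convolution} ties $\bm T^\infty$ to the same strain implicitly through $\bm E(t) = \int_0^t \mathcal J(t-s)\dot{\bm T}^\infty(s)\,ds$ with $\bm T^\infty(0)=\bm O$. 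Because the two models use the same generalized strain, the rank-four tensor $\mathbb Q = 2\,\partial \bm E/\partial \bm C$ is identical in both at every instant; hence it suffices to prove $\bm \Sigma(t) = \bm T^\infty(t)$, after which the common contraction with $\mathbb Q$ forces identical $\bm S$.

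First I would apply the Laplace transform componentwise to each tensor relation, writing $\hat{f}(p)$ for the transform of a causal function $f$ and $p$ for the transform variable. The convolution theorem gives $\hat{\bm \Sigma}(p) = \hat{\mathcal G}(p)\,\widehat{\dot{\bm E}}(p)$ and $\hat{\bm E}(p) = \hat{\mathcal J}(p)\,\widehat{\dot{\bm T}^\infty}(p)$, and the vanishing initial data collapse the derivative rule to $\widehat{\dot{\bm E}}(p) = p\,\hat{\bm E}(p)$ and $\widehat{\dot{\bm T}^\infty}(p) = p\,\hat{\bm T}^\infty(p)$. The decisive ingredient is hypothesis \eqref{eq:relax_creep_convolution}: since $(\mathcal G * \mathcal J)(t) = t$ and the ramp $t$ has Laplace transform $1/p^2$, the convolution theorem converts the reciprocity relation into the algebraic identity $\hat{\mathcal G}(p)\,\hat{\mathcal J}(p) = 1/p^2$.

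Next I would eliminate $\hat{\bm T}^\infty$. Solving the transformed Kelvin–Voigt relation gives $\hat{\bm T}^\infty(p) = \hat{\bm E}(p)/\bigl(p\,\hat{\mathcal J}(p)\bigr)$; substituting $\hat{\mathcal J}(p) = 1/\bigl(p^2\,\hat{\mathcal G}(p)\bigr)$ from the reciprocity identity yields $\hat{\bm T}^\infty(p) = p\,\hat{\mathcal G}(p)\,\hat{\bm E}(p)$. The transformed Maxwell relation reads $\hat{\bm \Sigma}(p) = p\,\hat{\mathcal G}(p)\,\hat{\bm E}(p)$, so $\hat{\bm \Sigma}(p) = \hat{\bm T}^\infty(p)$ for every $p$. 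By uniqueness of the Laplace transform, $\bm \Sigma(t) = \bm T^\infty(t)$ for all $t \ge 0$, and contracting both with the common $\mathbb Q$ completes the argument.

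The transform manipulations are routine; the step needing care is the book-keeping of the zero initial conditions and the tensorial-versus-scalar structure. The main obstacle I anticipate is justifying that $\mathbb Q$ factors out cleanly: one must confirm that the relaxation and creep convolutions act only on the strain-conjugate quantities $\bm \Sigma$ and $\bm T^\infty$, while $\mathbb Q$ enters solely through the instantaneous, history-independent map $\bm S = (\cdot):\mathbb Q$ shared by both models, so that the scalar reciprocity \eqref{eq:relax_creep_convolution} indeed governs the full tensorial equivalence. A secondary check is that the hypotheses $\bm E(0)=\bm O$ and $\bm T^\infty(0)=\bm O$ are consistent with the undeformed, stress-free initial state assumed throughout the finite linear viscoelastic development.
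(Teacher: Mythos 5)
Your proof is correct, but it takes a genuinely different route from the paper. The paper works entirely in the time domain: it substitutes the Kelvin--Voigt creep representation $\tilde{\bm E}(s) = \int_0^s \mathcal J(s-\tau)\dot{\bm T}^{\infty}(\tau)\,d\tau$ into the Maxwell relaxation integral, swaps the order of integration (Fubini), differentiates the reciprocity relation \eqref{eq:relax_creep_convolution} in time to obtain $\mathcal G(t-\tau)\mathcal J(0) + \int_\tau^t \mathcal G(t-s)\dot{\mathcal J}(s-\tau)\,ds = 1$, and then verifies directly that $\bm T(t) = \bm T^{\infty}(t)$. You instead pass to the Laplace domain, where the convolution theorem turns both hereditary integrals into algebraic relations and the reciprocity hypothesis into $\hat{\mathcal G}(p)\hat{\mathcal J}(p) = 1/p^2$, giving $\hat{\bm T}(p) = p\,\hat{\mathcal G}(p)\hat{\bm E}(p) = \hat{\bm T}^{\infty}(p)$ in two lines; this is precisely the route the paper gestures at in the remark following its proof (``more conveniently analyzed via their Laplace transforms'') but does not execute. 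What each approach buys: yours is shorter and makes the operator-inverse structure of $\mathcal G$ and $\mathcal J$ transparent, whereas the paper's time-domain argument is free of transform-theoretic hypotheses --- your argument tacitly requires the strain history (and hence $\dot{\bm T}^{\infty}$) to be Laplace transformable, e.g.\ of exponential order, and invokes injectivity of the transform (Lerch's theorem), neither of which is needed in the direct computation. For the Prony-series kernels this is harmless, but since the strain history is arbitrary it is a genuine (if mild) restriction you should state; your handling of the zero initial conditions and of the factorization through the common tensor $\mathbb Q$ is otherwise exactly right.
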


\begin{proof}
Considering the generalized Maxwell model in this setting, the second Piola–Kirchhoff stress is expressed as $\bm S = \bm T : \mathbb{Q}$. Given a strain history $\tilde{\bm E}(t)$, the relation \eqref{eq:gen_maxwell_def_relaxation_fun_G} gives the stress response as
\begin{align}
\label{eq:prop_1_T_E}
\bm T(t) = \int_{0}^{t} \mathcal G(t-s) \dot{\tilde{\bm E}}(s) ds.
\end{align}
For the generalized Kelvin-Voigt model, the strain-stress relation is expressed via \eqref{eq:kelvin_voigt_flv_convolution} as
\begin{align}
\label{eq:prop_1_E_T_infty}
\tilde{\bm E}(s) = \int_{0}^{s} \mathcal J(s-\tau) \dot{\bm T}^{\infty}(\tau) d\tau.
\end{align}
We may substitute \eqref{eq:prop_1_E_T_infty} into \eqref{eq:prop_1_T_E} and have
\begin{align}
\bm T(t) =& \int_0^t \mathcal G(t-s) \mathcal J(0) \dot{\bm T}^{\infty}(s) ds + \int_0^t \int_0^s \mathcal G(t-s) \dot{\mathcal J}(s-\tau) \dot{\bm T}^{\infty}(\tau)d\tau ds \nonumber \displaybreak[2] \\
\label{eq:prop_1_intermediate}
=& \int_0^t \mathcal G(t-\tau) \mathcal J(0) \dot{\bm T}^{\infty}(\tau) d\tau + \int_0^t \dot{\bm T}^{\infty}(\tau) \int_\tau^t \mathcal G(t-s) \dot{\mathcal J}(s-\tau) ds d\tau.
\end{align}
Taking time derivative on relation \eqref{eq:relax_creep_convolution}$_1$ yields
\begin{align*}
\mathcal G(t) \mathcal J(0) + \int_0^t \mathcal G(s) \dot{\mathcal J}(t-s) ds = 1,
\end{align*}
which can be rewritten as
\begin{align*}
\mathcal G(t-\tau) \mathcal J(0) + \int_{\tau}^{t} \mathcal G(t-s) \dot{\mathcal J}(s-\tau) ds = 1.
\end{align*}
Substituting the above into \eqref{eq:prop_1_intermediate} results in $\bm T(t) = \bm T^{\infty}(t)$. Since the same generalized strain is used, we may conclude that the second Piola-Kirchhoff stresses are identical over time for the two models. 
\end{proof}

We may observe that, under the assumption of quadratic energy for the generalized Maxwell model and a consistent choice of generalized strain, the proof of Proposition \ref{proposition1} essentially establishes the interrelation between the relaxation and creep functions. This can be more conveniently analyzed via their Laplace transforms, a well-established technique in small-strain linear viscoelasticity \cite{Lakes2009}. For models with more than two non-equilibrium processes (i.e. $M>2$), obtaining an analytic expression for the interconversion of the Prony series is non-trivial \cite{Loy2015,SerraAguila2019}. Nevertheless, when $M=1$, the relation \eqref{eq:relax_creep_convolution} can be replaced by the following conditions,
\begin{align}
\label{eq:equivalence_moduli_relation}
\mu^{\infty} = \bar{\mu}^{\infty} + \bar{\mu}_{1}, \quad
\frac{1}{\bar{\mu}^{\infty}} = \frac{1}{\mu^{\infty}} + \frac{1}{\mu_{1}}, \quad
\frac{\bar{\tau}_1}{\tau_1} = \frac{\bar{\mu}^{\infty}}{\mu^{\infty}}.
\end{align}
The first two expressions of \eqref{eq:equivalence_moduli_relation} ensure that the instantaneous and equilibrium responses match, based on our prior discussions. In the last, we note that Proposition \ref{proposition1} only ensures that the total stress response of the two models coincide. The internal variables generally evolve differently in the two models.

\subsubsection{Nonlinear viscoelasticity}
\label{sec:Gen_KV_model_FV}
Given the definition of the elastic strain $\bm E^\mathrm{e}$, the corresponding elastic deformation tensor $\bm C^\mathrm{e}$ can be constructed via equation \eqref{eq:def_Ce}. A set of viscous deformation tensors can also be introduced in an analogous manner to $\bm C^{\mathrm v}$ \eqref{eq:def_Cv},
\begin{align*}
\bm C^\mathrm{v}_{\alpha} := \sum_{a=1}^{3}\lambda^\mathrm{v\:2}_{\alpha\:a} \bar{\bm M}_{\alpha\:a}
\quad\mbox{with}\quad
\lambda^\mathrm{v}_{\alpha\:a} = E^{\mathrm{v}\: -1}_{\alpha}(E^\mathrm{v}_{\alpha\:a}),
\end{align*}
where $E^\mathrm{v}_{\alpha\:a}$ and $\bar{\bm M}_a$ are obtained from the spectral decomposition of $\bm E^\mathrm{v}_\alpha$, and $E^{\mathrm{v}\: -1}_{\alpha}$ is the inverse of the scale $E^{\mathrm{v}}_\alpha$. With the deformation tensors properly introduced, the strain energy \eqref{eq:strain_energy_kelvin_voigt} takes the particular form
\begin{align*}
\Psi^\infty(\bm C,\bm E^\mathrm{v}_1, \dots, \bm E^\mathrm{v}_M ) = \Psi^\infty(\bm C^\mathrm{e})
\quad\text{and}\quad
\Psi^\mathrm{neq}_\alpha(\bm E^\mathrm{v}_\alpha) = \Psi^\mathrm{neq}_{\alpha}(\bm C^\mathrm{v}_\alpha).
\end{align*}
Following \eqref{eq:KV_T}, the thermodynamic forces can be represented as
\begin{gather*}
\bm T^\infty = \bm S^\mathrm{e} : \mathbb Q^{\mathrm{e}\:-1}
\quad\text{with}\quad
\bm S^\mathrm{e}:= 2\frac{\partial \Psi^\infty}{\partial \bm C^\mathrm{e}},
\quad
\mathbb Q^{\mathrm{e}\:-1} := \frac12 \frac{\partial \bm C^\mathrm{e}}{\partial \bm E^\mathrm{e}},
\end{gather*}
and
\begin{gather*}
\bm T^\mathrm{neq}_\alpha = \bm S^\mathrm{v}_\alpha:\mathbb Q^\mathrm{v\:-1}_\alpha
\quad\text{with}\quad
\bm S^\mathrm{v}_\alpha:= 2\frac{\partial \Psi^\mathrm{neq}_\alpha}{\partial \bm C^\mathrm{v}_\alpha},
\quad
\mathbb Q^\mathrm{v\:-1}_\alpha := \frac12 \frac{\partial \bm C^\mathrm{v}_\alpha}{\partial \bm E^\mathrm{v}_\alpha}.
\end{gather*}
With the above expressions, the evolution equation for the $\alpha$-th Voigt element, as presented in the general form \eqref{eq:KV_evolution}, becomes
\begin{align}
\label{eq:FV_KV_evolution}
- \bm S^\mathrm{e}:\mathbb Q^\mathrm{e\:-1} + \bm S^\mathrm{v}_\alpha:\mathbb Q^{\mathrm{v}\:-1}_\alpha+ \eta_\alpha \dot{\bm E}^\mathrm{v}_\alpha = \bm O.
\end{align}
Note that the nonlinear evolution equations remain fully coupled for all non-equilibrium processes, as $\bm S^\mathrm{e}$ depends on all internal variables. The integration strategy for this coupled nonlinear system is discussed in Section \ref{sec:numerical_formulation_FV_KV}. A summary of the constitutive relations based on the two rheological architectures is presented in Table \ref{tab:comparison}, which highlights their distinct stress–strain responses and evolution equations.

\begin{sidewaystable}[hbtp]
\centering
\renewcommand{\arraystretch}{1.6}
\setlength{\tabcolsep}{10pt}
\begin{tabular}{>{\bfseries}l c c}
\toprule
& \textbf{The generalized Maxwell Model} & \textbf{The generalized Kelvin-Voigt Model} \\
\midrule
& \includegraphics[trim=180 120 180 70, clip, scale=0.3]{./gen_Maxwell_model.pdf} &
\raisebox{1.0cm}{\includegraphics[trim=20 260 150 100, clip, scale=0.3]{gen_KV_model.pdf}} \\
Green–Naghdi assumption & 
$\displaystyle \bm E^\mathrm{e}_\alpha = \bm E_\alpha - \bm E^\mathrm{v}_\alpha$ & 
$\displaystyle \bm E^\mathrm{e} = \bm E - \sum_{\alpha=1}^M \bm E^\mathrm{v}_\alpha$ \\
\addlinespace[1em]
Strain energy & 
$\displaystyle \Psi = \Psi^\infty(\bm C) + \sum_{\alpha=1}^M \Psi^\mathrm{neq}_\alpha(\bm C, \bm E^\mathrm{v}_\alpha)$ & 
$\displaystyle \Psi = \Psi^\infty(\bm C, \bm E^\mathrm{v}_1, \dots,\bm E^\mathrm{v}_M ) + \sum_{\alpha=1}^M \Psi^\mathrm{neq}_\alpha(\bm E^\mathrm{v}_\alpha)$ \\
\addlinespace[1em]
The second Piola-Kirchhoff stress & 
$\displaystyle \bm S = 2\frac{\partial \Psi^\infty}{\partial \bm C} + \sum_{\alpha=1}^M 2\frac{\partial \Psi^\mathrm{neq}_\alpha}{\partial \bm C}$ & 
$\displaystyle \bm S = 2\frac{\partial \Psi^\infty}{\partial \bm C}$ \\
\addlinespace[1em]
General evolution equations & 
$\displaystyle \frac{\partial \Psi^\mathrm{neq}_\alpha}{\partial \bm E^\mathrm{v}_\alpha} + \eta_\alpha \dot{\bm E}^\mathrm{v}_\alpha = \bm{O}$ & 
$\displaystyle \frac{\partial \Psi^\infty}{\partial \bm E^\mathrm{v}_\alpha} + \frac{\partial \Psi^\mathrm{neq}_\alpha}{\partial \bm E^\mathrm{v}_\alpha} + \eta_\alpha \dot{\bm E}^\mathrm{v}_\alpha = \bm{O}$ \\
\addlinespace[1em]
Linear evolution equations & 
$\displaystyle -\mu_\alpha \bm E^\mathrm{e}_\alpha + \eta_\alpha \dot{\bm E}^\mathrm{v}_\alpha = \bm{O}$ & 
$\displaystyle -\mu^\infty \bm E^\mathrm{e} + \mu_\alpha \bm E^\mathrm{v}_\alpha + \eta_\alpha \dot{\bm E}^\mathrm{v}_\alpha = \bm{O}$ \\
\addlinespace[1em]
Nonlinear evolution equations & 
$\displaystyle -\bm S^\mathrm{e}_\alpha : \mathbb{Q}_\alpha^{\mathrm{e}\,-1} + \eta_\alpha \dot{\bm E}^\mathrm{v}_\alpha = \bm{O}$ & 
$\displaystyle -\bm S^\mathrm{e} : \mathbb{Q}^{\mathrm{e}\,-1} + \bm S^\mathrm{v}_\alpha : \mathbb{Q}_\alpha^{\mathrm{v}\,-1} + \eta_\alpha \dot{\bm E}^\mathrm{v}_\alpha = \bm{O}$ \\
\bottomrule
\end{tabular}
\caption{Comparison of constitutive relations between the generalized Maxwell and generalized Kelvin-Voigt models.}
\label{tab:comparison}
\end{sidewaystable}

\paragraph{\textbf{Example}}
Here we present the expressions for the free energy in the context of the generalized Kelvin-Voigt model. The equilibrium part of the energy, as given in \eqref{eq:example1_energy_eq}, is modified here by replacing $\bm C$ with $\bm C^\mathrm{e}$, and the $\alpha$-th non-equilibrium energy component in \eqref{eq:example1_energy_neq} is modified by replacing $\bm C^\mathrm{e}_\alpha$ with the viscous deformation tensor $\bm C^\mathrm{v}_\alpha$. Thus, they take the following form,
\begin{align*}
\Psi^\infty(\bm C^\mathrm{e}) =& \mu^\infty N^\infty \left( \lambda^\infty \mathfrak{L}^{-1}(\lambda^\infty) + \ln\frac{\mathfrak L^{-1}(\lambda^\infty)}{\sinh(\mathfrak L^{-1}(\lambda^\infty))} \right) - \frac{\mu^\infty \sqrt{N^\infty}}{3} \mathfrak L^{-1}(1/\sqrt{N^\infty})\ln(J^\mathrm{e}),\\
\Psi^\mathrm{neq}_\alpha(\bm C^\mathrm{v}_\alpha) =& \mu_\alpha N_\alpha \left( \lambda^\mathrm{neq}_\alpha \mathfrak L^{-1}(\lambda^\mathrm{neq}_\alpha) + \ln \frac{\mathfrak L^{-1}(\lambda^\mathrm{neq}_\alpha)}{\sinh(\mathfrak{L}^{-1}(\lambda^\mathrm{neq}_\alpha))}\right) - \frac{\mu_\alpha \sqrt{N_\alpha}}{3}\mathfrak{L}^{-1}(1/\sqrt{N_\alpha}) \ln (J^\mathrm{v}_\alpha),
\end{align*}
where the relative average chain stretches and volume ratios are defined as
\begin{align*}
\lambda^\infty := \sqrt{\frac{\bm C^\mathrm{e} :\bm I}{3 N^\infty}},
\quad
\lambda^\mathrm{neq}_\alpha := \sqrt{\frac{\bm C^\mathrm{v}_\alpha : \bm I}{3 N_\alpha}},
\quad
J^\mathrm{e} := \sqrt{\det \bm C^\mathrm{e}},
\quad
J^\mathrm{v}_\alpha := \sqrt{\det \bm C^\mathrm{v}_\alpha}.
\end{align*}
The corresponding fictitious stress-like tensors derived from the energy functions are given by
\begin{align*}
\bm S^\mathrm{e} &= 2\frac{\partial \Psi^\infty}{\partial \bm C^\mathrm{e}}
= \frac{\mu^\infty}{3\lambda^\infty} \mathfrak{L}^{-1}(\lambda^\infty) \bm I 
- \frac{\mu^\infty \sqrt{N^\infty}}{3} \mathfrak{L}^{-1}\left(\frac{1}{\sqrt{N^\infty}}\right) \bm C^{\mathrm{e}\:-1},\\
\bm S^\mathrm{v}_\alpha &= 2\frac{\partial \Psi^\mathrm{neq}_\alpha}{\partial \bm C^\mathrm{v}_\alpha} 
= \frac{\mu_\alpha}{3\lambda^\mathrm{neq}_\alpha} \mathfrak{L}^{-1}(\lambda^\mathrm{neq}_\alpha)\bm I-\frac{\mu_\alpha\sqrt{N_\alpha}}{3}\mathfrak{L}^{-1}\left(\frac{1}{\sqrt{N_\alpha}}\right) \bm C^{\mathrm{v}\:-1}_\alpha.
\end{align*}
The above two stresses enter into the evolution equations \eqref{eq:FV_KV_evolution} to complete their specification. Furthermore, the stress response is obtained by $\bm T^{\infty} = \bm S^{\mathrm e} : \mathbb Q^{\mathrm{e}\:-1}$ and $\bm S = \bm T^{\infty} : \mathbb Q$.

\section{Numerical formulation}
\label{sec:numerical_formula}
In this section, we present the integration algorithms for the numerical modeling of the constitutive models developed in Section \ref{sec:constitutive_theory}. We begin by defining discrete quantities at specific time steps. Let the total time interval of interest be $(0, T)$, which is partitioned into $n_\mathrm{ts}$ subintervals with discrete time points $\{ t_n \}_{n=0}^{n_\mathrm{ts}}$. The time increment is denoted by $\Delta t_n = t_{n+1} - t_n$, where $n = 0, 1, \dots, n_\mathrm{ts} - 1$. For a time-dependent quantity $(\cdot)$, its exact value evaluated at time $t_n$ is denoted by $(\cdot)|_{t_n}$, while its discrete approximation is represented as $(\cdot)_n$. Unless otherwise specified, the midpoint value is defined as
\begin{align*}
(\cdot)_{n+\frac{1}{2}} := \frac{1}{2} \Big[ (\cdot)_n + (\cdot)_{n+1} \Big].
\end{align*}
The consistent linearization of the momentum balance equation is completed by constructing the elasticity tensor,
\begin{align*}
\mathbb C_{n+1} := 2\frac{\partial \bm S_{n+1}}{\partial \bm C_{n+1}},
\end{align*}
which is closely related to the strain energies, the evolution of internal variables, and the underlying rheological representation.

\subsection{The generalized Maxwell model}
We consider the constitutive integration strategy as well as the consistent linearization of the generalized Maxwell model presented in Section \ref{sec:Gen_Maxwell_model}. Recall that the stress in the generalized Maxwell model is additively decomposed into the equilibrium and non-equilibrium parts, and the elasticity tensor is correspondingly represented as
\begin{align*}
\mathbb C_{n+1} = \mathbb C^\infty_{n+1} + \sum_{\alpha=1}^{M} \mathbb C^\mathrm{neq}_{\alpha\:n+1}, \quad \mbox{with} \quad \mathbb C^\infty_{n+1} := 2\frac{\partial \bm S^\infty_{n+1}}{\partial \bm C_{n+1}},
\quad
\mathbb C^\mathrm{neq}_{\alpha\:n+1} := 2\frac{\partial \bm S^\mathrm{neq}_{\alpha\:n+1}}{\partial \bm C_{n+1}}.
\end{align*}
Note that the equilibrium contribution $\mathbb C^\infty_{n+1}$ is obtained following the standard formulation of hyperelasticity. Our attention is directed to the non-equilibrium component. As expressed in \eqref{eq:Maxwell_S_neq}, the non-equilibrium stress $\bm S^\mathrm{neq}_\alpha$ is governed by the tensor $\bm T_\alpha$, whose specific formulation determines whether the resulting theory corresponds to linear or nonlinear viscoelasticity. We will discuss these two cases separately in the following subsections.

\subsubsection{Finite linear viscoelasticity}
\label{sec:numerical_formulation_FLV_Maxwell}
In the framework of finite linear viscoelastic theory, the evolution equation \eqref{eq:FLV_Maxwell_evolution} prescribes a linear relation between the thermodynamic force and the elastic strain, i.e., $\bm T_\alpha = \mu_\alpha \bm E^\mathrm{e}_\alpha$. Based on this linear relation, the evolution equations for the internal variables $\bm E^\mathrm{v}_\alpha$ and $\bm T_\alpha$ are explicitly formulated in \eqref{eq:maxwell_flv_evo_eqn} and \eqref{eq:maxwell_flv_T_evo_eqn}, respectively. The resulting algorithm is inspired by their hereditary integral representations.

\paragraph{\textbf{Constitutive integration}}
Based on the hereditary integral representation of $\bm E^\mathrm{v}_\alpha$ given in \eqref{eq:maxwell_flv_evo_eqn_T}$_1$, a recursive relation can be established between its values at $t_n$ and $t_{n+1}$, that is,
\begin{align*}
\left. \bm E^\mathrm{v}_\alpha\right|_{t_{n+1}} = \exp\left(-\frac{\Delta t_n}{\tau_\alpha}\right) \left.\bm E^\mathrm{v}_\alpha\right|_{t_n} + \int_{t_n}^{t_{n+1}} \exp\left(-\frac{t_{n+1}-s}{\tau_\alpha}\right) \frac{\bm E_\alpha(s)}{\tau_\alpha} ds.
\end{align*}
We mention that one may devise different integration formulas based on the hereditary integral \cite[pp.~353--355]{Simo2006}. Here we approximate the above through evaluating $\bm E_\alpha$ in the interval by its midpoint value $\bm E_{\alpha\:n+\frac12}$, yielding
\begin{align}
\label{eq:FLV_Maxwell_evo_Ev}
\bm E^\mathrm{v}_{\alpha\:n+1} = \xi_{\alpha \: n} \bm E^\mathrm{v}_{\alpha\:n} + \left(1 - \xi_{\alpha \: n} \right)\bm E_{\alpha\:n+\frac12} \quad \mbox{with} \quad \xi_{\alpha \: n} := \exp\left( - \frac{\Delta t_n}{\tau_\alpha}\right).
\end{align}
The above formula provides an explicit, one-step update formula for $\bm E^\mathrm{v}_{\alpha\:n+1}$ with second-order accuracy. Alternatively, a recursive expression for $\bm T_\alpha$ between $t_{n+1}$ and $t_n$ can be analogously derived from the corresponding hereditary integral \eqref{eq:maxwell_flv_evo_eqn_T}$_2$. An explicit update of $\bm T_{\alpha\:n+1}$ may facilitate a direct evaluation of the non-equilibrium stress component via the relation $\bm S^\mathrm{neq}_{\alpha\:n+1} = \bm T_{\alpha\:n+1} : \mathbb Q_{\alpha\:n+1}$.

\paragraph{\textbf{Elasticity tensor}}
Recalling the expression for the non-equilibrium stress in equation \eqref{eq:Maxwell_S_neq}, it can be further written as $\bm S^\mathrm{neq}_\alpha = \mu_\alpha\, \bm E^\mathrm{e}_\alpha : \mathbb Q_\alpha$. Accordingly, the non-equilibrium contribution to the elasticity tensor takes the form
\begin{align*}
\mathbb C^\mathrm{neq}_{\alpha\:n+1}= \mu_\alpha\mathbb Q^\mathrm{T}_{\alpha\:n+1} :\left(\mathbb Q_{\alpha\:n+1}-\mathbb H_{\alpha\:n+1}\right)+\mu_\alpha \bm E^\mathrm{e}_{\alpha\:n+1} :\bm{\mathcal L}_{\alpha\:n+1}
\end{align*}
with
\begin{align*}
\mathbb H_{\alpha\:n+1}:=2\frac{\partial \bm E^\mathrm{v}_{\alpha\:n+1}}{\partial \bm C_{n+1}},
\quad
\mathbb Q_{\alpha\:n+1}:=2\frac{\partial \bm E_{\alpha\:n+1}}{\partial \bm C_{n+1}},
\quad 
\bm E^{\mathrm e}_{\alpha\:n+1}:= \bm E_{\alpha\:n+1} - \bm E^\mathrm{v}_{\alpha\:n+1}, \quad \mbox{and} \quad
\bm{\mathcal L}_{\alpha\:n+1}:=2\frac{\partial \mathbb Q_{\alpha\:n+1}}{\partial \bm C_{n+1}}.
\end{align*}
Note that the rank-four algorithmic tensor $\mathbb H_{\alpha\:n+1}$ arises from the numerical integration of the evolution equation \eqref{eq:FLV_Maxwell_evo_Ev} and is given by
\begin{gather*}
\mathbb H_{\alpha\:n+1} = \frac{1}{2}\left( 1-\xi_{\alpha \: n} \right)\mathbb Q_{\alpha\:n+1}.
\end{gather*}
Consequently, the non-equilibrium part of the elasticity tensor can be explicitly represented as
\begin{align}
\label{eq:elasticity_neq_alpha_n+1}
\mathbb C^\mathrm{neq}_{\alpha\:n+1} =\frac{\mu_{\alpha}}{2} \left( 1+\xi_{\alpha \: n} \right)\mathbb Q^\mathrm{T}_{\alpha\:n+1}:\mathbb Q_{\alpha\:n+1} + \mu_\alpha \bm E^\mathrm{e}_{\alpha\:n+1} :\bm{\mathcal L}_{\alpha\:n+1}.
\end{align}
This representation is particularly amenable for finite element implementation, as it enables modular and systematic integration of rheological components into the overall material response.

\subsubsection{Nonlinear viscoelasticity}
\label{sec:numerical_formulation_FV_Maxwell}
In the nonlinear theory presented in Section \ref{sec:Gen_Maxwell_model_FV}, the thermodynamic force $\bm T_\alpha = \bm S^\mathrm{e}_\alpha : \mathbb Q^\mathrm{e\:-1}_\alpha$ exhibits a nonlinear dependence on the internal variable $\bm E^\mathrm{v}_\alpha$. Consequently, the evolution equation needs to be solved by an iterative procedure.

\paragraph{\textbf{Constitutive integration}}
We discretize the evolution equation \eqref{eq:FV_Maxwell_evolution} by the midpoint scheme, yielding the discrete evolution equations as
\begin{align}
\label{eq:FV_Maxwell_evo_Ev}
\bm E^\mathrm{v}_{\alpha\:n+1}  = \bm E^\mathrm{v}_{\alpha\:n} + \frac{\Delta t_n}{\eta_\alpha} \bm S^\mathrm{e}_{\alpha\:n+\frac12} : \mathbb Q^\mathrm{e\:-1}_{\alpha\:n+\frac12}, \quad \mbox{for} \quad \alpha = 1, \cdots, M,
\end{align}
in which the quantities evaluated at the midpoint are specifically defined as
\begin{align*}
\bm C^\mathrm{e}_{\alpha\: n+\frac12} := \frac{1}{2} \left( \bm C^\mathrm{e}_{\alpha\: n} + \bm C^\mathrm{e}_{\alpha\: n+1} \right),
\quad
\bm S^\mathrm{e}_{\alpha\:n+\frac12}:=2\frac{\partial \Psi^\mathrm{neq}_\alpha}{\partial \bm C^\mathrm{e}_{\alpha\:n+\frac12}},
\quad
\mathbb Q^\mathrm{e\:-1}_{\alpha\: n+\frac12} := \frac12\frac{\partial \bm C^\mathrm{e}_{\alpha\: n+\frac12}}{\partial \bm E^\mathrm{e}_{\alpha\: n+\frac12}}.
\end{align*}
Given $\bm E_{\alpha\:n+1}$, the evolution equation for the $\alpha$-th process is solved using Newton-Raphson iteration by formulating the residual of \eqref{eq:FV_Maxwell_evo_Ev} as $\bm R_{\alpha\: n+1} = \bm O$, where
\begin{align*}
\bm R_{\alpha\: n+1} := \bm E^\mathrm{v}_{\alpha\: n+1} - \bm E^\mathrm{v}_{\alpha\: n} - \frac{\Delta t_n}{\eta_\alpha} \bm S^\mathrm{e}_{\alpha\: n+\frac12} : \mathbb Q^\mathrm{e\:-1}_{\alpha\: n+\frac12}.
\end{align*}
The consistent tangent for the above residual is given by
\begin{align*}
\mathbb K_{\alpha\: n+1} := \frac{\partial \bm R_{\alpha\: n+1}}{\partial \bm E^\mathrm{v}_{\alpha\: n+1}} = \mathbb I + \frac{\Delta t_n}{\eta_\alpha} \left( \mathbb Q^\mathrm{e\:-T}_{\alpha\: n+\frac12} : \frac{1}{2} \mathbb C^\mathrm{e}_{\alpha\: n+\frac12} : \mathbb Q^\mathrm{e\:-1}_{\alpha\: n+1} + \bm S^\mathrm{e}_{\alpha\: n+\frac12} : \bm{\mathcal{K}}^\mathrm{e}_{\alpha\: n+\frac12} : \mathbb Q^\mathrm{e}_{\alpha\: n+\frac12} : \mathbb Q^\mathrm{e\:-1}_{\alpha\: n+1} \right),
\end{align*}
in which the auxiliary tensors are defined as
\begin{gather*}
\mathbb Q^\mathrm{e\:-1}_{\alpha\: n+1} := \frac12\frac{\partial \bm C^\mathrm{e}_{\alpha\: n+1}}{\partial \bm E^\mathrm{e}_{\alpha\: n+1}}, \quad
\bm{\mathcal{K}}^\mathrm{e}_{\alpha\: n+\frac12} := \frac12\frac{\partial \mathbb Q^\mathrm{e\:-1}_{\alpha\: n+\frac12}}{\partial \bm E^\mathrm{e}_{\alpha\: n+\frac12}}, \quad
\mathbb C^\mathrm{e}_{\alpha\:n+\frac12} := 2\frac{\partial \bm S^\mathrm{e}_{\alpha\:n+\frac12}}{\partial \bm C^\mathrm{e}_{\alpha\:n+\frac12}}.
\end{gather*}
The rank-six tensor $\bm{\mathcal{K}}^\mathrm{e}_{\alpha\:n+1/2}$ is constructed based on the eigendecomposition of $\bm C^{\mathrm e}_{\alpha \: n+1/2}$, as was discussed in Section \ref{sec:kinematics}; the rank-four tensor $\mathbb C^\mathrm{e}_{\alpha\:n+1/2}$ is a fictitious elasticity tensor associated with the non-equilibrium energy contributions. If we use a subscript $(i)$ to denote quantities associated with the $i$-th Newton-Raphson iteration, the update scheme is formulated as
\begin{align*}
\mathbb K_{\alpha\: n+1\:(i)} : \Delta \bm E^\mathrm{v}_{\alpha\: n+1\:(i)} = - \bm R_{\alpha\: n+1\:(i)}, \qquad
\bm E^\mathrm{v}_{\alpha\: n+1\:(i+1)} = \bm E^\mathrm{v}_{\alpha\: n+1\:(i)} + \Delta \bm E^\mathrm{v}_{\alpha\: n+1\:(i)}.
\end{align*}
The tangent $\mathbb K_{\alpha\: n+1\:(i)}$ possesses minor symmetries and can be conveniently represented in matrix form using Voigt notation. Accordingly, the symmetric residual tensor $\bm R_{\alpha\: n+1\:(i)}$ is represented as a vector. This representation facilitates the solution of the above equations through a standard linear algebra solver.

\begin{remark}
If we discretize the constitutive relation by the first-order accurate backward Euler scheme, the consistent tangent takes the form
\begin{align*}
\mathbb I + \frac{\Delta t_n}{\eta_\alpha} \left( \mathbb Q^\mathrm{e\:-T}_{\alpha\: n+1} : \frac{1}{2} \mathbb C^\mathrm{e}_{\alpha\: n+1} : \mathbb Q^\mathrm{e\:-1}_{\alpha\: n+1} + \bm S^\mathrm{e}_{\alpha\: n+1} : \bm{\mathcal{K}}^\mathrm{e}_{\alpha\: n+1} \right).
\end{align*}
One may show that the above rank-four tensor possesses both major and minor symmetries. For sufficiently small $\Delta t_n$, the tensor is close to $\mathbb I$. One may thus expect it to be positive definite, at least for small time step sizes.
\end{remark}

\paragraph{\textbf{Elasticity tensor}}
We now turn to the elasticity tensor due to the non-equilibrium contributions. Recalling the definition of $\bm T_\alpha$ in the nonlinear viscoelastic model given in \eqref{eq:FV_Maxwell_T_alpha}, the $\alpha$-th non-equilibrium stress can be further expressed as $\bm S^\mathrm{neq}_\alpha = \bm S^\mathrm{e}_\alpha:\mathbb Q^\mathrm{e\:-1}_\alpha : \mathbb Q_\alpha$. The corresponding elasticity tensor is obtained by taking the derivative of $\bm S^\mathrm{neq}_{\alpha\:n+1}$ with $\bm C_{n+1}$, resulting in
\begin{align*}
\mathbb C^\mathrm{neq}_{\alpha\:n+1} &= \mathbb Q_{\alpha\:n+1} : ( \mathbb Q^\mathrm{e\:-T}_{\alpha\:n+1} : \mathbb C^\mathrm{e}_{\alpha\:n+1} : \mathbb Q^\mathrm{e\:-1}_{\alpha\:n+1} + 2\bm S^\mathrm{e}_{\alpha\:n+1}:\bm{\mathcal K}^\mathrm{e}_{\alpha\:n+1}):(\mathbb Q_{\alpha\:n+1} - \mathbb H_{\alpha\:n+1}),
\end{align*}
with
\begin{align*}
\mathbb C^\mathrm{e}_{\alpha\:n+1} := 2\frac{\partial \bm S^\mathrm{e}_{\alpha\:n+1}}{\partial \bm C^\mathrm{e}_{\alpha\:n+1}},
\quad
\bm S^\mathrm{e}_{\alpha\:n+1} := 2\frac{\partial \Psi^\mathrm{neq}_{\alpha}}{\partial \bm C^\mathrm{e}_{\alpha\:n+1}},
\quad
\bm{\mathcal{K}}^\mathrm{e}_{\alpha\: n+1} := \frac12\frac{\partial \mathbb Q^\mathrm{e\:-1}_{\alpha\: n+1}}{ \partial \bm E^\mathrm{e}_{\alpha\: n+1}}.
\end{align*}
Here, the rank-four algorithmic tensor $\mathbb H_{\alpha\:n+1}$ emerges from the implicit integration of the internal variable $\bm E^\mathrm{v}_\alpha$ given by \eqref{eq:FV_Maxwell_evo_Ev}. It takes the form
\begin{align*}
\mathbb H_{\alpha\:n+1} =& \frac{\Delta t_n}{\eta_\alpha} \left( \mathbb Q^\mathrm{e\:-T}_{\alpha\:n+\frac12}:\frac12 \mathbb C^\mathrm{e}_{\alpha\:n+\frac12}:\mathbb Q^\mathrm{e\:-1}_{\alpha\:n+1} + \bm S^\mathrm{e}_{\alpha\:n+\frac12}:\bm{\mathcal K}^\mathrm{e}_{\alpha\:n+\frac12} : \mathbb Q^\mathrm{e}_{\alpha\:n+\frac12}:\mathbb Q^\mathrm{e\:-1}_{\alpha\:n+1}\right):\left(\mathbb Q_{\alpha\:n+1}-\mathbb H_{\alpha\:n+1}\right) \displaybreak[2] \\
=& \left( \mathbb K_{\alpha \: n+1} - \mathbb I \right) : \left(\mathbb Q_{\alpha\:n+1}-\mathbb H_{\alpha\:n+1}\right)
\end{align*}
which can be reorganized as the following tensorial equation
\begin{gather*}
\mathbb K_{\alpha\:n+1} : \mathbb H_{\alpha\:n+1} = (\mathbb K_{\alpha\:n+1} - \mathbb I) : \mathbb Q_{\alpha\:n+1}.
\end{gather*}
The tensor $\mathbb H_{\alpha\:n+1}$, for $\alpha=1,\cdots, M$, is determined by solving the above tensorial equation. Once obtained, the formulation of the elasticity tensor $\mathbb C_{n+1}$ is fully established, allowing its adoption in the integration of the momentum balance equation.

\subsection{The generalized Kelvin-Voigt model}
This section outlines the numerical schemes for the generalized Kelvin-Voigt model. Unlike the generalized Maxwell model, the non-equilibrium energies in the generalized Kelvin-Voigt model are independent of $\bm C$. As a result, the elasticity tensor originates solely from the equilibrium contribution,
\begin{align*}
\mathbb C_{n+1} = \mathbb C^\infty_{n+1}
\quad\text{with}\quad
\mathbb C^\infty_{n+1} := 2\frac{\partial \bm S^\infty_{n+1}}{\partial \bm C_{n+1}}.
\end{align*}
Recall that the internal variables enter the equilibrium energy through the elastic strain $\bm E^\mathrm{e} = \bm E - \sum_{\alpha=1}^{M} \bm E^\mathrm{v}_\alpha$, which results in a fully coupled structure among all non-equilibrium processes. The coupling necessitates additional considerations in the numerical design.

\subsubsection{Finite linear viscoelasticity}
\label{sec:numerical_formulation_FLV_KV}
As discussed in Section \ref{sec:Gen_KV_model_FLV}, the finite linear viscoelastic theory is formulated based on the energies with quadratic forms. Substituting the definition of $\bm E^\mathrm{e}$ into the evolution equation \eqref{eq:FLV_KV_evolution_1} gives
\begin{align*}
\eta_\alpha\dot{\bm E}^\mathrm{v}_\alpha + \mu_\alpha \bm E^\mathrm{v}_\alpha = \mu^\infty \left(\bm E - \sum_{\alpha=1}^{M} \bm E^\mathrm{v}_\alpha\right).
\end{align*}
The equations remain linear with respect to each $\bm E^\mathrm{v}_\alpha$. Yet, the evolution equations for different non-equilibrium processes are intrinsically coupled, as the right-hand side involves all internal variables. Collecting the equations for $\alpha = 1, \dots, M$, the system can be expressed in a matrix-vector form
\begin{align}
\label{eq:FLV_KV_evolution_2}
\dot{\bm {\mathcal E}}^\mathrm{v} + \bm{\mathsf A} \bm{\mathcal E}^\mathrm{v} = \bm{\mathsf B} \bm{\mathcal E},
\end{align}
in which a suit of matrices and vectors with components of tensors is introduced as
\begin{gather*}
\dot{\bm{\mathcal E}}^\mathrm{v}=\begin{bmatrix}
\dot{\bm E}^\mathrm{v}_{1} &
\dot{\bm E}^\mathrm{v}_{2} &
\cdots &
\dot{\bm E}^\mathrm{v}_{M}
\end{bmatrix}^T,
\quad
\bm{\mathcal E}=
\begin{bmatrix}
\bm E &
\bm E &
\cdots &
\bm E
\end{bmatrix}^T, \displaybreak[2] \\
\bm{\mathsf A}=
\begin{bmatrix}
(\mu^\infty+\mu_1)/\eta_1 & \mu^\infty/\eta_1 & \cdots & \mu^\infty/\eta_1\\
\mu^\infty/\eta_2 & (\mu^\infty+\mu_2)/\eta_2 & \cdots & \mu^\infty/\eta_2\\
\vdots & \ddots & \ddots & \vdots\\
\mu^\infty/\eta_M & \cdots & \cdots & (\mu^\infty+\mu_M)/\eta_M
\end{bmatrix},
\quad
\bm{\mathsf B}=
\begin{bmatrix}
\mu^\infty/\eta_1 & 0 & \cdots & 0\\
0 & \mu^\infty/\eta_2 & \cdots & 0\\
\vdots & \ddots & \ddots & \vdots\\
0 & \cdots & \cdots & \mu^\infty/\eta_M
\end{bmatrix}.
\end{gather*}
Here, we organized the internal variables and strains into column vectors, whose entries are rank-two tensors. The multiplication between the scalar-valued matrices and the tensor-valued vectors is understood componentwise, that is, each scalar entry of the matrix multiplies the corresponding tensor entry in the vector. The above equation is formally analogous to the evolution equation \eqref{eq:maxwell_flv_evo_eqn} by recognizing that the unknowns are vectors of rank-two tensors. To obtain a solution for this problem, one will need to invert the matrix $\bm{\mathsf A}$, and the following result is useful.

\begin{lemma}
\label{lemma:sherman-morrison}
Let $\bm{\mathsf K} \in \mathbb{R}^{M \times M}$ be a matrix that admits the decomposition
\begin{align*}
\bm{\mathsf K} = \bm{\mathsf{ \Lambda}} + \bm{\mathsf u} \bm{\mathsf v}^\mathrm{T},
\end{align*}
where $\bm{\mathsf{ \Lambda}} \in \mathbb{R}^{M \times M}$, and $\bm{\mathsf u}, \bm{\mathsf v} \in \mathbb{R}^{M}$ are column vectors. The matrix $\bm{\mathsf K} \in \mathbb{R}^{M \times M}$ is invertible if and only if $1 + \bm{\mathsf v}^\mathrm{T} \bm{\mathsf{ \Lambda}}^{-1} \bm{\mathsf u} \neq 0$, and the inverse of $\bm{\mathsf K}$ can be explicitly expressed as
\begin{align*}
\bm{\mathsf K}^{-1} = \bm{\mathsf{ \Lambda}}^{-1} - \frac{\bm{\mathsf{ \Lambda}}^{-1} \bm{\mathsf u} \bm{\mathsf v}^\mathrm{T}\bm{\mathsf{ \Lambda}}^{-1}}{1+ \bm{\mathsf v}^\mathrm{T}\bm{\mathsf{ \Lambda}}^{-1}\bm{\mathsf u}}.
\end{align*}
\end{lemma}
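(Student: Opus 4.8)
The statement is the classical Sherman--Morrison formula, so my plan is a direct verification of the proposed inverse together with a short kernel argument for the sharp invertibility criterion. Throughout I would treat $\bm{\mathsf \Lambda}$ as invertible, which is implicit since both the criterion and the claimed inverse involve $\bm{\mathsf \Lambda}^{-1}$ (in the application of \eqref{eq:FLV_KV_evolution_2}, $\bm{\mathsf \Lambda}$ is the diagonal part of $\bm{\mathsf A}$ with nonzero entries). The single structural observation that drives the whole argument is that $\bm{\mathsf v}^\mathrm{T}\bm{\mathsf \Lambda}^{-1}\bm{\mathsf u}$ is a scalar; abbreviating $\beta := 1 + \bm{\mathsf v}^\mathrm{T}\bm{\mathsf \Lambda}^{-1}\bm{\mathsf u}$, this scalar can be pulled out of any product in which it appears, and $\bm{\mathsf v}^\mathrm{T}\bm{\mathsf \Lambda}^{-1}\bm{\mathsf u} = \beta - 1$.

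For the sufficiency direction (invertibility when $\beta \neq 0$), I would denote the candidate inverse by $\bm{\mathsf M} := \bm{\mathsf \Lambda}^{-1} - \beta^{-1}\bm{\mathsf \Lambda}^{-1}\bm{\mathsf u}\bm{\mathsf v}^\mathrm{T}\bm{\mathsf \Lambda}^{-1}$ and expand $\bm{\mathsf K}\bm{\mathsf M} = (\bm{\mathsf \Lambda} + \bm{\mathsf u}\bm{\mathsf v}^\mathrm{T})\bm{\mathsf M}$. The product yields $\bm{\mathsf I}$ together with three rank-one terms, each proportional to $\bm{\mathsf u}\bm{\mathsf v}^\mathrm{T}\bm{\mathsf \Lambda}^{-1}$. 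Using $\bm{\mathsf v}^\mathrm{T}\bm{\mathsf \Lambda}^{-1}\bm{\mathsf u} = \beta - 1$ to collapse the middle scalar in the term quadratic in the outer product, the net coefficient of $\bm{\mathsf u}\bm{\mathsf v}^\mathrm{T}\bm{\mathsf \Lambda}^{-1}$ becomes $-\beta^{-1} + 1 - \beta^{-1}(\beta - 1) = 0$, so $\bm{\mathsf K}\bm{\mathsf M} = \bm{\mathsf I}$. An entirely symmetric computation gives $\bm{\mathsf M}\bm{\mathsf K} = \bm{\mathsf I}$ (alternatively, one notes that a square matrix with a right inverse is invertible), which establishes both invertibility and the explicit formula simultaneously.

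For necessity (singularity when $\beta = 0$), rather than invoke the matrix determinant lemma I would exhibit an explicit kernel vector. Setting $\bm{\mathsf x} := \bm{\mathsf \Lambda}^{-1}\bm{\mathsf u}$, a one-line computation gives $\bm{\mathsf K}\bm{\mathsf x} = \bm{\mathsf u} + \bm{\mathsf u}(\bm{\mathsf v}^\mathrm{T}\bm{\mathsf \Lambda}^{-1}\bm{\mathsf u}) = \bm{\mathsf u}\,(1 + (\beta - 1)) = \beta\,\bm{\mathsf u}$, which vanishes when $\beta = 0$. It then remains only to check $\bm{\mathsf x} \neq \bm{\mathsf 0}$: if $\bm{\mathsf u} = \bm{\mathsf 0}$ then $\beta = 1 \neq 0$, contradicting the hypothesis, so $\bm{\mathsf u} \neq \bm{\mathsf 0}$ and hence $\bm{\mathsf x} = \bm{\mathsf \Lambda}^{-1}\bm{\mathsf u} \neq \bm{\mathsf 0}$ by invertibility of $\bm{\mathsf \Lambda}$. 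Thus $\bm{\mathsf K}$ has a nontrivial kernel and is singular, completing the equivalence.

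The computation is elementary, so I do not anticipate a genuine obstacle; the only place that demands care is the bookkeeping in the expansion, namely consistently recognizing $\bm{\mathsf v}^\mathrm{T}\bm{\mathsf \Lambda}^{-1}\bm{\mathsf u}$ as a scalar and respecting the order of the outer products so that the three rank-one contributions combine correctly. I would also make the standing assumption that $\bm{\mathsf \Lambda}$ is invertible explicit at the outset, since the criterion $\beta \neq 0$ is otherwise not even well-defined.
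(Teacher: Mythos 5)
Your proof is correct and complete. Note, however, that the paper does not actually prove this lemma at all: immediately after stating it, the authors simply remark that it is ``a special case of the celebrated Sherman--Morrison--Woodbury formula'' and cite the original literature. So there is no paper proof to compare against, and your self-contained argument is a genuine addition. Your route is the standard elementary one and both halves check out: the expansion of $\bm{\mathsf K}\bm{\mathsf M}$ collapses the three rank-one terms with net coefficient $-\beta^{-1}+1-\beta^{-1}(\beta-1)=0$, and the kernel vector $\bm{\mathsf x}=\bm{\mathsf \Lambda}^{-1}\bm{\mathsf u}$ with $\bm{\mathsf K}\bm{\mathsf x}=\beta\,\bm{\mathsf u}$ cleanly handles the necessity direction, including the check that $\bm{\mathsf x}\neq\bm{\mathsf 0}$ when $\beta=0$. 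You were also right to make explicit the standing assumption that $\bm{\mathsf \Lambda}$ is invertible; the lemma as printed uses $\bm{\mathsf \Lambda}^{-1}$ in both the criterion and the formula without stating this hypothesis, which is a small imprecision in the paper that your write-up repairs (and which is harmless in the application, where $\bm{\mathsf \Lambda}$ is the diagonal matrix $\bm{\mathsf D}$ with positive entries $\mu_\alpha/\eta_\alpha$). One stylistic refinement: once you have $\bm{\mathsf K}\bm{\mathsf M}=\bm{\mathsf I}$, invoking the fact that a square matrix with a right inverse is invertible (as you note parenthetically) lets you skip the symmetric computation of $\bm{\mathsf M}\bm{\mathsf K}$ entirely.
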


The above lemma is a special case of the celebrated Sherman-Morrison-Woodbury formula \cite{Sherman1950,Hager1989}. We observe that the constant matrix $\bm{\mathsf A}$ in \eqref{eq:FLV_KV_evolution_2} admits the decomposition
\begin{align*}
\bm{\mathsf A} = \bm{\mathsf D} + \bm{\mathsf p} \bm{\mathsf q}^\mathrm{T},
\end{align*}
where
\begin{align*}
\bm{\mathsf D} = \mathrm{diag} \left( \frac{\mu_1}{\eta_1}, \dots, \frac{\mu_M}{\eta_M} \right), \quad
\bm{\mathsf p}^\mathrm{T} =
\begin{bmatrix}
\frac{\mu^\infty}{\eta_1} & \cdots & \frac{\mu^\infty}{\eta_M}
\end{bmatrix},
\quad
\bm{\mathsf q}^\mathrm{T} =
\begin{bmatrix}
1 & \cdots & 1
\end{bmatrix}.
\end{align*}
A direct consequence of Lemma \ref{lemma:sherman-morrison} is an explicit expression for the inverse of $\bm{\mathsf A}$, whose entries read
\begin{align}
\label{eq:A-inverse}
\left( \bm{\mathsf A}^{-1} \right)_{ij} = \delta_{ij} \, \frac{\eta_i}{\mu_i}- \frac{1}{ 1 + \sum\limits_{\alpha=1}^{M} \frac{ \mu^\infty }{ \mu_\alpha } } \frac{ \mu^\infty \, \eta_i }{ \mu_i \mu_j }
\quad\mbox{for}\quad
i,j=1,\dots,M.
\end{align}
This explicit formula for $\bm{\mathsf A}^{-1}$ is particularly valuable for the constitutive integration to be discussed.

\paragraph{\textbf{Constitutive integration}}
Since the evolution equations have been reorganized into a system of ordinary differential equations, one may verify that the following relation holds for the solution $\bm{\mathcal E}^{\mathrm v}(t)$ of \eqref{eq:FLV_KV_evolution_2},
\begin{align*}
\left. \left( \exp(\bm{\mathsf A} t)\bm{\mathcal E}^\mathrm{v}(t) \right) \right|_{t_{n+1}} =  \left. \left( \exp\left(\bm{\mathsf A} t \right)\bm{\mathcal E}^\mathrm{v}(t) \right) \right|_{t_n} + \int_{t_n}^{t_{n+1}}\exp\left(\bm{\mathsf A}s \right)\bm{\mathsf B}\bm{\mathcal E}(s) ds.
\end{align*}
In the above integral, the multiplication of $\exp\left(\bm{\mathsf A}s \right)\bm{\mathsf B}$ with $\bm{\mathcal E}(s)$ is defined componentwise. The integral is performed for a time-dependent vector of rank-two tensors and is also defined entrywise. If the integration over the interval $(t_n, t_{n+1})$ is approximated using the midpoint rule, one has
\begin{gather*}
\int_{t_n}^{t_{n+1}}\exp(\bm{\mathsf A}s) \bm{\mathsf B}\bm{\mathcal E}\, ds \approx\left( \int_{t_n}^{t_{n+1}}\exp(\bm{\mathsf A}s)\, ds\right)  \bm{\mathsf B} \bm{\mathcal E}_{n+\frac12} = \Delta t_n \bm{\mathsf A}^{-1} \left(\exp(\bm{\mathsf A}t_{n+1})-\exp(\bm{\mathsf A} t_n) \right)\bm{\mathsf B}\bm{\mathcal E}_{n+\frac12}.
\end{gather*}
Reorganizing the recursive relation, the update formula for $\bm{\mathcal E}^\mathrm{v}_{n+1}$ is given by
\begin{align}
\label{eq:FLV_KV_evo}
\bm{\mathcal E}^\mathrm{v}_{n+1} = \exp(-\bm{\mathsf A}\Delta t_n)\bm{\mathcal E}^\mathrm{v}_n + \bm{\mathsf A}^{-1}\left(\bm{\mathsf I} - \exp(-\bm{\mathsf A}\Delta t_n) \right)\bm{\mathsf B} \bm{\mathcal E}_{n+\frac12}.
\end{align}
Here, $\bm{\mathsf I}$ stands for the $M \times M$ identity matrix, and the matrix $\bm{\mathsf A}^{-1}$ has been given explicitly by \eqref{eq:A-inverse}. Despite the coupling nature of the generalized Kelvin-Voigt model, the explicit formula of $\bm{\mathsf A}^{-1}$ enables an explicit and simultaneous update of all internal variables $\bm E^\mathrm{v}_\alpha$ for a given input $\bm E_{n+1}$ in a single step. This makes the implementation particularly efficient, with the computational cost comparable to that of the generalized Maxwell model of Section \ref{sec:numerical_formulation_FLV_Maxwell}.

\paragraph{\textbf{Elasticity tensor}}
As previously discussed, under this rheological representation, the elasticity tensor originates solely from the equilibrium stress $\bm S^\infty = \mu^\infty \bm E^\mathrm{e}:\mathbb Q$. A subtle aspect lies in the term $\partial \bm E^\mathrm{e}/\partial \bm C$, which involves the summation of $\partial \bm E^\mathrm{v}_\alpha/\partial \bm C$. The elasticity tensor takes the form
\begin{gather*}
\mathbb C_{n+1} = \mu^\infty \mathbb Q^\mathrm{T}_{n+1} : \left(\mathbb Q_{n+1} - \mathbb G_{n+1}\right) + \mu^\infty \bm E^\mathrm{e}_{n+1}:\bm{\mathcal L}_{n+1},
\end{gather*}
in which
\begin{gather}
\label{eq:FLV_GKV_rank_four_tensors_def}
\mathbb Q_{n+1} := 2\frac{\partial \bm E_{n+1}}{\partial \bm C_{n+1}},
\quad
\mathbb G_{n+1} := \sum_{\alpha=1}^{M}2\frac{\partial \bm E^\mathrm{v}_{\alpha\:n+1}}{\partial \bm C_{n+1}},
\quad \mbox{and} \quad
\bm{\mathcal L}_{n+1} := 2\frac{\partial \mathbb Q_{n+1}}{\partial \bm C_{n+1}}.
\end{gather}
The rank-four tensor $\mathbb G_{n+1}$ depends on the specific integration scheme employed for updating $\bm{\mathcal E}^\mathrm{v}_{n+1}$. According to the constitutive integration scheme \eqref{eq:FLV_KV_evo}, its explicit formula is
\begin{gather*}
\mathbb G_{n+1} = \omega \mathbb Q_{n+1}, \quad \mbox{with} \quad \omega := \sum_{i=1}^M\sum_{j=1}^M \left(\frac12 \bm{\mathsf A}^{-1}\left( \bm{\mathsf I} - \exp(-\bm{\mathsf A}\Delta t_n)\right)\bm{\mathsf B}\right)_{ij}.
\end{gather*}
We may now rewrite the elasticity tensor as
\begin{gather*}
\mathbb C_{n+1} = \mu^\infty (1-\omega) \mathbb Q^\mathrm{T}_{n+1} : \mathbb Q_{n+1} + \mu^\infty \bm E^\mathrm{e}_{n+1}:\bm{\mathcal L}_{n+1}.
\end{gather*}
Although the non-equilibrium energy does not contribute explicitly to the elasticity tensor, the evolution of internal variables still affects its construction through the elastic strain. Comparing with the elasticity tensor \eqref{eq:elasticity_neq_alpha_n+1} derived for the generalized Maxwell model, we notice that both exhibit a similar pattern, reflecting a common underlying quadratic energy form. We also note that the calculation of the elasticity tensor here for the generalized Kelvin-Voigt model does not involve the summation over multiple branches.

\subsubsection{Nonlinear viscoelasticity}
\label{sec:numerical_formulation_FV_KV}
In the nonlinear theory, the thermodynamic force is given by $\bm T_\alpha = \bm S^\mathrm{e} : \mathbb Q^{\mathrm{e}\:-1} - \bm S^\mathrm{v} : \mathbb Q^{\mathrm{v}\:-1}_\alpha$. Both components of $\bm T_\alpha$ exhibit nonlinear dependence on the internal variable $\bm E^\mathrm{v}_\alpha$, necessitating an iterative procedure to solve the evolution equation \eqref{eq:FV_KV_evolution}.
\paragraph{\textbf{Constitutive integration}}
We discretize the evolution equation \eqref{eq:FV_KV_evolution} by the midpoint rule, yielding
\begin{align}
\label{eq:FV_KV_evo}
\bm E^\mathrm{v}_{\alpha\:n+1} = \bm E^\mathrm{v}_{\alpha\:n} + \frac{\Delta t_n}{\eta_\alpha} \left( \bm S^\mathrm{e}_{n+\frac12} : \mathbb Q^\mathrm{e\:-1}_{n+\frac12} - \bm S^\mathrm{v}_{\alpha\:n+\frac12} : \mathbb Q^\mathrm{v\:-1}_{\alpha\:n+\frac12} \right),
\end{align}
where the quantities at the midpoint are given by
\begin{gather*}
\bm C^\mathrm{e}_{n+\frac12} = \frac{1}{2} \left( \bm C^\mathrm{e}_n + \bm C^\mathrm{e}_{n+1} \right),
\quad
\bm C^\mathrm{v}_{\alpha\:n+\frac12} = \frac{1}{2} \left( \bm C^\mathrm{v}_{\alpha\:n} + \bm C^\mathrm{v}_{\alpha\:n+1} \right), \displaybreak[2] \\
\bm S^\mathrm{e}_{n+\frac12} := 2\frac{\partial\Psi^\infty}{\partial \bm C^\mathrm{e}_{n+\frac12}},
\quad
\bm S^\mathrm{v}_{\alpha\:n+\frac12} :=  2\frac{\partial\Psi^\mathrm{neq}_\alpha}{\partial\bm C^\mathrm{v}_{\alpha\:n+\frac12}},
\quad
\mathbb Q^\mathrm{e\:-1}_{n+\frac12} = \frac12 \frac{\partial \bm C^\mathrm{e}_{n+\frac12}}{\partial \bm E^\mathrm{e}_{n+\frac12}},
\quad
\mathbb Q^\mathrm{v\:-1}_{\alpha\:n+\frac12} = \frac12 \frac{\partial \bm C^\mathrm{v}_{\alpha\:n+\frac12}}{\partial \bm E^\mathrm{v}_{\alpha\:n+\frac12}}.
\end{gather*}
Given $\bm E_{n+1}$, the internal variables $\bm E^\mathrm{v}_{\alpha\:n+1}$ are determined by the Newton-Raphson procedure. The residual associated with the $\alpha$-th Voigt element is denoted by $\bm R_{\alpha\:n+1}$, and it reads
\begin{align*}
\bm R_{\alpha\:n+1} := \bm E^\mathrm{v}_{\alpha\:n+1} - \bm E^\mathrm{v}_{\alpha\:n} - \frac{\Delta t_n}{\eta_\alpha} \left( \bm S^\mathrm{e}_{n+\frac12} : \mathbb Q^\mathrm{e\:-1}_{n+\frac12} \right) + \frac{\Delta t_n}{\eta_\alpha}\left( \bm S^\mathrm{v}_{\alpha\:n+\frac12} : \mathbb Q^\mathrm{v\:-1}_{\alpha\:n+\frac12} \right).
\end{align*}
Correspondingly, the consistent tangent $\mathbb K_{\alpha\beta\:n+1}$, defined as the derivative of $\bm R_{\alpha\:n+1}$ with respect to $\bm E^\mathrm{v}_{\beta\:n+1}$, takes the form
\begin{align*}
\mathbb K_{\alpha\beta\:n+1} := \frac{\partial \bm R_{\alpha\:n+1}}{\partial \bm E^\mathrm{v}_{\beta\:n+1}} = \delta_{\alpha\beta} \mathbb I + \frac{\Delta t_n}{\eta_\alpha} \mathbb K^\mathrm{e}_{n+1} + \frac{\Delta t_n}{\eta_\alpha} \delta_{\alpha\beta}\mathbb K^\mathrm{v}_{\alpha\:n+1}
\end{align*}
with
\begin{align}
\label{eq:Ke_n+1_expression}
\mathbb K^\mathrm{e}_{n+1}:=& -\frac{\partial \left( \bm S^\mathrm{e}_{n+\frac12} : \mathbb Q^\mathrm{e\:-1}_{n+\frac12} \right) }{\partial \bm E^\mathrm{v}_{\beta\:n+1}}=\frac{\partial \left( \bm S^\mathrm{e}_{n+\frac12} : \mathbb Q^\mathrm{e\:-1}_{n+\frac12} \right) }{\partial \bm E^\mathrm{e}_{n+1}} \nonumber \displaybreak[2] \\
=& \mathbb Q^\mathrm{e\:-T}_{n+\frac12} : \frac{1}{2} \mathbb C^\mathrm{e}_{n+\frac12} : \mathbb Q^\mathrm{e\:-1}_{n+1} + \bm S^\mathrm{e}_{n+\frac12} : \bm{\mathcal{K}}^\mathrm{e}_{n+\frac12} : \mathbb Q^\mathrm{e}_{n+\frac12} : \mathbb Q^\mathrm{e\:-1}_{n+1}
\end{align}
and
\begin{align}
\label{eq:Kv_alpha_n+1_expression}
\mathbb K^\mathrm{v}_{\alpha\:n+1}:=& \frac{\partial \left( \bm S^\mathrm{v}_{\alpha\:n+\frac12} : \mathbb Q^\mathrm{v\:-1}_{\alpha\:n+\frac12} \right)}{\partial \bm E^\mathrm{v}_{\alpha\:n+1}} \nonumber \displaybreak[2] \\
=&\mathbb Q^\mathrm{v\:-T}_{\alpha\:n+\frac12} : \frac{1}{2} \mathbb C^\mathrm{v}_{\alpha\:n+\frac12} : \mathbb Q^\mathrm{v\:-1}_{\alpha\:n+1} + \bm S^\mathrm{v}_{\alpha\:n+\frac12} : \bm{\mathcal{K}}^\mathrm{v}_{\alpha\:n+\frac12} : \mathbb Q^\mathrm{v}_{\alpha\:n+\frac12} : \mathbb Q^\mathrm{v\:-1}_{\alpha\:n+1}.
\end{align}
In the above, the auxiliary tensors at the midpoint are defined as
\begin{align*}
\mathbb C^\mathrm{e}_{n+\frac12} := 2\frac{\partial \bm S^\mathrm{e}_{n+\frac12}}{\partial \bm C^\mathrm{e}_{n+\frac12}},
\quad
\bm{\mathcal{K}}^\mathrm{e}_{n+\frac12} := \frac12\frac{\partial \mathbb Q^\mathrm{e\:-1}_{n+\frac12}}{\partial \bm E^\mathrm{e}_{n+\frac12}},
\quad
\mathbb C^\mathrm{v}_{\alpha\:n+\frac12} := 2\frac{\partial \bm S^\mathrm{v}_{\alpha\:n+\frac12}}{\partial \bm C^\mathrm{v}_{\alpha\:n+\frac12}},
\quad
\bm{\mathcal{K}}^\mathrm{v}_{\alpha\:n+\frac12} := \frac12\frac{\partial \mathbb Q^\mathrm{v}_{\alpha\:n+\frac12}}{\partial \bm E^\mathrm{v}_{\alpha\:n+\frac12}}.
\end{align*}
Note that the tangent $\mathbb K_{\alpha\beta\:n+1}$ consists of an elastic part due to $\mathbb K^\mathrm{e}_{n+1}$ and a viscous part due to $\mathbb K^\mathrm{v}_{\alpha\:n+1}$. The elastic part originates from the equilibrium spring and is identical for all indices $\beta$; the viscous part is nonzero only when $\alpha = \beta$, indicating that the viscous evolution of the $\alpha$-th element is unaffected by the state of other Voigt elements. The tensors $\mathbb K^\mathrm{e}_{n+1}$ and $\mathbb K^\mathrm{v}_{\alpha\:n+1}$ for $\alpha =1,\cdots, M$ possess minor symmetries. If we apply the backward Euler scheme for the evolution equation, these tensors will enjoy both major and minor symmetries.

Using the subscript $(i)$ to denote quantities associated with the $i$-th Newton-Raphson iteration, the update formula for the internal variables $\bm E^\mathrm{v}_{\alpha\:n+1}$ is
\begin{align*}
\sum_{\beta=1}^{M}\mathbb K_{\alpha\beta\:n+1\:(i)} : \Delta \bm E^\mathrm{v}_{\beta\:n+1\:(i)} = - \bm R_{\alpha\:n+1\:(i)}, \quad
\bm E^\mathrm{v}_{\beta\:n+1\:(i+1)} = \bm E^\mathrm{v}_{\beta\:n+1\:(i)} + \Delta \bm E^\mathrm{v}_{\beta\:n+1\:(i)}.
\end{align*}
The summation over $\beta$ in the linearized equation naturally motivates a matrix representation, which can be formulated as
\begin{align}
&
\begin{bmatrix}
\mathbb{I} + \frac{\Delta t_n}{\eta_1}(\mathbb{K}^\mathrm{e}_{n+1} + \mathbb{K}^\mathrm{v}_{1\:n+1}) & 
\frac{\Delta t_n}{\eta_1} \mathbb{K}^\mathrm{e}_{n+1} & 
\cdots & 
\frac{\Delta t_n}{\eta_1} \mathbb{K}^\mathrm{e}_{n+1} \\
\frac{\Delta t_n}{\eta_2} \mathbb{K}^\mathrm{e}_{n+1} & 
\mathbb{I} + \frac{\Delta t_n}{\eta_2}(\mathbb{K}^\mathrm{e}_{n+1} + \mathbb{K}^\mathrm{v}_{2\:n+1}) & 
\cdots & 
\frac{\Delta t_n}{\eta_2} \mathbb{K}^\mathrm{e}_{n+1} \\
\vdots & 
\vdots & 
\ddots & 
\vdots \\
\frac{\Delta t_n}{\eta_M} \mathbb{K}^\mathrm{e}_{n+1} & 
\frac{\Delta t_n}{\eta_M} \mathbb{K}^\mathrm{e}_{n+1} & 
\cdots & 
\mathbb{I} + \frac{\Delta t_n}{\eta_M}(\mathbb{K}^\mathrm{e}_{n+1} + \mathbb{K}^\mathrm{v}_{M\:n+1})
\end{bmatrix}
\begin{bmatrix}
\Delta \bm E^\mathrm{v}_{1\:n+1} \\ 
\Delta \bm E^\mathrm{v}_{2\:n+1} \\
\vdots \\
\Delta \bm E^\mathrm{v}_{M\:n+1}
\end{bmatrix} \displaybreak[2] \nonumber \\
\label{eq:tensor_matrix}
& =
\begin{bmatrix} 
-\bm R_{1\:n+1} \\
-\bm R_{2\:n+1} \\
\vdots \\
-\bm R_{M\:n+1}
\end{bmatrix},
\end{align}
in which we ignored the subscript $(i)$ for notational simplicity. Here, the matrix on the left-hand side is of size $M\times M$, with each component being a rank-four tensor. Its multiplication with the vector is understood as a componentwise operation, while the multiplication between a rank-four tensor and a rank-two tensor is a standard tensor contraction. Although this system can be solved by expanding all components explicitly, doing so will lead to a linear system of considerable size, considering that this equation needs to be solved at each quadrature point. We observe that the non-diagonal components in the $\alpha$-th row consist solely of $(\Delta t_n / \eta_\alpha) \mathbb K^\mathrm{e}_{n+1}$, while the diagonal component additionally contains the term $\mathbb I + (\Delta t_n/\eta_\alpha) \mathbb K^\mathrm{v}_{\alpha\:n+1}$. Inspired by the formula for $\bm{\mathsf A}^{-1}$ in Section \ref{sec:numerical_formulation_FLV_KV}, we generalize that approach to \eqref{eq:tensor_matrix}. We first present a closed-form solution to \eqref{eq:tensor_matrix} that depends only on rank-four tensors.

\begin{proposition}
\label{prop:tensor_matrix}
If there exist fourth-order tensors $\mathbb M$ and $\mathbb N_{\alpha}$, for $\alpha=1,\cdots, M$, such that
\begin{align*}
\left(\mathbb I + \mathbb K^\mathrm{e}_{n+1\: (i)} : \sum_{\alpha=1}^{M} \frac{\Delta t_n}{\eta_\alpha} \mathbb N_{\alpha} \right) : \mathbb M = \mathbb I, \qquad
\left(\mathbb I + \frac{\Delta t_n}{\eta_\alpha} \mathbb K^\mathrm{v}_{\alpha\:n+1\: (i)}\right) : \mathbb N_{\alpha} = \mathbb I,
\end{align*}
then the solution of the system \eqref{eq:tensor_matrix} is given by
\begin{align}
\label{eq:tensor_matrix_sol}
\Delta \bm E^\mathrm{v}_{\alpha\:n+1\: (i)} =  -\mathbb N_{\alpha} : \bm R_{\alpha\:n+1\: (i)} + \frac{\Delta t_n}{\eta_\alpha} \mathbb N_{\alpha} : \mathbb M : \mathbb K^\mathrm{e}_{n+1\: (i)} : \sum_{\beta=1}^{M} \mathbb N_{\beta} : \bm R_{\beta\:n+1\: (i)}
\quad\mbox{for}\quad\alpha=1,\cdots,M.
\end{align}

\end{proposition}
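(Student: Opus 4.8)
The plan is to treat the block system \eqref{eq:tensor_matrix} as the tensorial counterpart of the rank-one-update structure in Lemma \ref{lemma:sherman-morrison}, and to recover \eqref{eq:tensor_matrix_sol} by the same elimination that underlies the Sherman--Morrison--Woodbury formula. First I would regroup each row of \eqref{eq:tensor_matrix}. Pulling the shared factor $\mathbb K^{\mathrm e}_{n+1}$ out of the diagonal entry and collecting it with the off-diagonal entries of the same row, the $\alpha$-th equation becomes $\left(\mathbb I + \tfrac{\Delta t_n}{\eta_\alpha}\mathbb K^{\mathrm v}_{\alpha\:n+1}\right):\Delta\bm E^{\mathrm v}_{\alpha\:n+1} + \tfrac{\Delta t_n}{\eta_\alpha}\,\mathbb K^{\mathrm e}_{n+1}:\bm S = -\bm R_{\alpha\:n+1}$, where $\bm S := \sum_{\beta=1}^{M}\Delta\bm E^{\mathrm v}_{\beta\:n+1}$ is a single rank-two tensor that carries all the inter-branch coupling. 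This is exactly the $\bm{\mathsf\Lambda}+\bm{\mathsf u}\bm{\mathsf v}^{\mathrm T}$ pattern of Lemma \ref{lemma:sherman-morrison}, with the diagonal blocks $\mathbb I + \tfrac{\Delta t_n}{\eta_\alpha}\mathbb K^{\mathrm v}_{\alpha\:n+1}$ playing the role of $\bm{\mathsf\Lambda}$ and the common $\mathbb K^{\mathrm e}_{n+1}$ supplying the rank-one coupling.

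The second hypothesis identifies $\mathbb N_\alpha$ as the inverse of the $\alpha$-th diagonal block, so I would contract each regrouped row on the left with $\mathbb N_\alpha$ to obtain $\Delta\bm E^{\mathrm v}_{\alpha\:n+1} = -\mathbb N_\alpha:\bm R_{\alpha\:n+1} - \tfrac{\Delta t_n}{\eta_\alpha}\,\mathbb N_\alpha:\mathbb K^{\mathrm e}_{n+1}:\bm S$. Summing this identity over $\alpha$ and then contracting on the left with $\mathbb K^{\mathrm e}_{n+1}$ collapses the whole system into a single closed equation for the auxiliary tensor $\bm T := \mathbb K^{\mathrm e}_{n+1}:\bm S$, namely $\left(\mathbb I + \mathbb K^{\mathrm e}_{n+1}:\sum_{\alpha}\tfrac{\Delta t_n}{\eta_\alpha}\mathbb N_\alpha\right):\bm T = -\mathbb K^{\mathrm e}_{n+1}:\sum_{\beta}\mathbb N_\beta:\bm R_{\beta\:n+1}$. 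The first hypothesis furnishes $\mathbb M$ as the inverse of the operator on the left, giving $\bm T = -\mathbb M:\mathbb K^{\mathrm e}_{n+1}:\sum_{\beta}\mathbb N_\beta:\bm R_{\beta\:n+1}$. Substituting this $\bm T$ back into the per-branch expression for $\Delta\bm E^{\mathrm v}_{\alpha\:n+1}$ reproduces \eqref{eq:tensor_matrix_sol} exactly, with the factor $\tfrac{\Delta t_n}{\eta_\alpha}$ in its correct position.

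The step I expect to require the most care is the bookkeeping of contraction order. Rank-four tensors do not commute under the double contraction $:$, so I must preserve $\mathbb K^{\mathrm e}_{n+1}$, $\mathbb N_\alpha$, and $\mathbb M$ in the precise left-to-right order in which they appear; in particular the reduced equation for $\bm T$ has $\mathbb K^{\mathrm e}_{n+1}$ preceding $\sum_\alpha\tfrac{\Delta t_n}{\eta_\alpha}\mathbb N_\alpha$, which is exactly the form in which $\mathbb M$ is defined, and this alignment is what makes the final cancellation go through. A subsidiary point is that the two hypotheses are written as one-sided inverse relations; since all the operators act on the finite-dimensional space of symmetric rank-two tensors and are represented as square matrices in Voigt notation, a right inverse is automatically two-sided, which legitimizes contracting with $\mathbb M$ and $\mathbb N_\alpha$ from the left. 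Finally, because the elimination produces $\bm T$, and hence every $\Delta\bm E^{\mathrm v}_{\alpha\:n+1}$, uniquely from the residuals, the derivation shows that \eqref{eq:tensor_matrix_sol} is the \emph{unique} solution of \eqref{eq:tensor_matrix}, justifying the definite article in the statement.
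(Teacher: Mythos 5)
Your proposal is correct, but it runs in the opposite direction from the paper's proof, so a comparison is in order. The paper proves the proposition by \emph{verification}: it substitutes the claimed formula \eqref{eq:tensor_matrix_sol} into the $\gamma$-th row \eqref{eq:row_gamma}, introduces the auxiliary tensor $\bm Y := \sum_\beta \mathbb N_\beta : \bm R_{\beta\:n+1\:(i)}$, and shows that the two inverse hypotheses make every term cancel except $-\bm R_{\gamma\:n+1\:(i)}$. You instead \emph{derive} the formula by block elimination: your regrouped row, $\left(\mathbb I + \tfrac{\Delta t_n}{\eta_\alpha}\mathbb K^{\mathrm v}_{\alpha\:n+1\:(i)}\right):\Delta\bm E^{\mathrm v}_{\alpha\:n+1\:(i)} + \tfrac{\Delta t_n}{\eta_\alpha}\mathbb K^{\mathrm e}_{n+1\:(i)}:\bm S = -\bm R_{\alpha\:n+1\:(i)}$ with $\bm S = \sum_\beta \Delta\bm E^{\mathrm v}_{\beta\:n+1\:(i)}$, is exactly the paper's \eqref{eq:row_gamma}; you then invert the diagonal blocks with $\mathbb N_\alpha$, collapse the coupling into a single equation for $\bm T := \mathbb K^{\mathrm e}_{n+1\:(i)}:\bm S$, solve it with $\mathbb M$, and back-substitute. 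Your route buys two things the paper's does not state: it explains where the formula comes from, and in fact it mirrors the five-step algorithm the paper gives after the proposition (your $\bm T$ is precisely the paper's $-\bm X$ from \eqref{eq:solve_XX}); and it yields \emph{uniqueness}, since any solution of \eqref{eq:tensor_matrix} is forced to equal \eqref{eq:tensor_matrix_sol}. Your observation about the one-sided hypotheses is also a genuine point of difference: the paper's verification uses $\mathbb M$ and $\mathbb N_\alpha$ only as right inverses, exactly as written, whereas your left contractions need them to be two-sided, which your finite-dimensionality (Voigt) remark legitimately supplies.

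One caveat to tighten: as written, your elimination establishes the implication ``system $\Rightarrow$ formula,'' i.e.\ uniqueness, whereas the proposition's literal claim (and the paper's proof) is the converse --- that the formula satisfies the system. The inversion of the diagonal blocks is an equivalence, but the passage from the per-branch equations to the $\bm T$-equation (summing over $\alpha$ and contracting with $\mathbb K^{\mathrm e}_{n+1\:(i)}$) discards information, so it is not automatically reversible. To close the loop you need one additional line: the increments defined by \eqref{eq:tensor_matrix_sol} with the solved $\bm T$ satisfy $\mathbb K^{\mathrm e}_{n+1\:(i)}:\sum_\beta \Delta\bm E^{\mathrm v}_{\beta\:n+1\:(i)} = \bm T$, which follows immediately from the equation defining $\bm T$; the per-branch equations, and hence the original rows, then hold. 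With that sentence added, your argument gives both existence and uniqueness and is strictly stronger than the paper's verification.
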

\begin{proof}
The $\gamma$-th row of the system \eqref{eq:tensor_matrix}, for $1 \leq \gamma \leq M$, is given by the following tensorial equation,
\begin{align} 
\label{eq:row_gamma}
\frac{\Delta t_n}{\eta_\gamma} \mathbb K^\mathrm{e}_{n+1\: (i)} : \left( \sum_{\alpha = 1}^{M} \Delta \bm E^\mathrm{v}_{\alpha\:n+1\: (i)} \right) + (\mathbb I + \frac{\Delta t_n}{\eta_\gamma} \mathbb K^\mathrm{v}_{\gamma\:n+1\: (i)}) : \Delta \bm E^\mathrm{v}_{\gamma\:n+1\: (i)}
= -\bm R_{\gamma\:n+1\: (i)}.
\end{align}
We now verify that $\Delta \bm E^\mathrm{v}_{\alpha\:n+1\: (i)}$ given by the expression \eqref{eq:tensor_matrix_sol} satisfies the above equation. First, according to \eqref{eq:tensor_matrix_sol}, the summation of $\Delta \bm E^\mathrm{v}_{\alpha\:n+1\: (i)}$ can be written as
\begin{align*}
\sum_{\alpha=1}^{M} \Delta \bm E^\mathrm{v}_{\alpha\:n+1\: (i)} 
&= \sum_{\alpha=1}^{M} \left( -\mathbb N_{\alpha} : \bm R_{\alpha\:n+1\: (i)} 
+ \frac{\Delta t_n}{\eta_\alpha} \mathbb N_{\alpha} : \mathbb M : \mathbb K^\mathrm{e}_{n+1\: (i)} : \sum_{\beta=1}^{M} \mathbb N_{\beta} : \bm R_{\beta\:n+1\: (i)} \right) \\
&= - \bm Y
+ \left( \sum_{\alpha=1}^{M} \frac{\Delta t_n}{\eta_\alpha} \mathbb N_{\alpha} \right) : \mathbb M : \mathbb K^\mathrm{e}_{n+1\:(i)} : \bm Y
\end{align*}
with an auxiliary tensor $\bm Y$ defined as
\begin{align*}
\bm Y := \sum_{\beta=1}^{M} \mathbb N_{\beta} : \bm R_{\beta\:n+1\: (i)}.
\end{align*}	
Consequently, the first term in \eqref{eq:row_gamma} can be evaluated as
\begin{align}
\label{eq:term-1-expand}
\frac{\Delta t_n}{\eta_\gamma} \mathbb K^\mathrm{e}_{n+1\: (i)} : \sum_{\alpha=1}^{M} \Delta \bm E^\mathrm{v}_{\alpha\:n+1\: (i)} 
= \frac{\Delta t_n}{\eta_\gamma} \left( -\mathbb I + \mathbb K^\mathrm{e}_{n+1\: (i)}:\left( \sum_{\alpha=1}^{M} \frac{\Delta t_n}{\eta_\alpha} \mathbb N_{\alpha} \right) : \mathbb M   \right) : \left(  \mathbb K^\mathrm{e}_{n+1\: (i)} : \bm Y \right).
\end{align}
Substituting the expression \eqref{eq:tensor_matrix_sol} for $\Delta \bm E^\mathrm{v}_{\gamma\:n+1\: (i)}$ into the second term of equation \eqref{eq:row_gamma}, we obtain
\begin{align}
\left(\mathbb I + \frac{\Delta t_n}{\eta_\gamma} \mathbb K^\mathrm{v}_{\gamma\:n+1\: (i)} \right) : \Delta \bm E^\mathrm{v}_{\gamma\:n+1\: (i)}
&= \left(\mathbb I + \frac{\Delta t_n}{\eta_\gamma} \mathbb K^\mathrm{v}_{\gamma\:n+1\: (i)}\right) : \left( -\mathbb N_{\gamma} : \bm R_{\gamma\:n+1\: (i)} + \frac{\Delta t_n}{\eta_\gamma} \mathbb N_{\gamma} : \mathbb M : \mathbb K^\mathrm{e}_{n+1\: (i)} : \bm Y \right)\displaybreak[2] \nonumber \\
\label{eq:term-2-expand}
&= -\bm R_{\gamma\:n+1\: (i)} + \frac{\Delta t_n}{\eta_\gamma} \mathbb M : \mathbb K^\mathrm{e}_{n+1\: (i)} : \bm Y,
\end{align}
where we have invoked the condition $(\mathbb I + (\Delta t_n / \eta_\gamma) \mathbb K^\mathrm{v}_{\gamma\:n+1\:(i)}) : \mathbb N_{\gamma} = \mathbb I$. Combining \eqref{eq:term-1-expand} and \eqref{eq:term-2-expand} leads to
\begin{align*}
&-\bm R_{\gamma\:n+1\: (i)} +\frac{\Delta t_n}{\eta_\gamma} \left( -\mathbb I + \mathbb K^\mathrm{e}_{n+1\: (i)}:\left( \sum_{\alpha=1}^{M} \frac{\Delta t_n}{\eta_\alpha} \mathbb N_{\alpha} \right) : \mathbb M +\mathbb M  \right) : \left(  \mathbb K^\mathrm{e}_{n+1\: (i)} : \bm Y \right) = -\bm R_{\gamma\:n+1\: (i)},
\end{align*}
where the definition of $\mathbb M$ ensures the identity $\mathbb M + \mathbb K^\mathrm{e}_{n+1\: (i)} : \left(\sum_{\alpha=1}^{M} (\Delta t_n/\eta_\alpha) \mathbb N_{\alpha}\right) : \mathbb M = \mathbb I$. This verifies that the equation \eqref{eq:row_gamma} is satisfied by the expression \eqref{eq:tensor_matrix_sol}, which completes the proof.
\end{proof}

\begin{remark}
The explicit solution of the tensor-matrix equation system \eqref{eq:tensor_matrix}, as presented in Proposition \ref{prop:tensor_matrix}, is inspired by the Sherman–Morrison-Woodbury formula given in Lemma \ref{lemma:sherman-morrison}. Specifically, we observe that the tensor-valued matrix in \eqref{eq:tensor_matrix} can be formally decomposed into a diagonal matrix plus a rank-one modification formed by the outer product of two vectors. The tensor $\mathbb N_{\alpha}$ essentially stands for the inverse of the diagonal matrix, while the tensor $\mathbb M$ plays a role analogous to the denominator in the Sherman–Morrison-Woodbury formula.
\end{remark}

Proposition \ref{prop:tensor_matrix} provides an explicit solution to the system \eqref{eq:tensor_matrix}, which clearly reveals that the evolution of each individual internal variable is influenced by all Voigt elements. The incremental formula \eqref{eq:tensor_matrix} offers a structured and convenient procedure, which can be stated as follows. Given the residuals $\lbrace \bm R_{\alpha\: n+1 \: (i)}\rbrace_{\alpha=1}^{M}$, their incrementals are determined through the following algorithm.
\begin{enumerate}
\item For $\alpha=1,\cdots, M$, assemble $\mathbb K^\mathrm{v}_{\alpha\:n+1\: (i)}$ according to \eqref{eq:Kv_alpha_n+1_expression} and determine $\mathbb N_{\alpha}$ by solving 
\begin{align}
\label{eq:solve_NN_alpha}
\left(\mathbb I + \frac{\Delta t_n}{\eta_\alpha} \mathbb K^\mathrm{v}_{\alpha\:n+1\: (i)}\right) : \mathbb N_{\alpha} = \mathbb I.
\end{align}

\item Calculate $\bm Y = \sum_{\alpha=1}^{M} \mathbb N_{\alpha} : \bm R_{\alpha\:n+1\: (i)}$.

\item Assemble $\mathbb K^\mathrm{e}_{n+1\: (i)}$ according to \eqref{eq:Ke_n+1_expression}.

\item Evaluate the tensor contractions $\mathbb K^\mathrm{e}_{n+1\: (i)} : \bm Y$ and $\mathbb I + \mathbb K^\mathrm{e}_{n+1\: (i)} : \sum_{\alpha=1}^{M} (\Delta t_n / \eta_\alpha) \mathbb N_{\alpha}$. Then solve for $\bm X$ from the equation
\begin{align}
\label{eq:solve_XX}
\left(\mathbb I + \mathbb K^\mathrm{e}_{n+1\:(i)} : \sum_{\alpha=1}^{M} \frac{\Delta t_n}{\eta_\alpha} \mathbb N_{\alpha} \right) : \bm X = \mathbb K^\mathrm{e}_{n+1\:(i)} : \bm Y.
\end{align}

\item For $\alpha=1,\cdots, M$, the incremental of the internal variables are given by 
\begin{align*}
\Delta \bm E^\mathrm{v}_{\alpha\:n+1\: (i)} =  \mathbb N_{\alpha} : \left( -\bm R_{\alpha\:n+1\: (i)} + \frac{\Delta t_n}{\eta_\alpha} \bm X \right).
\end{align*}
\end{enumerate}
In the above algorithm, the rank-four tensor $\mathbb M$ does not arise explicitly. Instead, its action is implicitly achieved through solving the equation in Step 4. In the above algorithm, the coupling among the $M$ non-equilibrium processes is handled in a modular approach, enabled by the specific structure of the matrix in \eqref{eq:tensor_matrix}.

In the local Newton-Raphson iteration, we consider two strategies. Approach 1 directly solves \eqref{eq:tensor_matrix}, and the above algorithm is referred to as Approach 2. We assess their scalability with respect to the number of non-equilibrium processes $M$. In terms of memory cost, the storage of the matrix in \eqref{eq:tensor_matrix} grows as $\mathcal O(M^2)$. Approach 2 stores only the tensors $\mathbb N_\alpha$. Thus, its storage scales linearly with $M$. In terms of computational cost, Approach 1 solves the tensorial equation \eqref{eq:tensor_matrix} by converting it into a $6M\times6M$ matrix problem, in which each tensor is expressed as a $6\times6$ matrix in the Voigt form. The cost of solving this matrix problem using Gaussian elimination requires $\mathcal{O}(M^3)$ operations. In Approach 2, the dominant computational cost comes from determining $\mathbb{N}_\alpha$ for $\alpha=1,\dots,M$ in \eqref{eq:solve_NN_alpha} and solving a single tensorial equation for $\bm{X}$ in \eqref{eq:solve_XX}. Since the cost of determining $\mathbb{N}_\alpha$ is constant, the cost of Approach 2 scales linearly with the number of non-equilibrium processes. Therefore, Approach 2 achieves linear complexity in both memory and computation costs and is thus preferable for large values of $M$.

\paragraph{\textbf{Elasticity tensor}}
For the nonlinear model, the equilibrium stress is given by $\bm S^\infty = \bm S^\mathrm{e}:\mathbb Q^{\mathrm{e}\:-1}:\mathbb Q$. The corresponding elasticity tensor is obtained by taking the derivative of $\bm S^\infty_{n+1}$ with respect to $\bm C_{n+1}$. The resulting simplified form of the elasticity tensor reads
\begin{align*}
\mathbb C_{n+1} = \mathbb Q^\mathrm{T}_{n+1}:\left(\mathbb Q^\mathrm{e\:-T}_{n+1} : \mathbb C^\mathrm{e}_{n+1}:\mathbb Q^\mathrm{e\:-1}_{n+1}+2\bm S^\mathrm{e}_{n+1}:\bm{\mathcal K}^\mathrm{e}_{n+1}\right):(\mathbb Q_{n+1}-\mathbb G_{n+1})+\bm S^\mathrm{e}_{n+1}:\mathbb Q^\mathrm{e\:-1}_{n+1}:\bm{\mathcal L}_{n+1}
\end{align*}
with
\begin{gather*}
\bm S^\mathrm{e}_{n+1}:=2\frac{\partial \Psi^\infty}{\partial \bm C^\mathrm{e}_{n+1}},
\quad
\mathbb C^\mathrm{e}_{n+1}:=2\frac{\partial \bm S^\mathrm{e}_{n+1}}{\partial \bm C^\mathrm{e}_{n+1}},
\quad
\bm{\mathcal K}^\mathrm{e}_{n+1}:= \frac12\frac{\partial \mathbb Q^\mathrm{e\:-1}_{n+1}}{\partial\bm E^\mathrm{e}_{n+1}}.
\end{gather*}
Recall that the tensor $\mathbb G_{n+1}$ was defined in \eqref{eq:FLV_GKV_rank_four_tensors_def} as
\begin{align*}
\mathbb G_{n+1} := \sum_{\alpha=1}^{M}2\frac{\partial \bm E^\mathrm{v}_{\alpha\:n+1}}{\partial \bm C_{n+1}}.
\end{align*}
For the discrete evolution equation \eqref{eq:FV_KV_evo}, taking the derivative with respect to $\bm C_{n+1}$ leads to
\begin{gather}
\label{eq:solve_GG}
\left(\mathbb I+\frac{\Delta t_n}{\eta_\alpha}\mathbb K^\mathrm{v}_{\alpha\:n+1}\right):2\frac{\partial \bm E^\mathrm{v}_{\alpha\:n+1}}{\partial \bm C_{n+1}}=\frac{\Delta t_n}{\eta_\alpha}\mathbb K^\mathrm{e}_{n+1}:(\mathbb Q_{n+1}-\mathbb G_{n+1}).
\end{gather}
The rank-four tensors $\mathbb{N}_{\alpha\:n+1}$, for $\alpha=1,\cdots, M$, are defined analogously to \eqref{eq:solve_NN_alpha} by
\begin{align*}
\left(\mathbb I + \frac{\Delta t_n}{\eta_\alpha} \mathbb K^\mathrm{v}_{\alpha\:n+1} \right):\mathbb N_{\alpha\:n+1} = \mathbb I.
\end{align*}
Given $\mathbb N_{\alpha\:n+1}$, \eqref{eq:solve_GG} can be rewritten as
\begin{align*}
2\frac{\partial \bm E^\mathrm{v}_{\alpha\:n+1}}{\partial \bm C_{n+1}}=\mathbb N_{\alpha\:n+1} :\frac{\Delta t_n}{\eta_\alpha}\mathbb K^\mathrm{e}_{n+1}:(\mathbb Q_{n+1}-\mathbb G_{n+1}).
\end{align*}
Summing the above over $\alpha$ leads to
\begin{gather*}
\mathbb G_{n+1} = \left(\sum_{\alpha=1}^{M} \frac{\Delta t_n}{\eta_\alpha} \mathbb N_{\alpha\:n+1}\right):\mathbb K^\mathrm{e}_{n+1}:(\mathbb Q_{n+1} - \mathbb G_{n+1}),
\end{gather*}
which can be reorganized as
\begin{align*}
\left(\mathbb I +  \left(\sum_{\alpha=1}^{M} \frac{\Delta t_n}{\eta_\alpha} \mathbb N_{\alpha\:n+1}\right):\mathbb K^\mathrm{e}_{n+1} \right) :\mathbb G_{n+1} =  \left(\sum_{\alpha=1}^{M} \frac{\Delta t_n}{\eta_\alpha} \mathbb N_{\alpha\:n+1}\right):\mathbb K^\mathrm{e}_{n+1}:\mathbb Q_{n+1}.
\end{align*}
The tensor $\mathbb G_{n+1}$ is determined by solving the above equation. Then the formulation of $\mathbb C_{n+1}$ is completed, which enables a consistent integration of the momentum balance equation. We note that, in the nonlinear generalized Maxwell model, the computation likewise involves solving $M$ tensorial equations to determine $\mathbb H_{\alpha\: n+1}$. Therefore, both nonlinear models incur a similar computational cost for constructing the elasticity tensor.

\section{Results}
\label{sec:results}
This section presents the results of model calibration and isogeometric analysis, aiming to evaluate, validate, and compare the constitutive theories proposed in this work. We refer to the finite linear viscoelastic model within the generalized Maxwell framework as FLV-GM, and its counterpart based on the generalized Kelvin-Voigt framework as FLV-GKV. Likewise, the nonlinear viscoelastic models are denoted by NV-GM and NV-GKV, corresponding to the generalized Maxwell and Kelvin-Voigt formulations, respectively. These abbreviations will be used throughout the remaining part for clarity and brevity. Here, we employ the Curnier-Rakotomanana strain to characterize the total deformation, whose scale function is defined as
\begin{align*}
E(\lambda) = \frac{1}{m+n} \left(\lambda^m - \lambda^{-n}\right),
\quad \text{with} \quad mn > 0,
\end{align*}
where $m$ and $n$ are material parameters \cite{Curnier1991,Liu2024}. In particular, this strain is coercive, and we use the same form of the scale function to define both the elastic and viscous deformation tensors.

For the numerical analysis, the boundary of the body is denoted as $\Gamma_{\bm X} := \partial \Omega_{\bm X}$, and it can be decomposed into non-overlapping regions where displacement or traction is prescribed. The problem is discretized using an inf-sup stable element pair \cite{Liu2019a}. The discrete pressure space is defined by Non-Uniform Rational B-Splines (NURBS) basis functions of degree $\mathsf p$ with the highest possible continuity, while the discrete velocity space is constructed through $p$-refinement, increasing the degree to $\mathsf p+1$ while maintaining the continuity. The time integration is conducted using the generalized-$\alpha$ scheme \cite{Jansen2000,Liu2018}, which is parameterized by a single parameter $\varrho_{\infty}$, representing the spectral radius of the amplification matrix at the highest mode. The following settings are adopted in the numerical investigations:
\begin{enumerate}
\item The degree of the NURBS basis functions $\mathsf p$ is set to be $1$.
\item The generalized-$\alpha$ scheme is applied with $\varrho_{\infty}=0.0$.
\item Gaussian quadrature with $\mathsf p+2$ points is used in each direction.
\item In the local Newton-Raphson iteration for the constitutive integration, the stopping criteria involve a relative tolerance of $\mathrm{tol}_\mathrm{r} = 10^{-12}$, an absolute tolerance of $\mathrm{tol}_\mathrm{a} = 10^{-12}$, and a maximum of $i_{\mathrm{max}} = 10$ iterations.
\item The global Newton-Raphson iteration for integrating the linear momentum equations employs a relative tolerance of $\mathrm{tol}_\mathrm{R} = 10^{-10}$, an absolute tolerance of $\mathrm{tol}_\mathrm{A} = 10^{-10}$, and a maximum iteration limit of $l_{\mathrm{max}} = 10$, which are used as stopping criteria for convergence.
\end{enumerate}

\subsection{Relaxation and cyclic shear of a viscoelastic cube}
\label{sec:FEA}
In this section, only a single non-equilibrium process (i.e., $M=1$) is considered. For the finite linear viscoelastic models, the material parameters are chosen based on \eqref{eq:equivalence_moduli_relation}, to achieve identical stress responses. For the nonlinear models, we adopt the eight-chain model with $N^\infty=N_1 = 150$, which is a reasonable choice for the number of chain segments \cite{Liu2025}. The parameters are detailed in Table \ref{tab:FEM_parameters}. Additionally, the strain parameters are taken to be $m = n = 1$. The results displayed in this section use a mesh with 9 elements in each direction, and they have been verified to be mesh-independent. During the simulation, we monitor the following averaged quantities on the top surface,
\begin{align*}
\bar{\bm F} = \frac{\int_{\Gamma_\mathrm{top}} \bm P \bm N \, d\Gamma_{\bm X} }{\int_{\Gamma_\mathrm{top}}d\Gamma_{\bm X}} \quad \mbox{and} \quad
\bar{\bm E}^\mathrm{e}_{1} = \frac{\int_{\Gamma_\mathrm{top}} \bm E_{1} - \bm E^\mathrm{v}_{1} \, d\Gamma_{\bm X} }{\int_{\Gamma_\mathrm{top}}d\Gamma_{\bm X}}.
\end{align*}
In the above, $\bm P := \bm F \bm S$ is the first Piola–Kirchhoff stress, and $\bm N$ is the unit outward normal vector to the top surface. Here, $\bar{\bm F}$ represents the total traction acting on the top surface, and $\bar{\bm E}^\mathrm{e}_{1}$ denotes the averaged elastic strain over the same surface. The notations $\bar{\bm F}_{(\cdot)}$ and $\bar{\bm E}^\mathrm{e}_{1\:(\cdot)}$ indicate specific components of them.

\begin{table}[h]
\centering
\renewcommand{\arraystretch}{1.2}
\setlength{\tabcolsep}{6pt}
\begin{tabular}{lcccccc}
\toprule
\textbf{Model} & $\bm{\mu^\infty}$(kPa) & $\bm{N^\infty}$ & $\bm{\mu_1}$(kPa) & $\bm{N_1}$ & $\bm{\eta_1}$(kPa$\cdot$s) & $\bm \tau_1$(s)\\
\midrule
FLV-GM & 10.0 & - & 10.0 & -       & 20.0  & 2.0\\
FLV-GKV & 20.0 & - & 20.0 & -           & 80.0 & 4.0\\
NV-GM & 10.0 & 150.0 & 10.0 & 150.0 & 20.0 & - \\
NV-GKV & 20.0 & 150.0 & 20.0 & 150.0 & 80.0 & -\\
\bottomrule
\end{tabular}
\caption{The material parameters adopted in different models.}
\label{tab:FEM_parameters}
\end{table}

\begin{figure}[h]
\centering
\includegraphics[angle=0, trim=0 320 620 70, clip=true, scale = 0.7]{./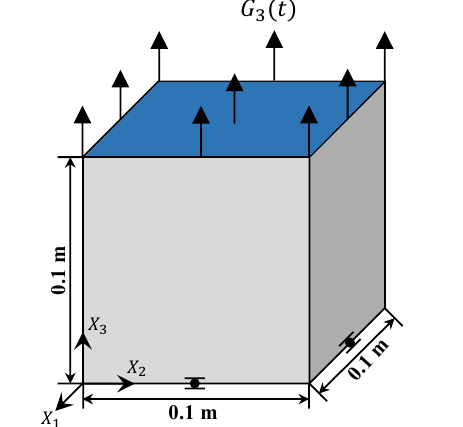}
\includegraphics[angle=0, trim=30 260 60 150, clip=true, scale = 0.4]{./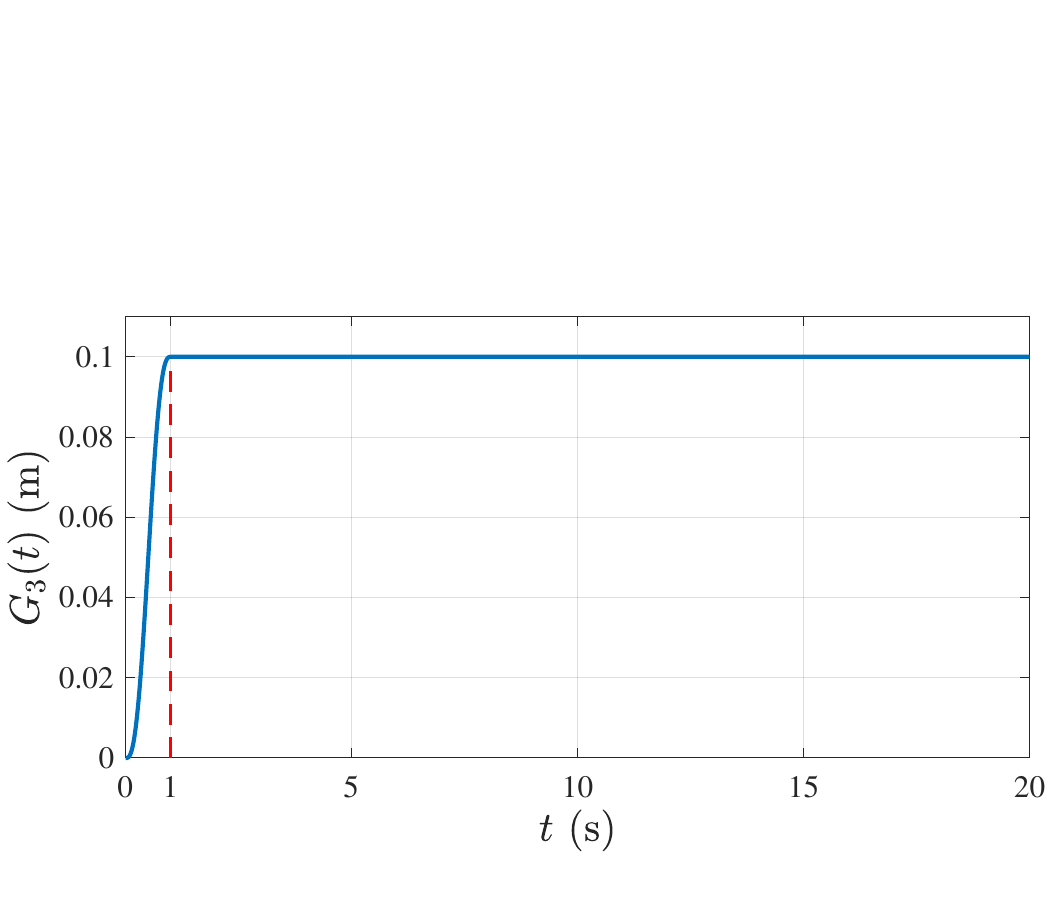}
\caption{Problem settings for the relaxation tests.}
\label{fig:relaxation_setting}
\end{figure}

\begin{figure}[h]
\centering
\includegraphics[angle=0, trim=40 110 30 100, clip=true, scale = 0.6]{./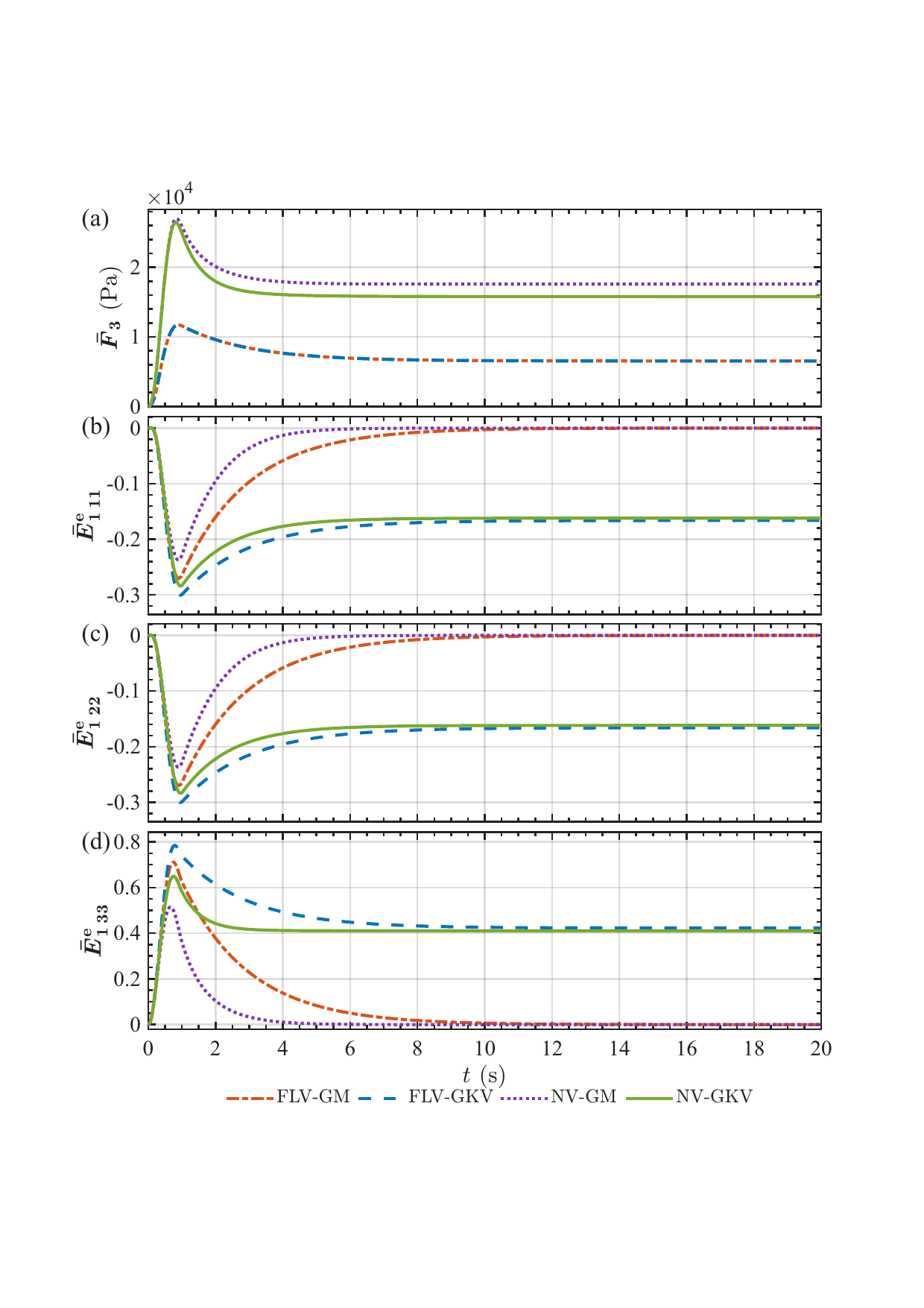}
\caption{The results of the relaxation tests.}
\label{fig:relaxation_result}
\end{figure}

\paragraph{Relaxation test}
The prescribed displacement loading condition is illustrated in Figure \ref{fig:relaxation_setting}. The problem is simulated up to $T = 20~\mathrm{s}$ with a uniform time step size of $\Delta t_n=0.01~\mathrm{s}$. On the bottom surface, the normal component of the displacement and the tangential stress are set to zero; on the top surface, a time-dependent displacement along the normal direction is prescribed as
\begin{align*}
&G_3(t) = 
\begin{cases}
C_1 (t/T)^5 - C_2 (t/T)^4 + C_3(t/T)^3& \mbox{for} \quad 0\le t/T \leq 0.05,\\
C_4& \mbox{for} \quad 0.05 < t/T \leq 1
\end{cases}\\
&\mbox{with}\quad C_1 = 1.92\times 10^6~\mathrm{m},\quad C_2=2.4\times 10^{5}~\mathrm{m},\quad C_3 = 8\times 10^3~\mathrm{m},\quad C_4=0.1~\mathrm{m},
\end{align*}
and the tangential stress is also set to zero; the remaining boundary surfaces are subjected to traction-free conditions.

Figure \ref{fig:relaxation_result} presents the variations of the averaged traction $\bar{\bm F}_3$ along the tensile direction and the components of the elastic strain, i.e. $\bar{\bm E}^\mathrm{e}_{1\:11}$, $\bar{\bm E}^\mathrm{e}_{1\:22}$ and $\bar{\bm E}^\mathrm{e}_{1\:33}$, with respect to time. The identical stress responses between the two finite linear viscoelastic models are confirmed by the overlapping blue dashed and red dash-dot lines in Figure \ref{fig:relaxation_result} (a), corroborating Proposition \ref{proposition1}. We also notice that the nonlinear models exhibit stress responses with discrepancies at the final simulation time, as expected. In the meantime, different evolution histories of the internal variables are observed in Figures \ref{fig:relaxation_result} (b), (c), and (d) for the FLV-GM and FLV-GKV models. In particular, the components of $\bar{\bm E}^\mathrm{e}_1$ in the generalized Maxwell models tend to vanish upon reaching the final simulation time. This aligns well with the property of the equilibrium state \eqref{eq:T_alpha_relax_cond} for the generalized Maxwell model. On the contrary, the components of $\bar{\bm E}^\mathrm{e}_1$ in the generalized Kelvin-Voigt models approach identical non-zero values at the final time. Recall that $\bm T^\infty|_\mathrm{eq} = \bm T^\mathrm{neq}_\alpha|_\mathrm{eq}$, for $\alpha=1,\cdots, M$, in the equilibrium state for the generalized Kelvin-Voigt models. Since the prescribed material parameters are identical for the equilibrium and non-equilibrium components, the balance of the thermodynamic forces implies that elastic strain components are close in the equilibrium state, and their values are approximately half of the maximum strain in both FLV-GKV and NV-GKV models. The simulation results further show that the nonlinear models produce substantially higher stress responses compared to the finite linear viscoelastic models. Also, the internal variables in the nonlinear models evolve at faster rates, indicating a distinct dynamic response behavior with pronounced energy dissipation.

\begin{table}[htbp]
\centering
\begin{tabular}{ccccc}
\toprule
& \multicolumn{4}{c}{$\mathfrak l_2$-norm of the residual vector} \\
\cmidrule(lr){2-5}
{Iteration} & {FLV-GM} & {FLV-GKV} & {NV-GM} & {NV-GKV} \\
& {66-th time step} & {16-th time step} & {88-th time step}  & {5-th time step}\\
\midrule
1 & $1.430 \times 10^{-1}$ & $8.570 \times 10^{-2}$ & $8.049 \times 10^{-2}$ &  $6.618 \times 10^{-2}$ \\
2 & $5.997 \times 10^{-5}$ & $1.503 \times 10^{-4}$ & $5.475 \times 10^{-5}$ & $3.538 \times 10^{-5}$\\
3 & $2.079 \times 10^{-10}$& $9.617 \times 10^{-10}$ & $5.743 \times 10^{-11}$ & $4.415 \times 10^{-11}$\\
4 & $1.842\times 10^{-14}$ & $9.258\times 10^{-15}$ & - & - \\
\bottomrule
\end{tabular}
\caption{The convergence of the Newton-Raphson iteration for relaxation tests.}
\label{tab:relaxation_convergence}
\end{table}

We also present the $\mathfrak l_2$-norms of the residual vectors in the Newton-Raphson iteration at different time steps during the loading phase in Table \ref{tab:relaxation_convergence}. Notably, convergence is achieved within only three to four iterations across all models, and the residual decay is consistent with the expected quadratic convergence rate.

\paragraph{Cyclic shear test}

\begin{figure}[h]
\centering
\includegraphics[angle=0, trim= 0 330 620 70, clip=true, scale = 0.7]{./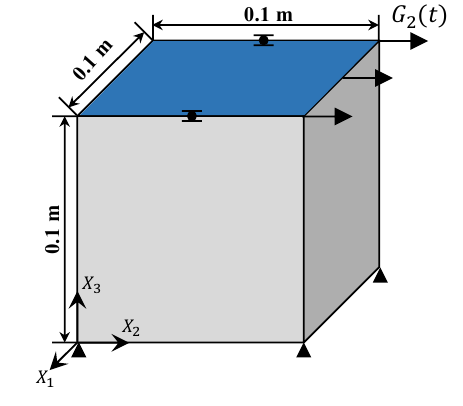}
\includegraphics[angle=0, trim=20 250 60 15, clip=true, scale = 0.4]{./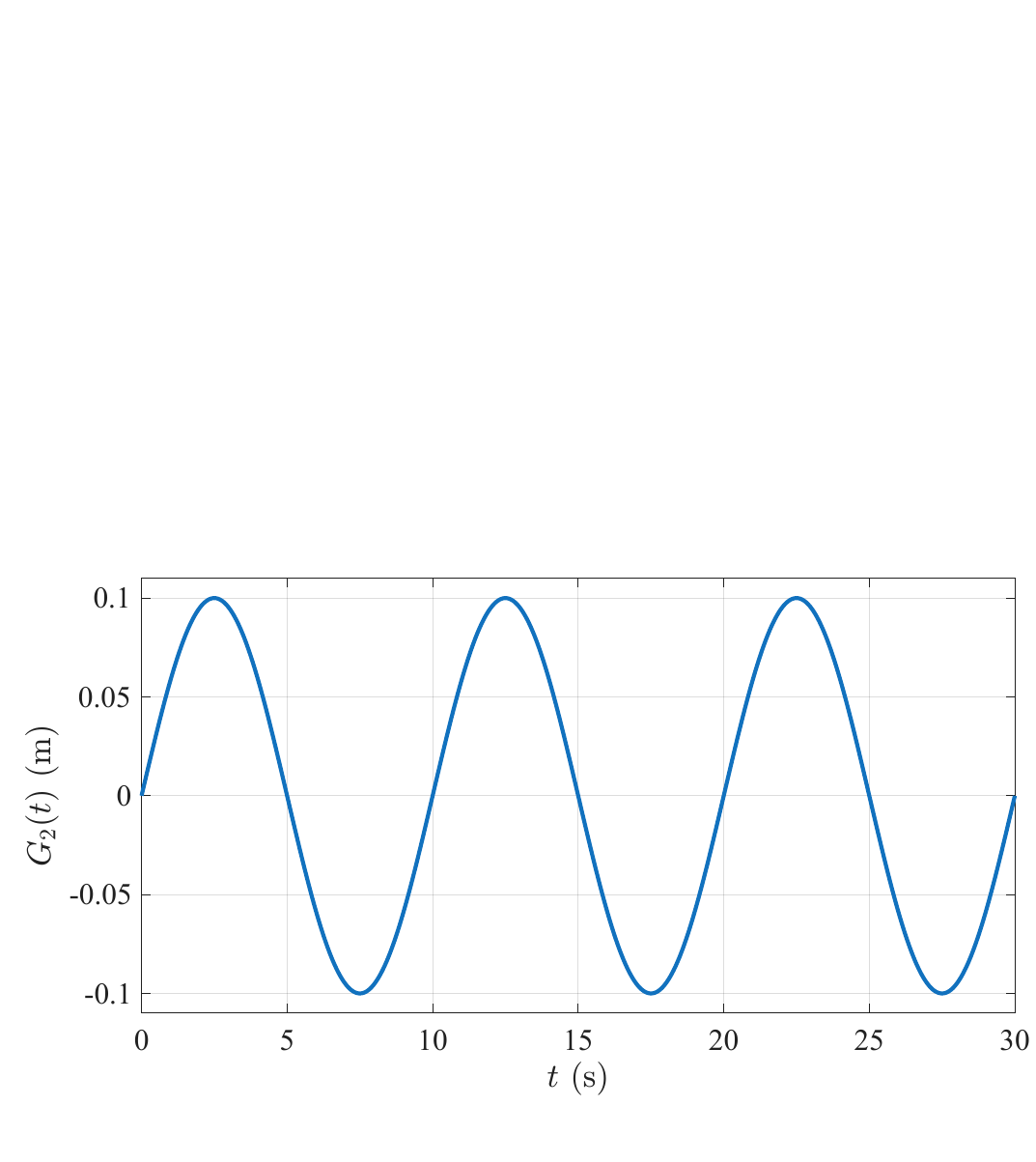}
\caption{Problem settings for the cyclic shear tests.}
\label{fig:shear_setting}
\end{figure}

The prescribed displacement follows a periodic function with a period of $10$~s, and three full cycles are considered. This results in a total simulation time of $T=30$~s. A uniform time step size of $\Delta t_n = 0.01$~s is adopted throughout the simulation. The boundary conditions for the cyclic shear test are specified as follows. The bottom surface is fully clamped with zero displacement; the top surface is subjected to a cyclic displacement loading, specified as
\begin{align*}
G_1(t) = 0, \quad
G_2(t) = C_5 \sin\left(C_6 \left(t/T\right)\right)
\quad\mbox{with}\quad
C_5 = 0.1~\mathrm{m},
\quad
C_6=6\pi
\quad\mbox{for}\quad
0\le t/T\le 1,
\end{align*}
and the traction in the remaining direction is set to zero; on the other four surfaces, we impose $G_1(t) = 0$ and traction-free in the remaining directions. The geometric configuration and the displacement boundary condition are depicted in Figure~\ref{fig:shear_setting}.

\begin{figure}[h]
\centering
\includegraphics[angle=0, trim=40 110 30 100, clip=true, scale = 0.6]{./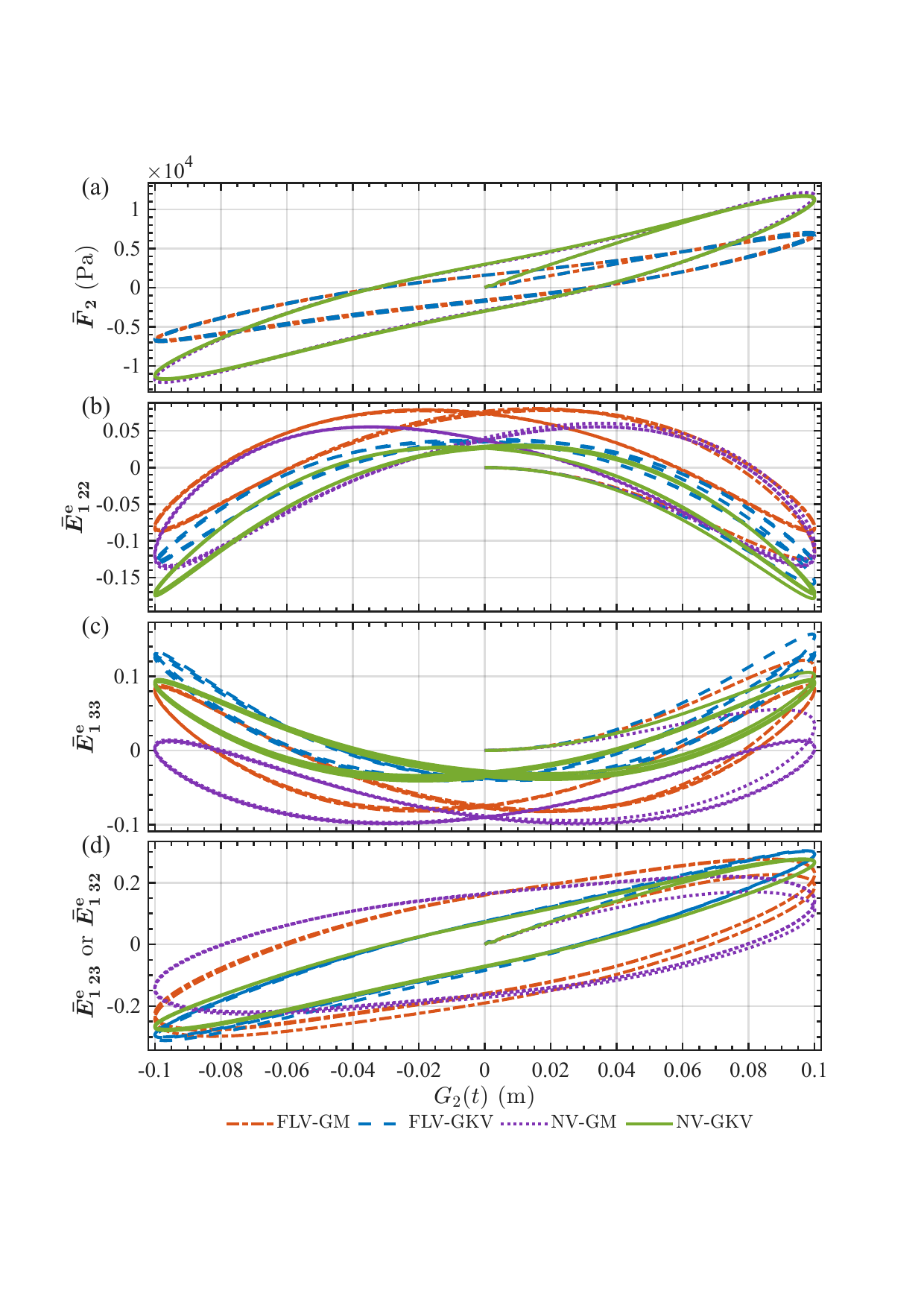}
\caption{The results of the cyclic shear tests.}
\label{fig:shear_result}
\end{figure}

\begin{table}[h]
\centering
\begin{tabular}{ccccc}
\toprule
& \multicolumn{4}{c}{$\mathfrak l_2$-norm of the residual vector} \\
\cmidrule(lr){2-5}
{Iteration} & {FLV-GM} & {FLV-GKV} & {NV-GM} & {NV-GKV} \\
& {976-th time step} & {1278-th time step} & {1995-th time step}  & {112-th time step}\\
\midrule
1 & $4.053 \times 10^{-2}$ & $3.966 \times 10^{-2}$ & $6.149 \times 10^{-2}$ &  $4.949 \times 10^{-2}$ \\
2 & $1.208 \times 10^{-6}$ & $1.568 \times 10^{-6}$ & $1.578 \times 10^{-6}$ & $1.002 \times 10^{-6}$\\
3 & $1.872 \times 10^{-13}$ & $5.936 \times 10^{-13}$ & $4.026 \times 10^{-13}$ & $3.922 \times 10^{-14}$\\
\bottomrule
\end{tabular}
\caption{The convergence of the Newton-Raphson iteration for the cyclic shear tests.}
\label{tab:shear_convergence}
\end{table}

The results of the cyclic shear tests are presented in Figure \ref{fig:shear_result}. The identical stress responses of two finite linear viscoelastic models is again confirmed by the overlapping red dash-dot and blue dashed lines in Figure \ref{fig:shear_result} (a). Additionally, the nonlinear models exhibit nearly identical hysteresis loops in the traction response, with minor differences observed in this case. All four models achieve a dynamic equilibrium state after the first quarter cycle, and the hysteresis loops of the finite linear viscoelastic models are noticeably narrower than those of the NV models, indicating lower energy dissipation in the finite linear viscoelastic models.
 
Figures \ref{fig:shear_result} (b), (c), and (d) depict the temporal evolution of the components of $\bar{\bm E}^\mathrm{e}_1$ for different models. The generalized Kelvin-Voigt models exhibit larger amplitudes and narrower hysteresis loops compared to the generalized Maxwell models, which can be attributed to the differences in the viscosity $\eta_1$. Moreover, the response curves of the generalized Kelvin-Voigt models converge and stabilize from the first half-cycle onward, whereas those of the generalized Maxwell models do not achieve overlap until the beginning of the second cycle. This behavior indicates that the generalized Kelvin–Voigt models adjust their internal variables more rapidly, reaching dynamic equilibrium sooner.

We monitor the convergence histories of the Newton-Raphson iteration. Table \ref{tab:shear_convergence} reports the $\mathfrak l_2$-norms of the residual vectors at selected time steps. It can be observed that convergence is achieved within three iterations. This observation is again consistent with the quadratic convergence rate, which is expected of a consistent linearization.

\subsection{Algorithm evaluation of the constitutive integration for the nonlinear generalized Kelvin-Voigt model}
This section examines the efficiency of the algorithms for determining the incremental updates during constitutive integration in the NV-GKV model. The evaluation is conducted with a fixed time step size of $0.01$ s. The deformation gradient of the previous step is set to the identity, and all internal variables are initialized to zero. The deformation gradient of the current step is generated randomly with a positive determinant. Simulations are performed for $M=1,2,4,8$, and $16$. Each test is repeated $5000$ times to obtain a statistically averaged runtime required for the solution of \eqref{eq:tensor_matrix}, solving \eqref{eq:solve_NN_alpha} for $\mathbb{N}_\alpha$, and solving \eqref{eq:solve_XX} for $\bm{X}$.

\begin{figure}[h]
\centering
\includegraphics[trim=40 270 70 270, clip,  scale=0.7]{./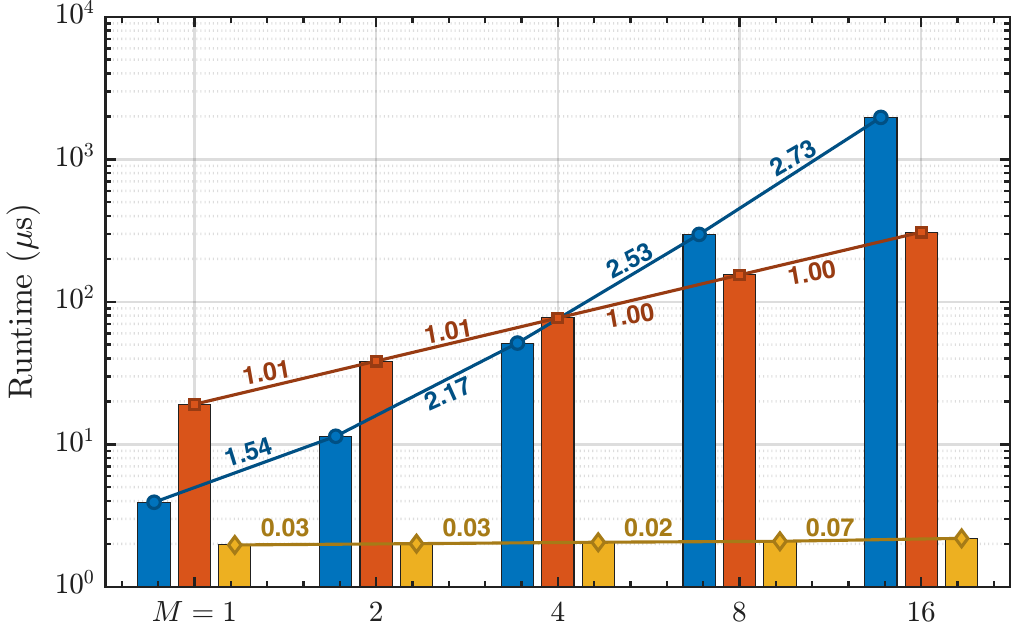}
\caption{Compute time consumed by the solution of \eqref{eq:tensor_matrix} (blue), solving \eqref{eq:solve_NN_alpha} for $\mathbb{N}_\alpha$ (red), and solving \eqref{eq:solve_XX} for $\bm{X}$ (yellow), with the growth rates obtained from the slopes of log-log fits.}
\label{fig:algorithm_compare}
\end{figure}

All reported average times are plotted on a logarithmic scale in Figure~\ref{fig:algorithm_compare}. The time consumed by solving \eqref{eq:tensor_matrix} increases approximately cubically with $M$, whereas the total cost of solving $\mathbb{N}_\alpha$ grows linearly with $M$. The computational cost of solving $\bm{X}$ remains essentially independent of $M$. When $M$ reaches $8$, the total computational times of the two approaches become nearly identical. This finding is consistent with the theoretical analysis made in Section \ref{sec:numerical_formulation_FV_KV} and indicates the algorithm based on Proposition \ref{prop:tensor_matrix} enjoys a better scaling property with the number of $M$. It is thus recommended for use in practical implementation.

\subsection{Calibration and simulation of VHB 4910}
\label{sec:calibration}
In this section, we consider VHB 4910, an acrylic polymer that has attracted attention due to its potential use as a dielectric elastomer material. Its mechanical behavior exhibits strong rate-dependency, and its viscoelastic behavior has been characterized using the model based on the multiplicative decomposition \cite{Hossain2012}. We conducted uniaxial tests on VHB 4910 specimens using an Instron 5966 universal testing machine. Rectangular specimens, with dimensions of $10~\mathrm{cm} \times 2.2~\mathrm{cm} \times 1~\mathrm{mm}$, were prepared by bonding $1$~cm-long acrylic tabs to both ends along the tensile direction, resulting in an active gauge length of $8$~cm. In each test, the universal testing machine applied a uniaxial stretch until a predetermined stretch level was reached, followed by unloading until the measured force returned to zero. The prescribed maximum displacements were $4$, $8$, $12$, and $16$~cm, corresponding to stretch ratios of $1.5$, $2.0$, $2.5$, and $3.0$, respectively. A constant stretch rate of $0.05$~s$^{-1}$ was employed in all four experiments. The proposed four models (FLV-GM, FLV-GKV, NV-GM, NV-GKV) were employed to fit four sets of experimental data, using $M=1$ or $2$. To identify the material parameters and evaluate model performance, the normalized mean absolute difference (NMAD) was adopted as the metric \cite[p.~444]{Bergstrom2015}, where a lower NMAD value indicates a better fit.

\begin{figure}[h]
\centering
\begin{minipage}{0.4\textwidth}
\centering
\textbf{FLV-GM $(M=1)$}
\end{minipage}%
\begin{minipage}{0.4\textwidth}
\centering
\textbf{FLV-GM $(M=2)$}
\end{minipage}
\\
\begin{minipage}{0.4\textwidth}
\centering
\includegraphics[width=\linewidth, trim=110 140 170 140, clip]{./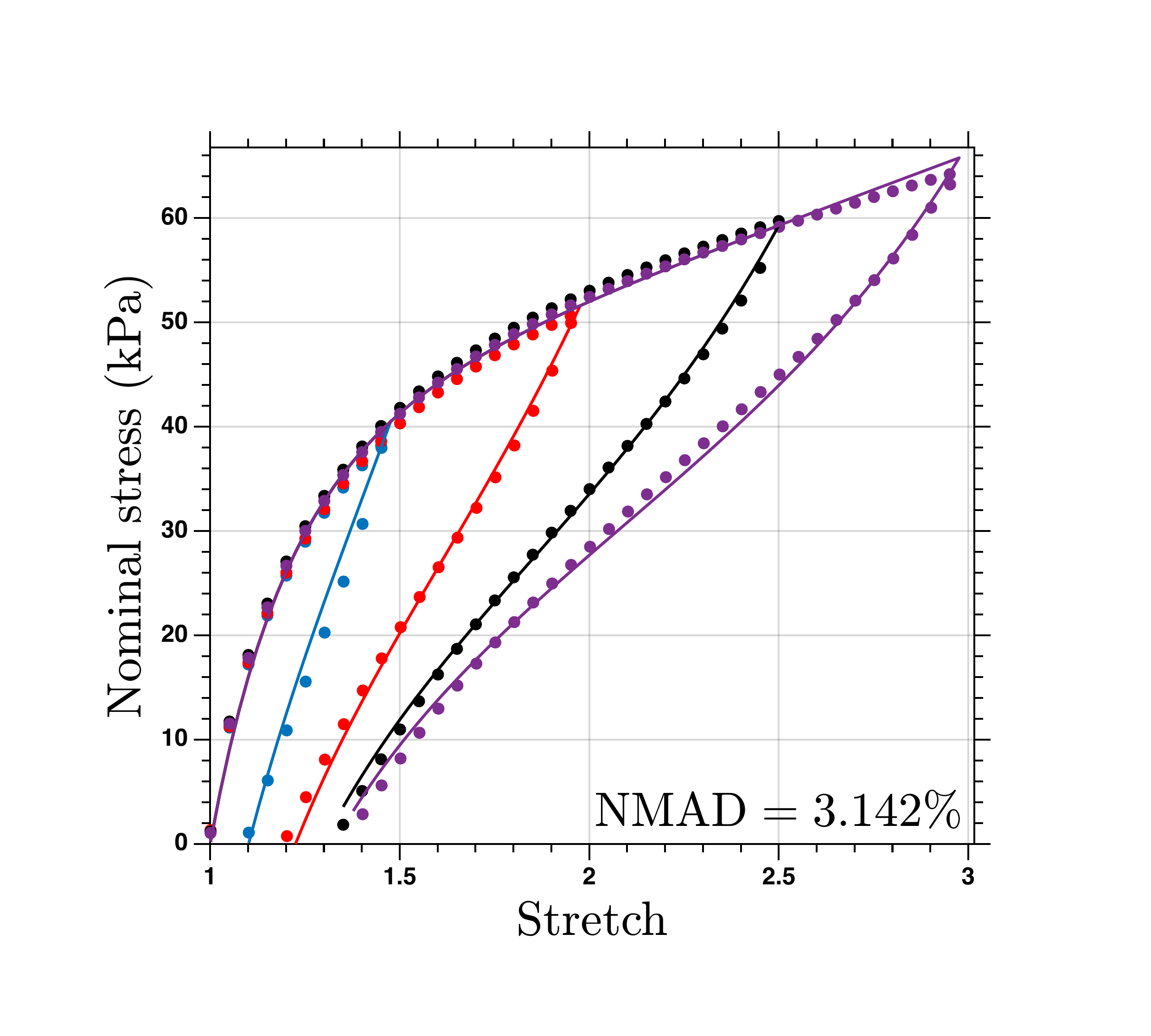}  
\end{minipage}%
\begin{minipage}{0.38\textwidth}
\centering
\includegraphics[width=\linewidth, trim=157 140 170 140, clip]{./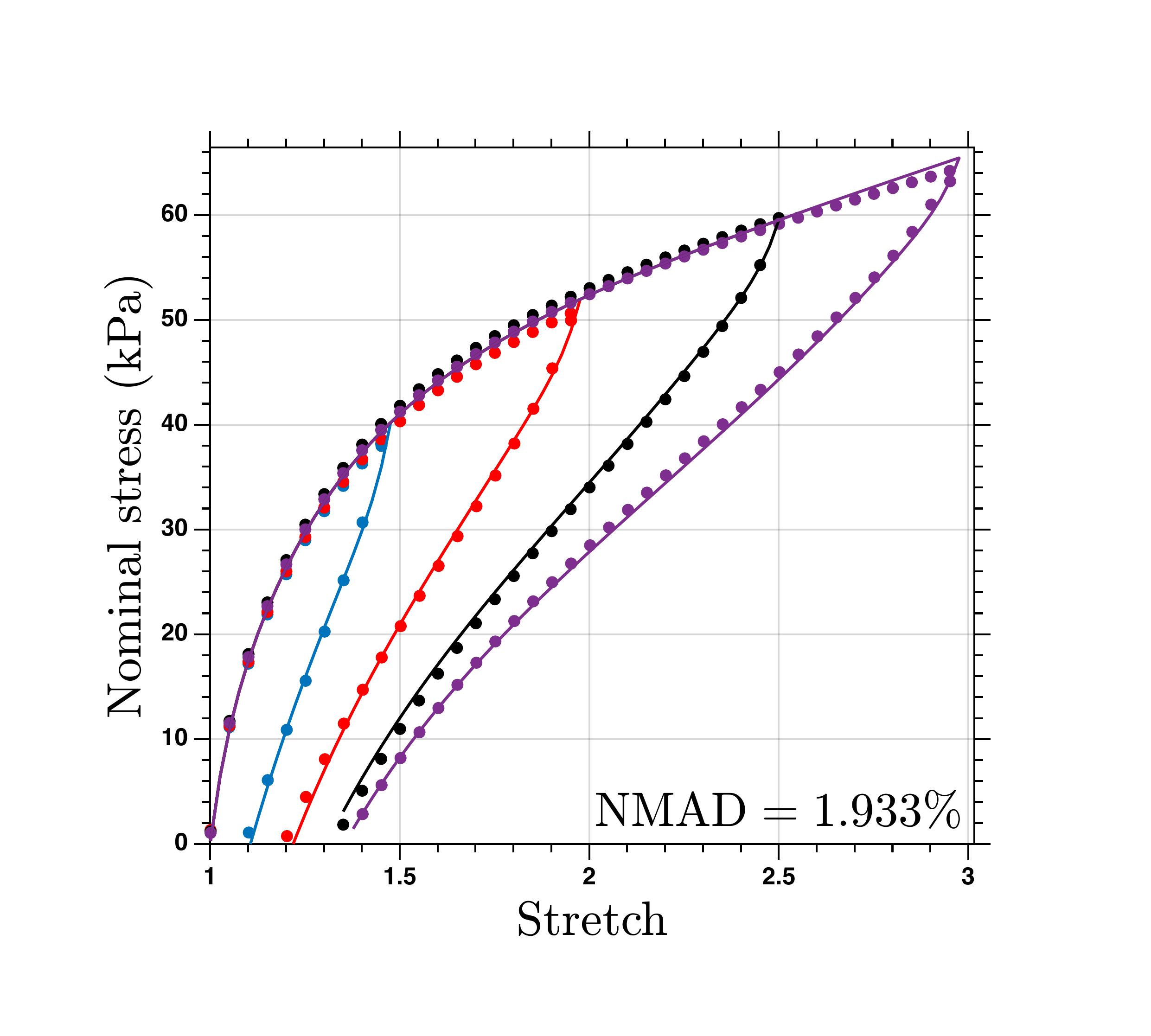}  
\end{minipage}%
\\
\begin{minipage}{0.4\textwidth}
\centering
\textbf{FLV-GKV $(M=1)$}
\end{minipage}%
\begin{minipage}{0.4\textwidth}
\centering
\textbf{FLV-GKV $(M=2)$}
\end{minipage}
\\
\begin{minipage}{0.4\textwidth}
\centering
\includegraphics[width=\linewidth, trim=110 100 170 140, clip]{./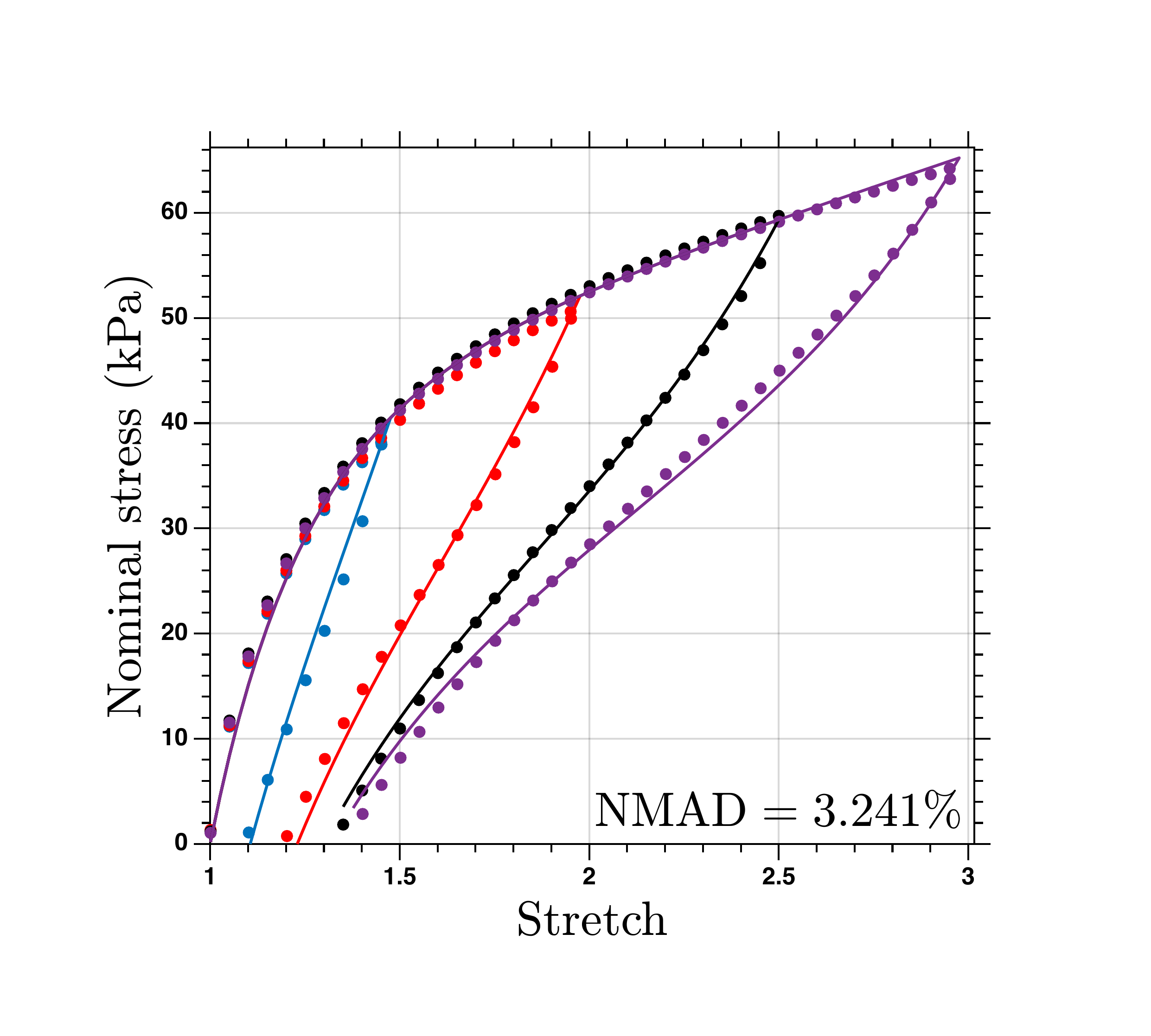}  
\end{minipage}%
\begin{minipage}{0.38\textwidth}
\centering
\includegraphics[width=\linewidth, trim=157 100 170 140, clip]{./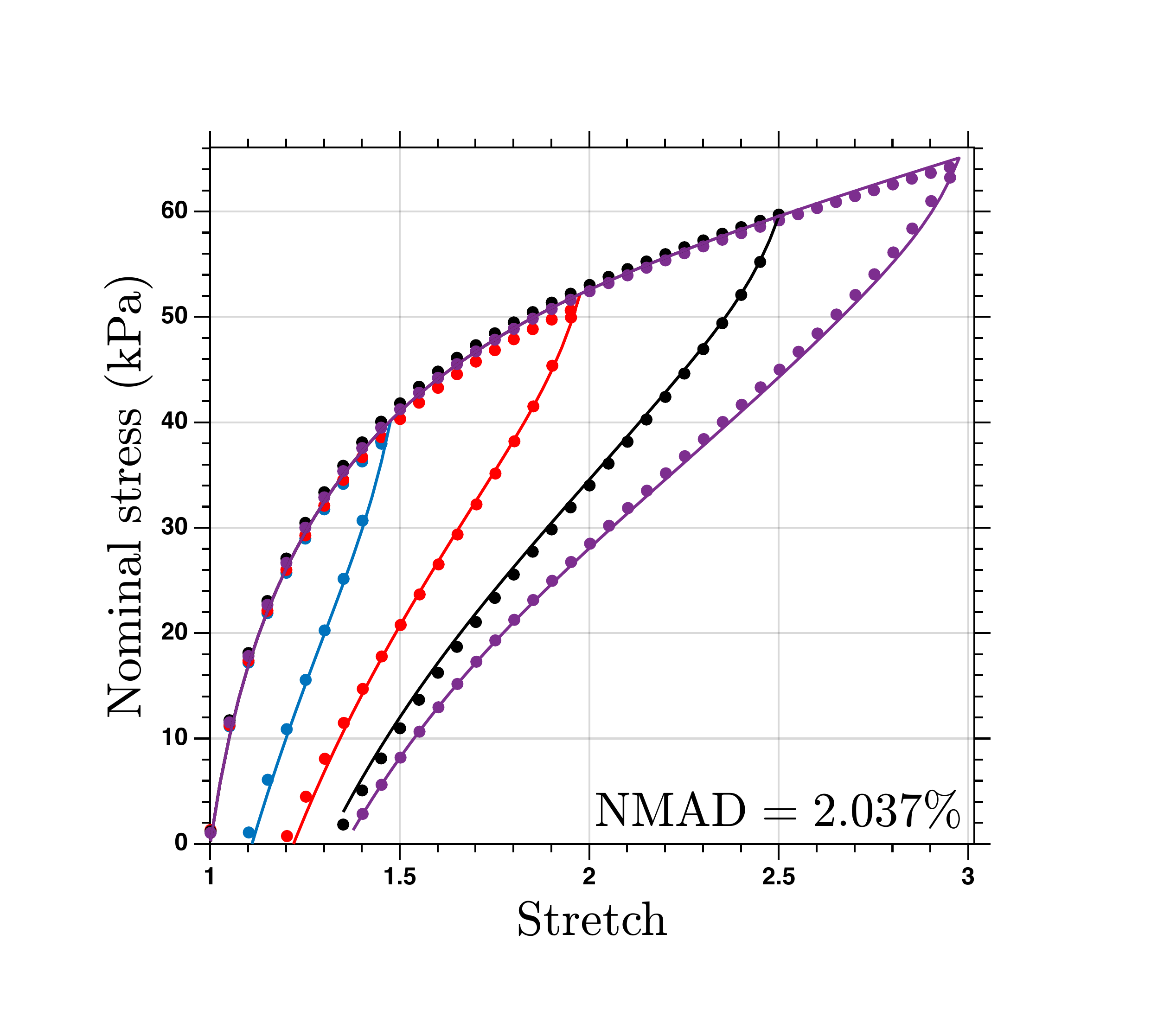}  
\end{minipage}%
\caption{Calibration results of the finite linear viscoelastic models.}
\label{fig:FLV_calibration}
\end{figure}

\begin{figure}[hp]
\centering
\begin{minipage}{0.4\textwidth}
\centering
\textbf{NV-GM $(M=1)$}
\end{minipage}%
\begin{minipage}{0.4\textwidth}
\centering
\textbf{NV-GM $(M=2)$}
\end{minipage}
\\
\begin{minipage}{0.4\textwidth}
\centering
\includegraphics[width=\linewidth, trim=110 140 170 140, clip]{./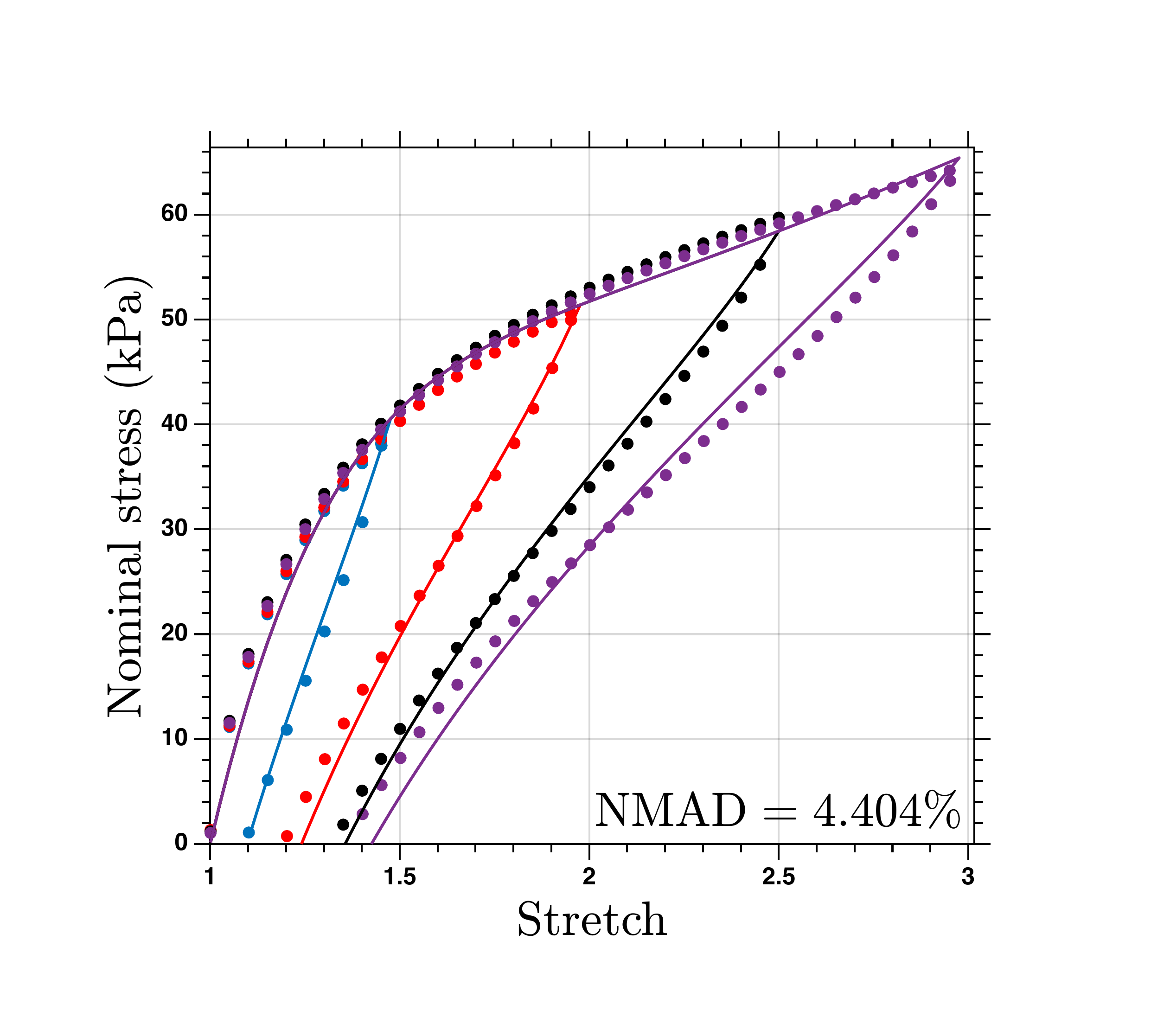}  
\end{minipage}%
\begin{minipage}{0.38\textwidth}
\centering
\includegraphics[width=\linewidth, trim=157 140 170 140, clip]{./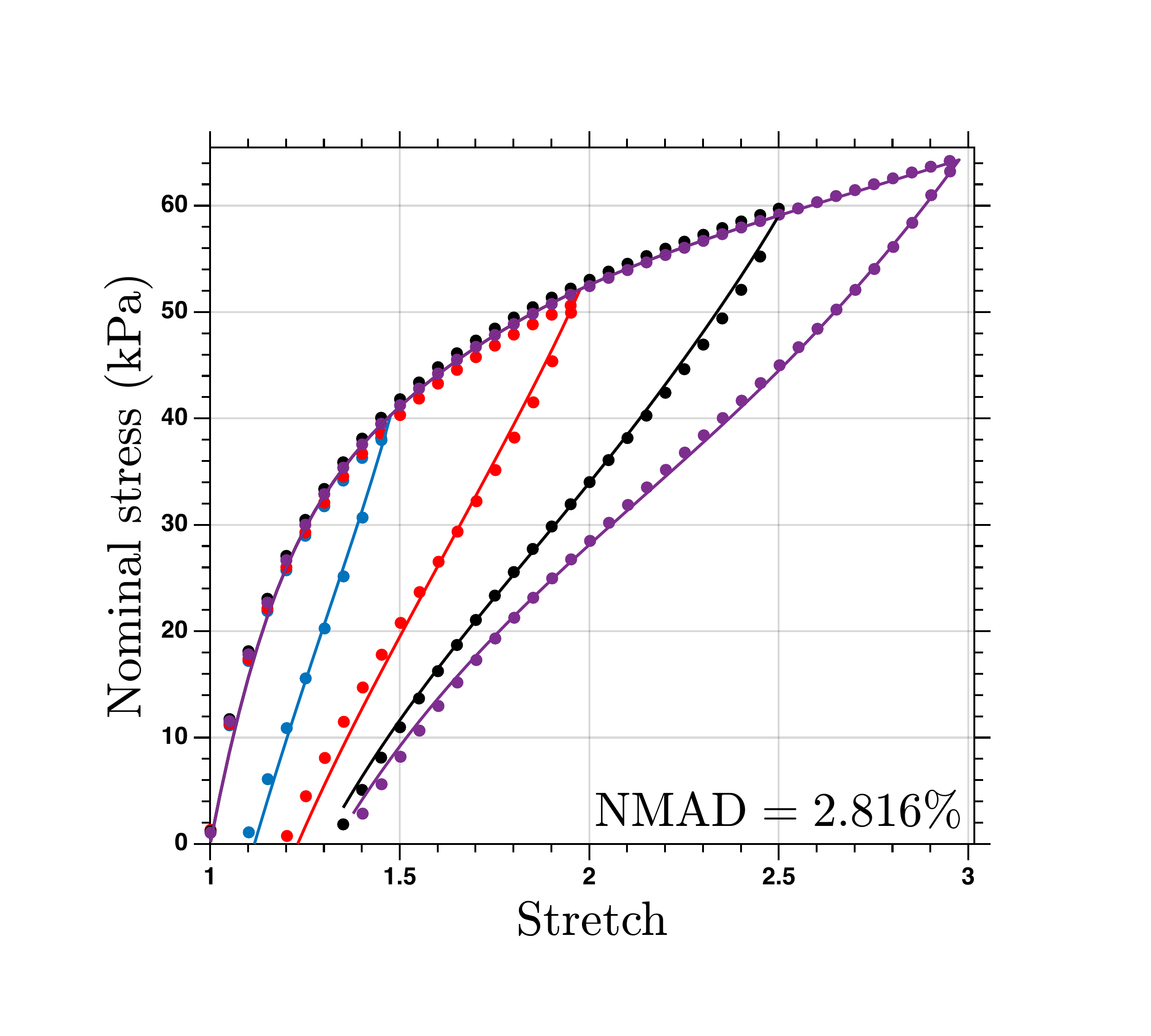}  
\end{minipage}%
\\
\begin{minipage}{0.4\textwidth}
\centering
\textbf{NV-GKV $(M=1)$}
\end{minipage}%
\begin{minipage}{0.4\textwidth}
\centering
\textbf{NV-GKV $(M=2)$}
\end{minipage}
\\
\begin{minipage}{0.4\textwidth}
\centering
\includegraphics[width=\linewidth, trim=110 100 170 140, clip]{./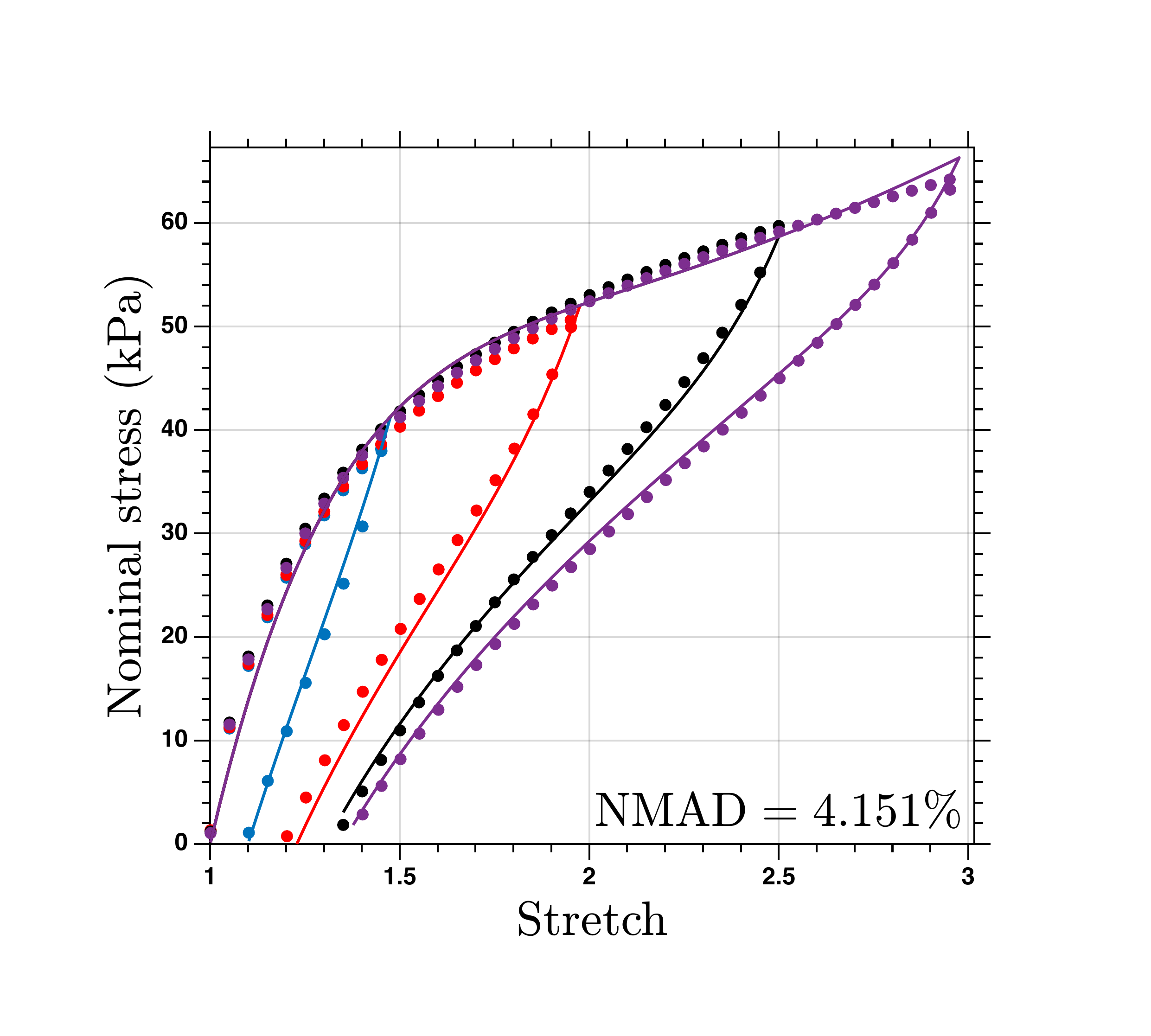}  
\end{minipage}%
\begin{minipage}{0.38\textwidth}
\centering
\includegraphics[width=\linewidth, trim=157 100 170 140, clip]{./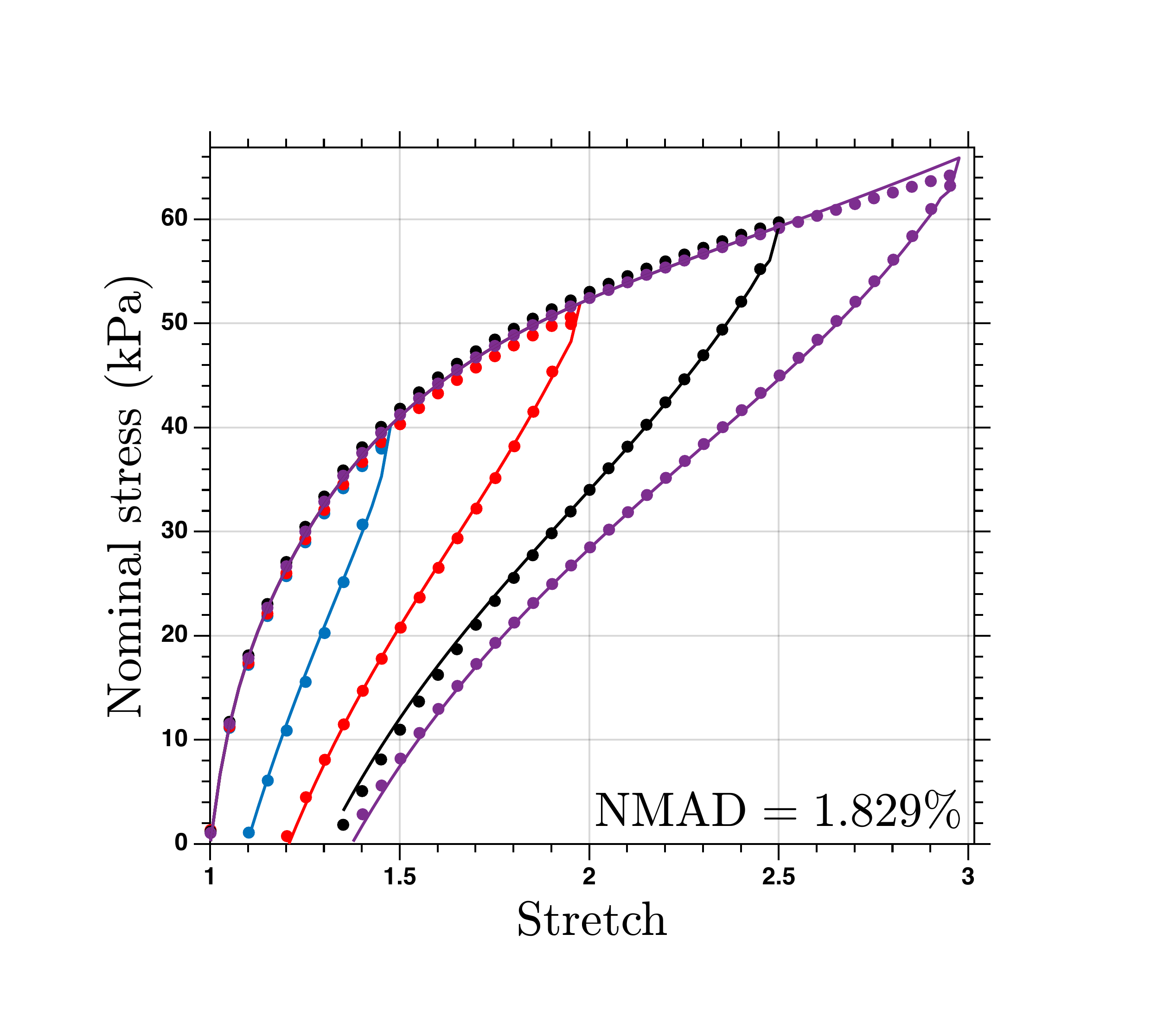}  
\end{minipage}%
\caption{Calibration results of the nonlinear viscoelastic models.}
\label{fig:NV_calibration}
\end{figure}

\begin{table}[h]
\centering
\renewcommand{\arraystretch}{1.2}
\setlength{\tabcolsep}{8pt}
\small
\begin{tabular}{c|c|ccccccc}
\toprule
\multicolumn{1}{c}{} & 
\multicolumn{1}{c}{\textbf{Model}} & 
\multicolumn{1}{c}{$\bm{\mu^\infty}$} & 
\multicolumn{1}{c}{$\bm{m}$} & 
\multicolumn{1}{c}{$\bm{n}$} & 
\multicolumn{1}{c}{$\bm{\mu_1}$} & 
\multicolumn{1}{c}{$\bm{\tau_1}$} & 
\multicolumn{1}{c}{$\bm{\mu_2}$} & 
\multicolumn{1}{c}{$\bm{\tau_2}$} \\
		
\cmidrule(lr){2-9}
\multicolumn{1}{c}{} & 
\multicolumn{1}{c}{} & 
\multicolumn{1}{c}{(kPa)} & 
\multicolumn{1}{c}{} & 
\multicolumn{1}{c}{} & 
\multicolumn{1}{c}{(kPa)} & 
\multicolumn{1}{c}{(s)} & 
\multicolumn{1}{c}{(kPa)} & 
\multicolumn{1}{c}{(s)} \\
\midrule
		
\multirow{2}{*}{\rotatebox{90}{$M=1$}} 
& \textbf{FLV-GM} & 67.20 & 0.94 & 1.67 & 67.39 & 9.44 & -- & -- \\
\addlinespace
& \textbf{FLV-GKV} & 122.56 & 1.0 & 1.13 & 124.58 & 17.48 & -- & -- \\
\midrule		
\multirow{2}{*}{\rotatebox{90}{$M=2$}} 
& \textbf{FLV-GM} & 61.07 & 1.0 & 1.28 & 46.31 & 13.42 & 86.46 & 0.78 \\
\addlinespace
& \textbf{FLV-GKV} & 166.04 & 0.96 & 0.97 & 3.45 & 460.84 & 369.69 & 1.85 \\
\bottomrule
\end{tabular}
\caption{Fitted material parameters for the finite linear viscoelastic models.}
\label{tab:FLV_calibration_parameters}
\end{table}

\begin{table}[h]
\centering
\renewcommand{\arraystretch}{1.2}
\setlength{\tabcolsep}{6pt}
\small
\begin{tabular}{c|c|cccccccccc}
\toprule
\multicolumn{1}{c}{} & 
\multicolumn{1}{c}{\textbf{Model}} & 
\multicolumn{1}{c}{$\bm{\mu^\infty}$} & 
\multicolumn{1}{c}{$\bm{N^\infty}$} & 
\multicolumn{1}{c}{$\bm{m}$} & 
\multicolumn{1}{c}{$\bm{n}$} & 
\multicolumn{1}{c}{$\bm{\mu_1}$} & 
\multicolumn{1}{c}{$\bm{N_1}$} & 
\multicolumn{1}{c}{$\bm{\eta_1}$} & 
\multicolumn{1}{c}{$\bm{\mu_2}$} & 
\multicolumn{1}{c}{$\bm{N_2}$} & 
\multicolumn{1}{c}{$\bm{\eta_2}$} \\
		
\cmidrule(lr){2-11}
\multicolumn{1}{c}{} & 
\multicolumn{1}{c}{} & 
\multicolumn{1}{c}{(kPa)} & 
\multicolumn{1}{c}{} & 
\multicolumn{1}{c}{} & 
\multicolumn{1}{c}{} & 
\multicolumn{1}{c}{(kPa)} & 
\multicolumn{1}{c}{} & 
\multicolumn{1}{c}{(kPa$\cdot$s)} & 
\multicolumn{1}{c}{(kPa)} & 
\multicolumn{1}{c}{} & 
\multicolumn{1}{c}{(kPa$\cdot$s)} \\
\midrule
	
\multirow{2}{*}{\rotatebox{90}{$M=1$}} 
& \textbf{NV-GM} & 19.35 & 1834.19 & 0.26 & 0.48 & 31.74 & 21370.59 & 1106.37 & -- & -- & -- \\
\addlinespace
& \textbf{NV-GKV} & 52.19 & 79256.09 & 0.79 & 1.0 & 38.21 & 3.02 & 2689.13 & -- & -- & -- \\
\midrule		
\multirow{2}{*}{\rotatebox{90}{$M=2$}} 
& \textbf{NV-GM} & 16.74 & 414933.62 & 0.39 & 1.17 & 12.76 & 507.40 & 4030.58 & 31.24 & 21529.04 & 441.10 \\
\addlinespace
& \textbf{NV-GKV} & 88.35 & 275836.38 & 0.96 & 1.85 & 125.14 & 11250.99 & 477.36 & 50.45 & 547.77 & 5372.63 \\
\bottomrule
\end{tabular}
\caption{Fitted material parameters for the nonlinear viscoelastic models.}
\label{tab:NV_calibration_parameters}
\end{table}

The fitting results of the finite linear viscoelastic models are presented in Figure \ref{fig:FLV_calibration}. All four models demonstrate satisfactory performance in capturing the loading–unloading behavior. Notably, the inclusion of an additional non-equilibrium process in the model (i.e., $M=2$) improves the agreement with the experimental data. The fitted parameters are listed in Table \ref{tab:FLV_calibration_parameters}, which allows for a validation of Proposition \ref{proposition1}. For $M=1$, the calibrated parameters approximately satisfy \eqref{eq:equivalence_moduli_relation}. This confirms that the two rheological representations yield equivalent stress responses, and it also demonstrates the robustness of the models as well as the calibration method. The fitting results of the nonlinear models are presented in Table \ref{tab:NV_calibration_parameters} and Figure \ref{fig:NV_calibration}. Compared to the finite linear viscoelastic models, the nonlinear models offer only limited improvement in fitting accuracy. An exception is the NV-GKV model with $M=2$, which attains the best accuracy among all models.

\begin{figure}[h]
\centering
\includegraphics[angle=0, trim=0 170 460 70, clip=true, scale = 0.55]{./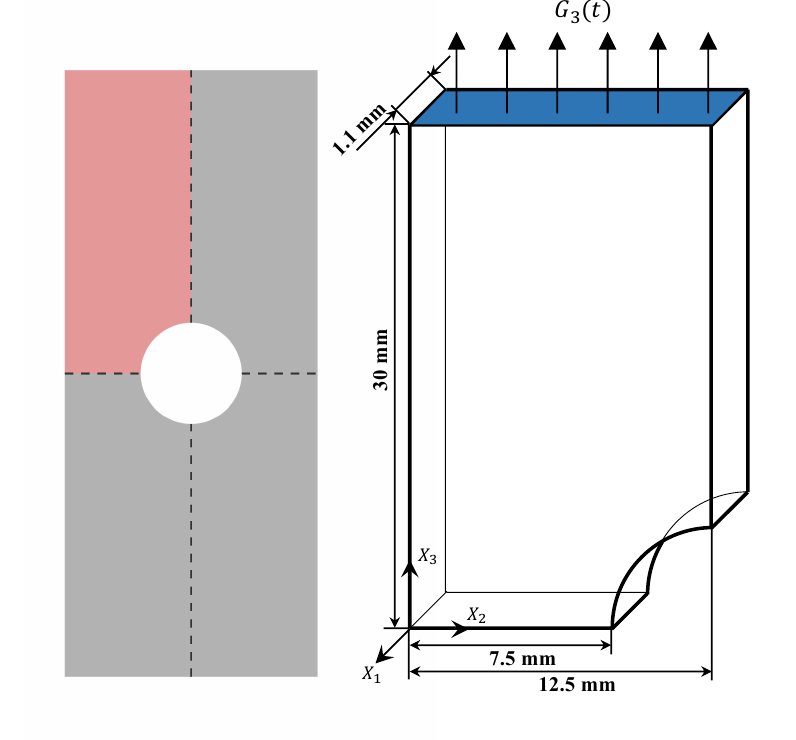}
\includegraphics[angle=0, trim=50 200 60 300, clip=true, scale = 0.4]{./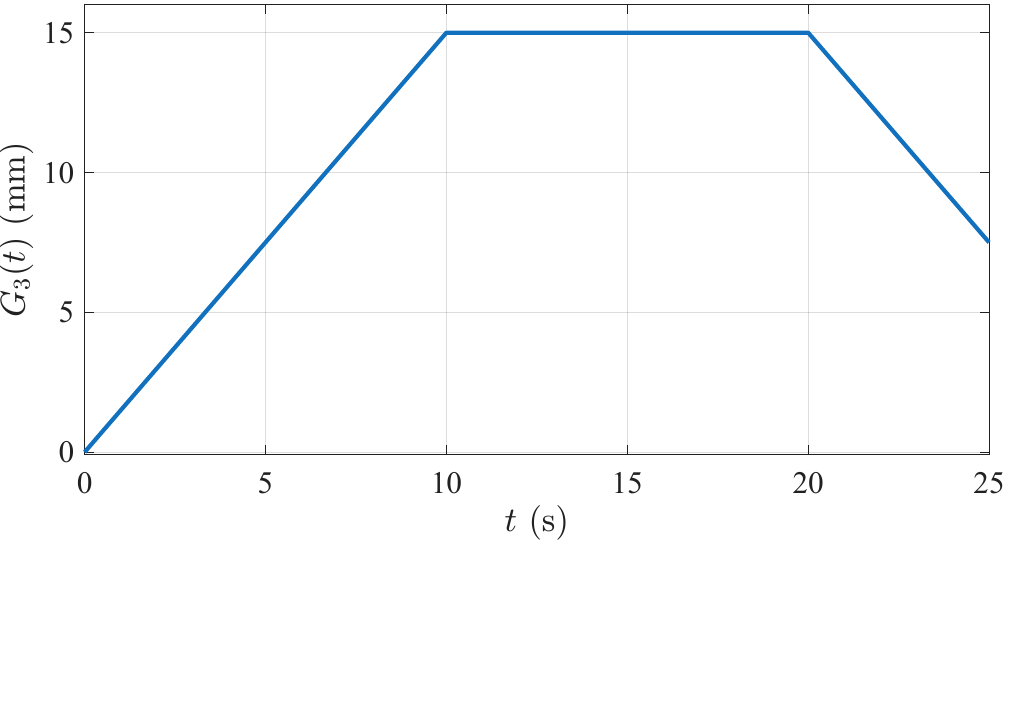}
\caption{Plate with a central hole: geometry and loading conditions.}
\label{fig:plate_hole}
\end{figure}

Next, we perform a numerical study for a plate with a central hole (Figure \ref{fig:plate_hole}), using the fitted parameters of the four models with $M=2$. The size of the plate is $60$~mm~$\times$~$25$~mm~$\times$~$2.2$~mm, with a circular hole of radius $5$~mm located at its center. The simulation is performed up to $T=25$~s. During the first $10$~s, the two ends of the plate along its 60 mm length are stretched at a constant rate of $1.5$~mm/s; from $10$~s to $20$~s, the applied displacement is held fixed to observe relaxation; in the final $5$~s, the stretched ends are compressed at the same rate. This loading protocol results in a constant stretch rate of 0.05~s$^{-1}$ and a maximum stretch ratio of $1.5$. To reduce the computational cost, one-eighth of the plate is considered by leveraging symmetry boundary conditions. The mesh size is $15\times 15\times 5$, and mesh independence of the results has been verified. Symmetry boundary conditions are applied on the three mid-planes; traction-free boundary conditions are applied on the two lateral surfaces; on the loaded surface, a prescribed displacement is applied as
\begin{align*}
G_3(t) = 
\begin{cases}
C_7 t/T, & 0 \leq t/T \leq 0.4, \\
C_8, & 0.4 < t/T \leq 0.8, \\
C_8 - C_7(t/T - 0.8), & 0.8 < t/T \leq 1,
\end{cases}
\end{align*}
with $C_7=37.5$~mm and $C_8=15$~mm.

\begin{figure}[h]
\centering
\includegraphics[angle=0, trim=40 270 60 280, clip=true, scale = 0.65]{./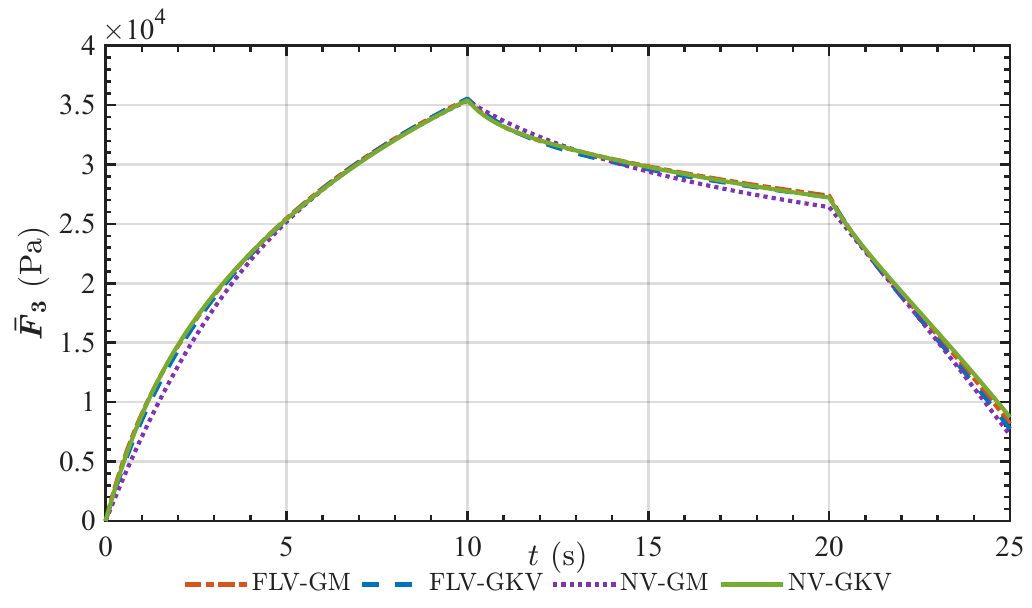}
\caption{The time-dependent traction responses predicted by the four different models.}
\label{fig:time_traction}
\end{figure}

\begin{figure}[h]
\centering
\begin{minipage}{0.25\textwidth}
\centering
\textbf{FLV-GM}
\end{minipage}%
\begin{minipage}{0.24\textwidth}
\centering
\textbf{NV-GM}
\end{minipage}%
\begin{minipage}{0.27\textwidth}
\centering
\textbf{FLV-GKV}
\end{minipage}
\begin{minipage}{0.23\textwidth}
\centering
\textbf{NV-GKV}
\end{minipage}
\\
\hfill \\
\begin{minipage}{0.25\textwidth}
\centering
\includegraphics[width=0.4\linewidth, trim=1130 0 1130 0, clip]{./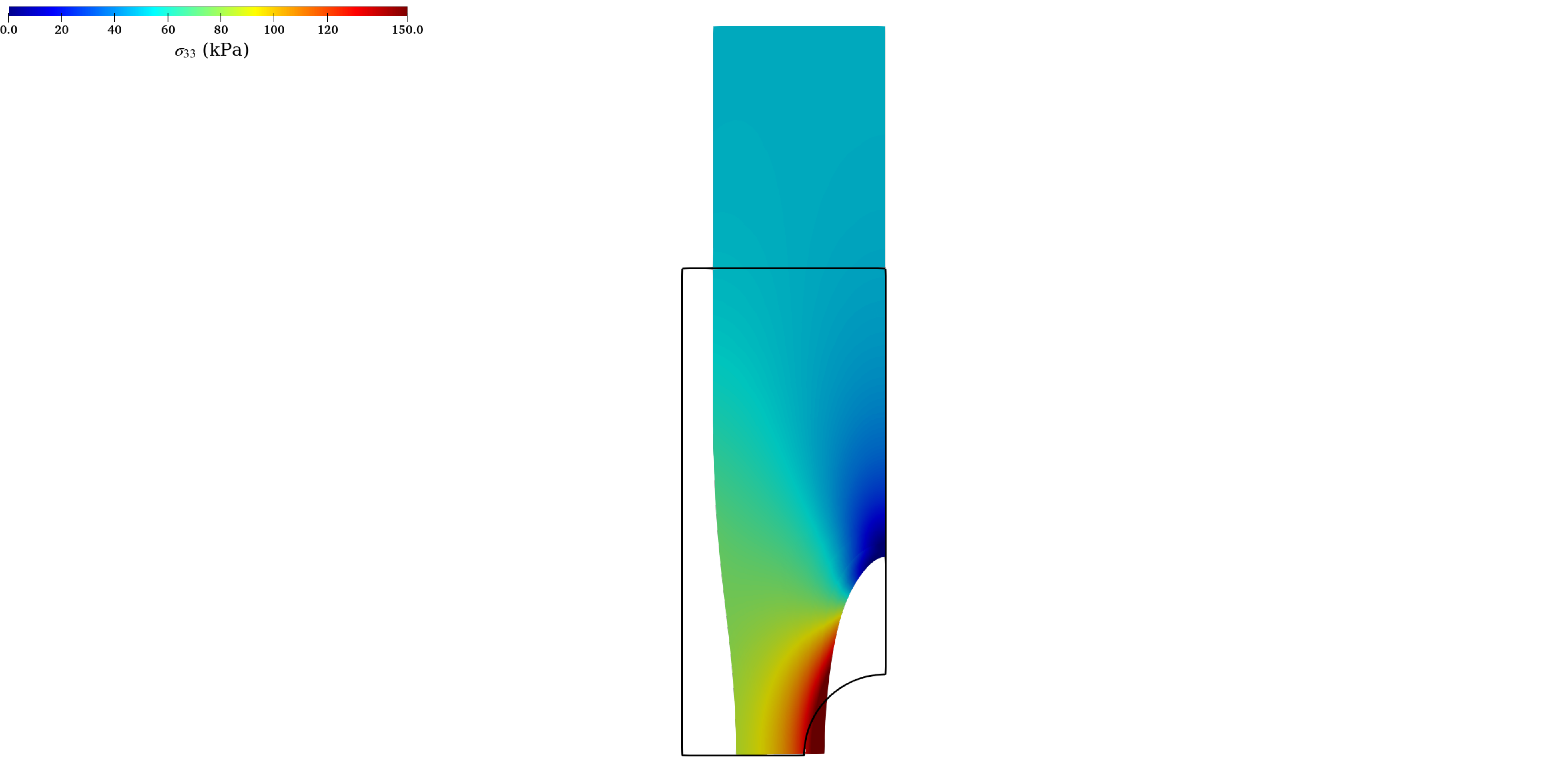}
\end{minipage}%
\begin{minipage}{0.25\textwidth}
\centering
\includegraphics[width=0.4\linewidth, trim=1130 0 1130 0, clip]{./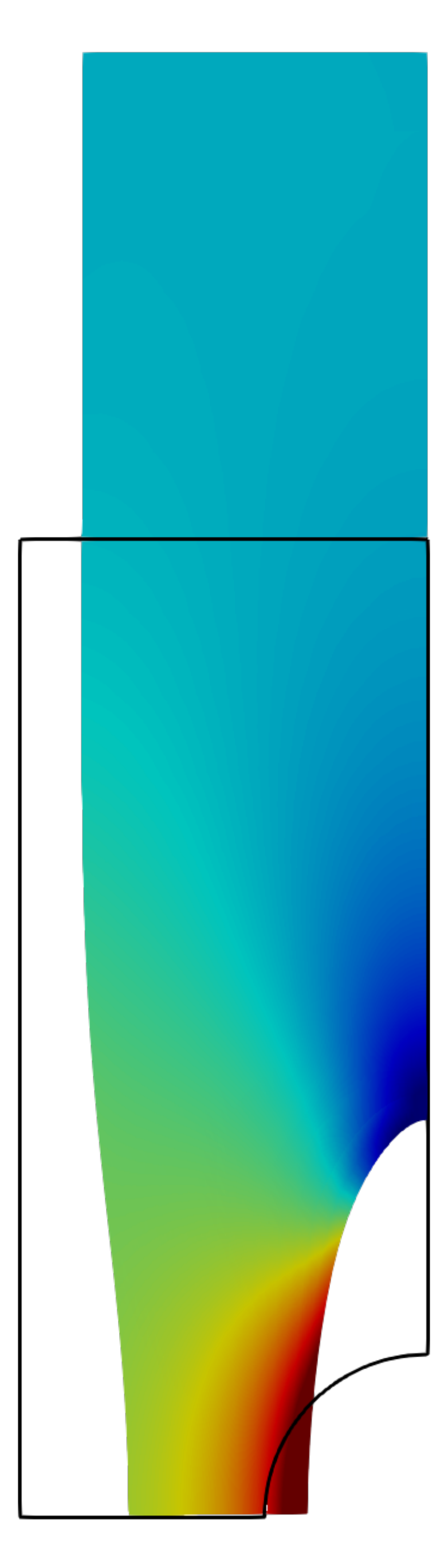}
\end{minipage}%
\begin{minipage}{0.25\textwidth}
\centering
\includegraphics[width=0.4\linewidth, trim=1130 0 1130 0, clip]{./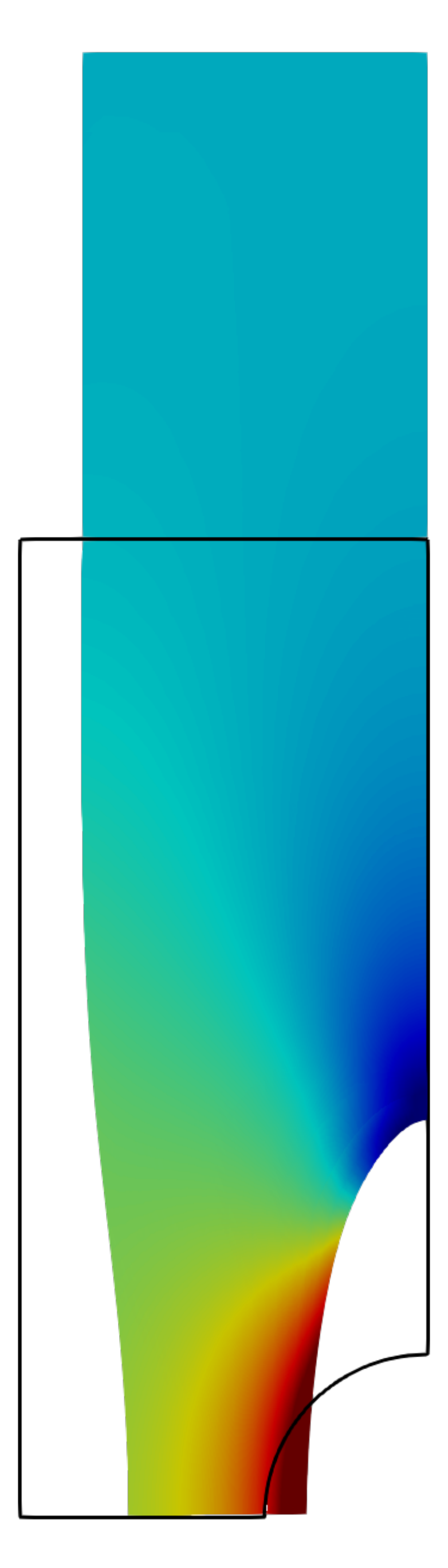}
\end{minipage}%
\begin{minipage}{0.25\textwidth}
\centering
\includegraphics[width=0.4\linewidth, trim=1130 0 1130 0, clip]{./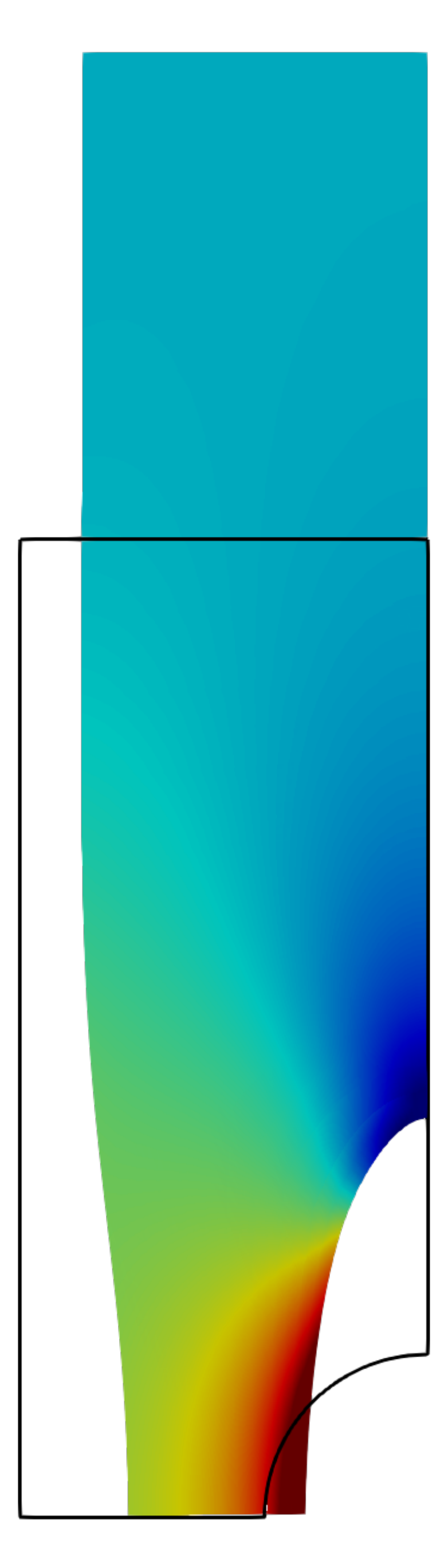}
\end{minipage}%
\\

\begin{minipage}{0.25\textwidth}
\centering
\includegraphics[width=0.4\linewidth, trim=1130 0 1130 0, clip]{./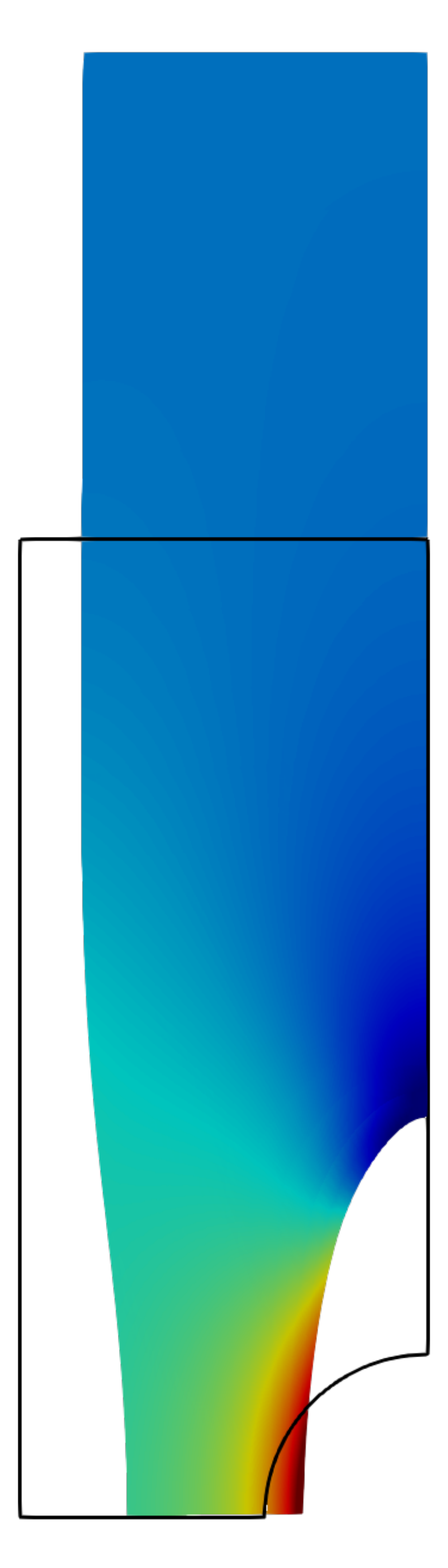}
\end{minipage}%
\begin{minipage}{0.25\textwidth}
\centering
\includegraphics[width=0.4\linewidth, trim=1130 0 1130 0, clip]{./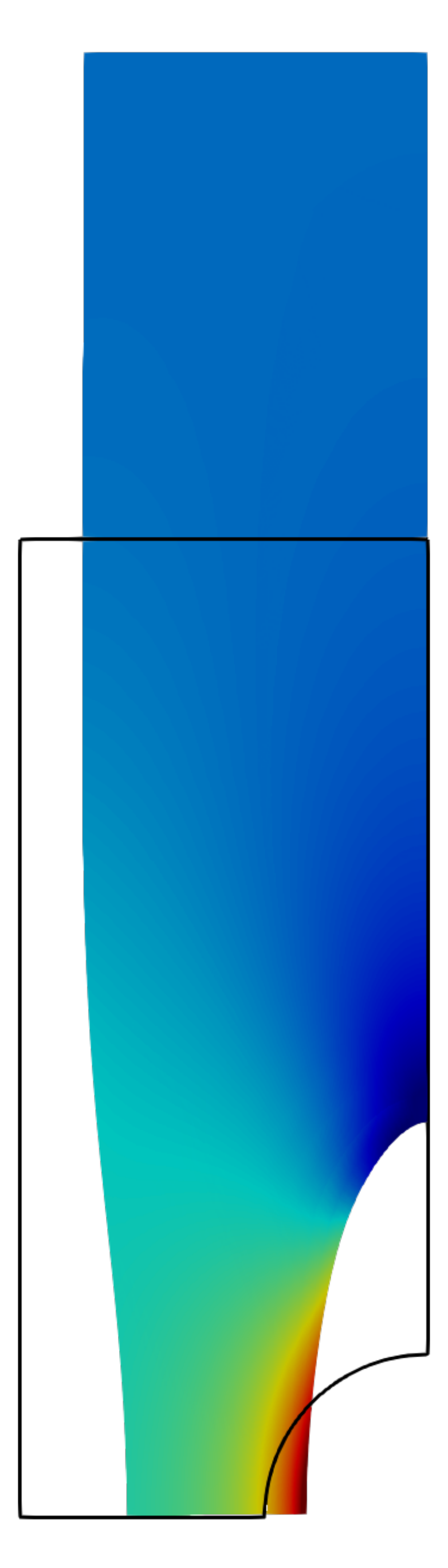}
\end{minipage}%
\begin{minipage}{0.25\textwidth}
\centering
\includegraphics[width=0.4\linewidth, trim=1130 0 1130 0, clip]{./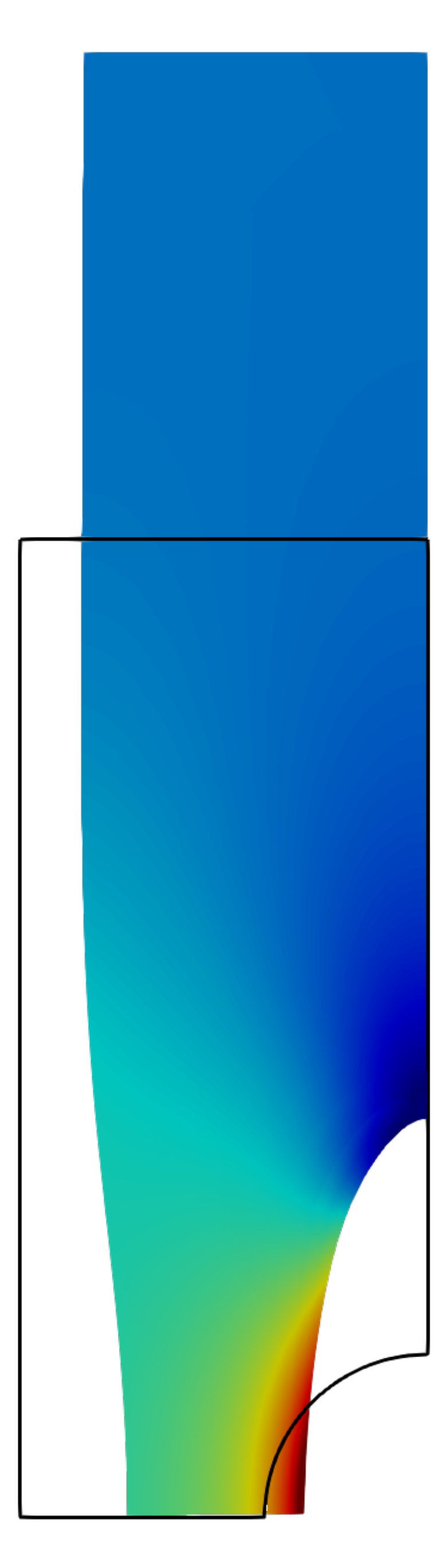}
\end{minipage}%
\begin{minipage}{0.25\textwidth}
\centering
\includegraphics[width=0.4\linewidth, trim=1130 0 1130 0, clip]{./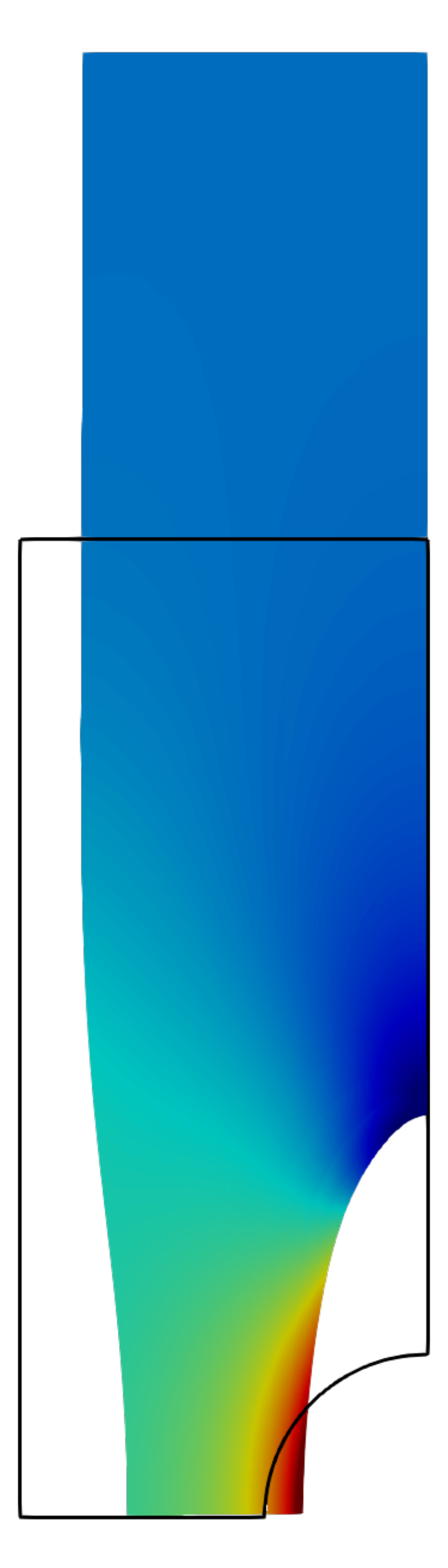}
\end{minipage}%
\\

\begin{minipage}{0.25\textwidth}
\centering
\includegraphics[width=0.4\linewidth, trim=1130 0 1130 200, clip]{./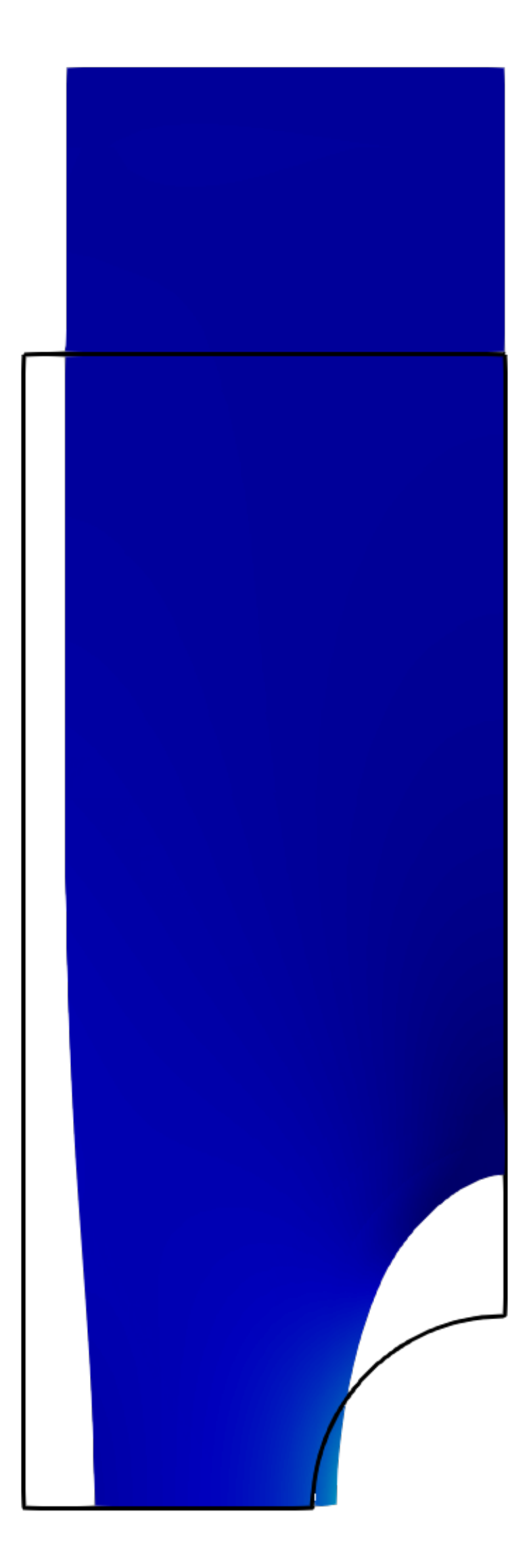}
\end{minipage}%
\begin{minipage}{0.25\textwidth}
\centering
\includegraphics[width=0.4\linewidth, trim=1130 0 1130 200, clip]{./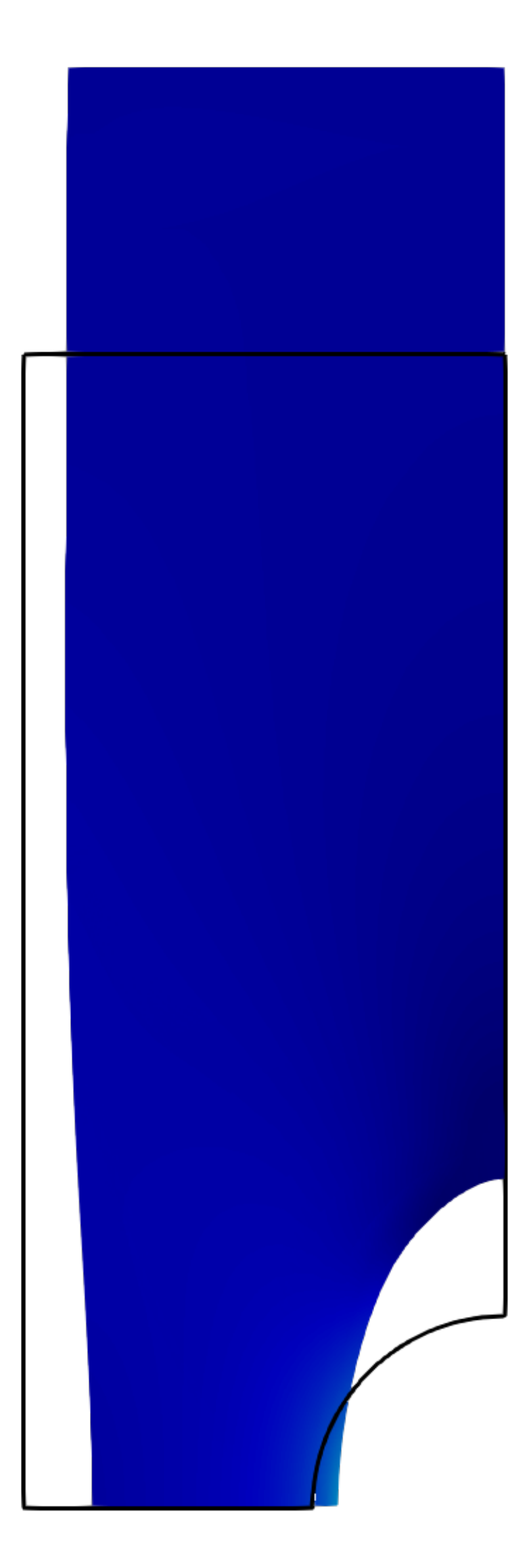}
\end{minipage}%
\begin{minipage}{0.25\textwidth}
\centering
\includegraphics[width=0.4\linewidth, trim=1130 0 1130 200, clip]{./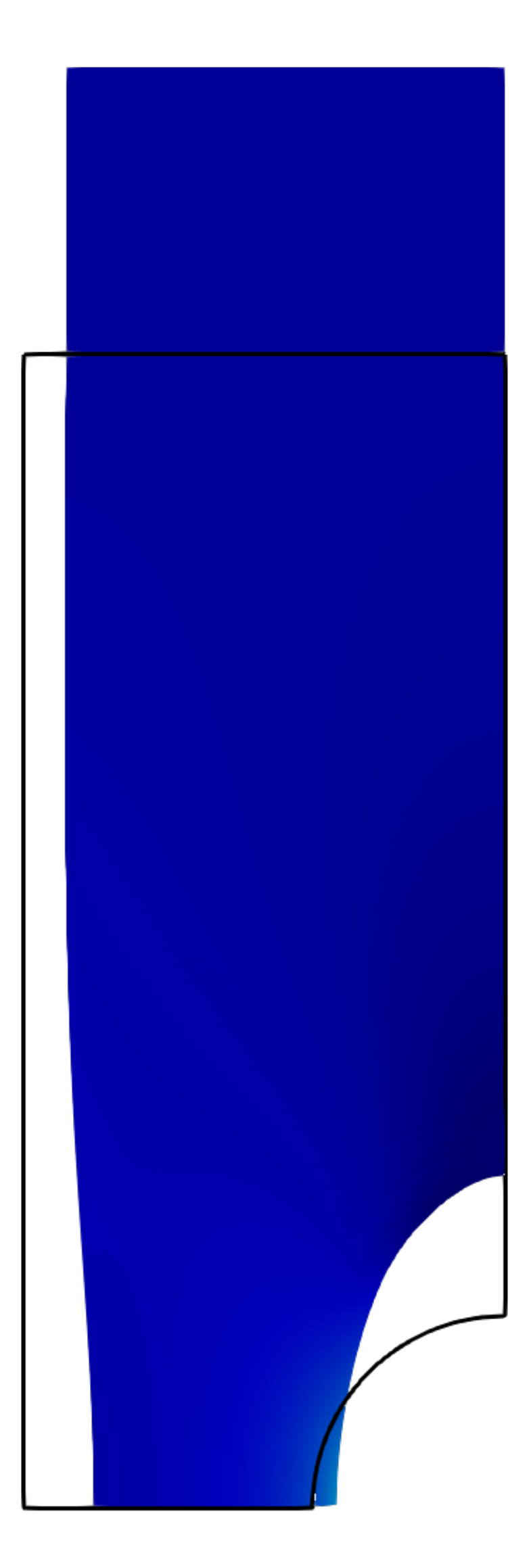}
\end{minipage}%
\begin{minipage}{0.25\textwidth}
\centering
\includegraphics[width=0.4\linewidth, trim=1130 0 1130 200, clip]{./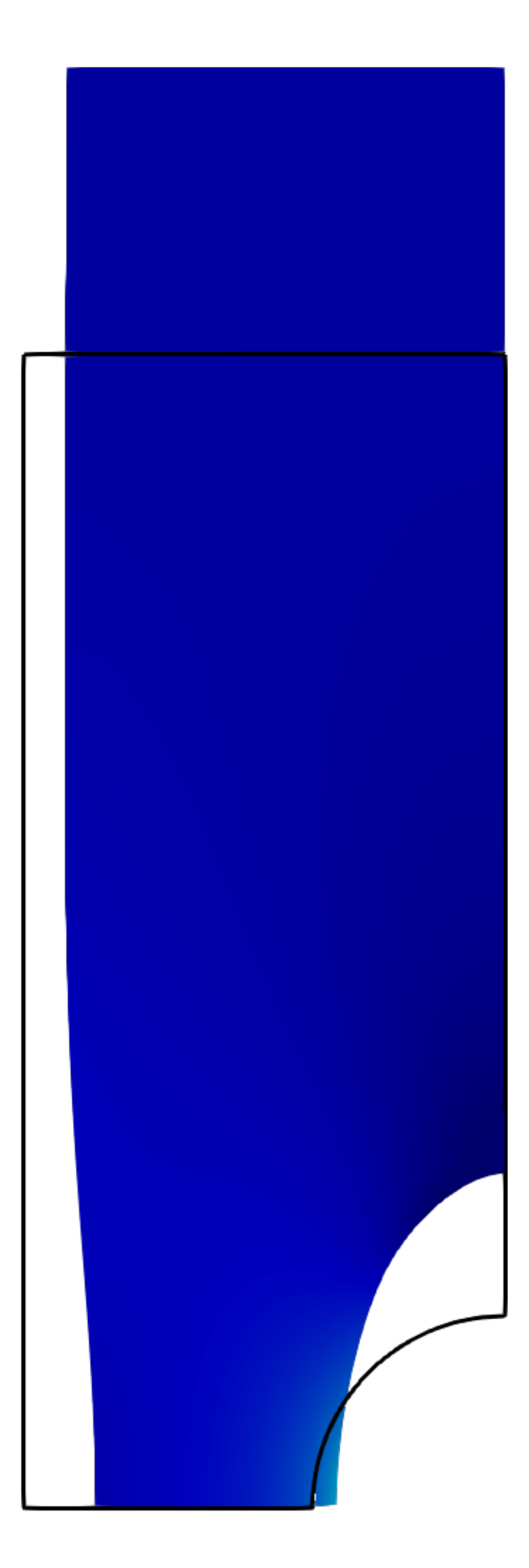}
\end{minipage}%
\\
\begin{minipage}{1.0\textwidth}
\centering
\includegraphics[width=0.5\linewidth, trim=0 1210 1920 0, clip]{./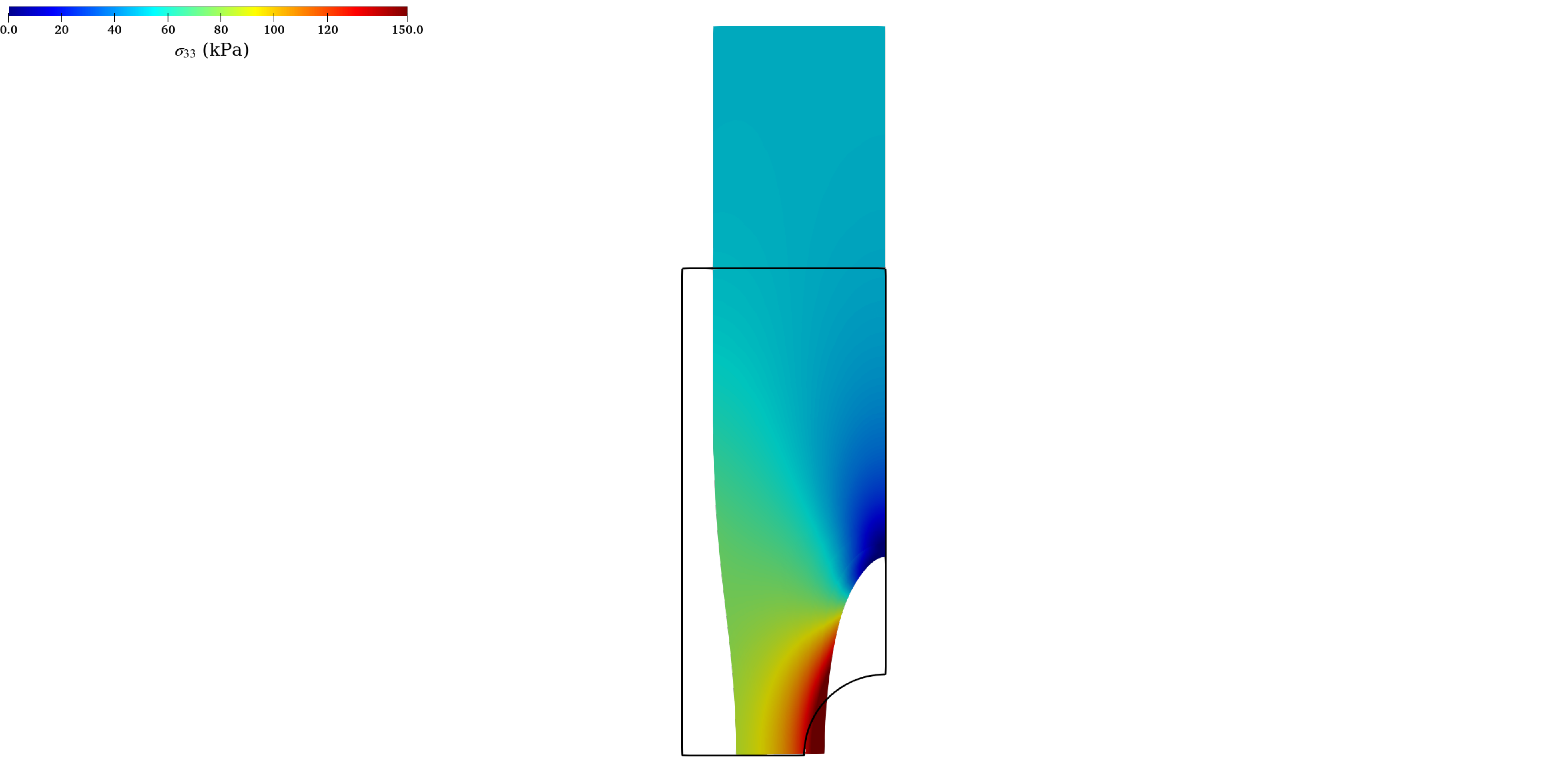}
\end{minipage}%
\caption{Cauchy stress distributions at $t = 10$, $20$, and $25$~s (top to bottom rows).}
\label{fig:stress_distribution}
\end{figure}

\begin{figure}[h]
\centering
\begin{minipage}{0.25\textwidth}
\centering
\textbf{FLV-GM}
\end{minipage}%
\begin{minipage}{0.24\textwidth}
\centering
\textbf{NV-GM}
\end{minipage}%
\begin{minipage}{0.27\textwidth}
\centering
\textbf{FLV-GKV}
\end{minipage}
\begin{minipage}{0.23\textwidth}
\centering
\textbf{NV-GKV}
\end{minipage}
\\
\hfill \\
\begin{minipage}{0.25\textwidth}
\centering
\includegraphics[width=0.4\linewidth, trim=1130 0 1130 0, clip]{./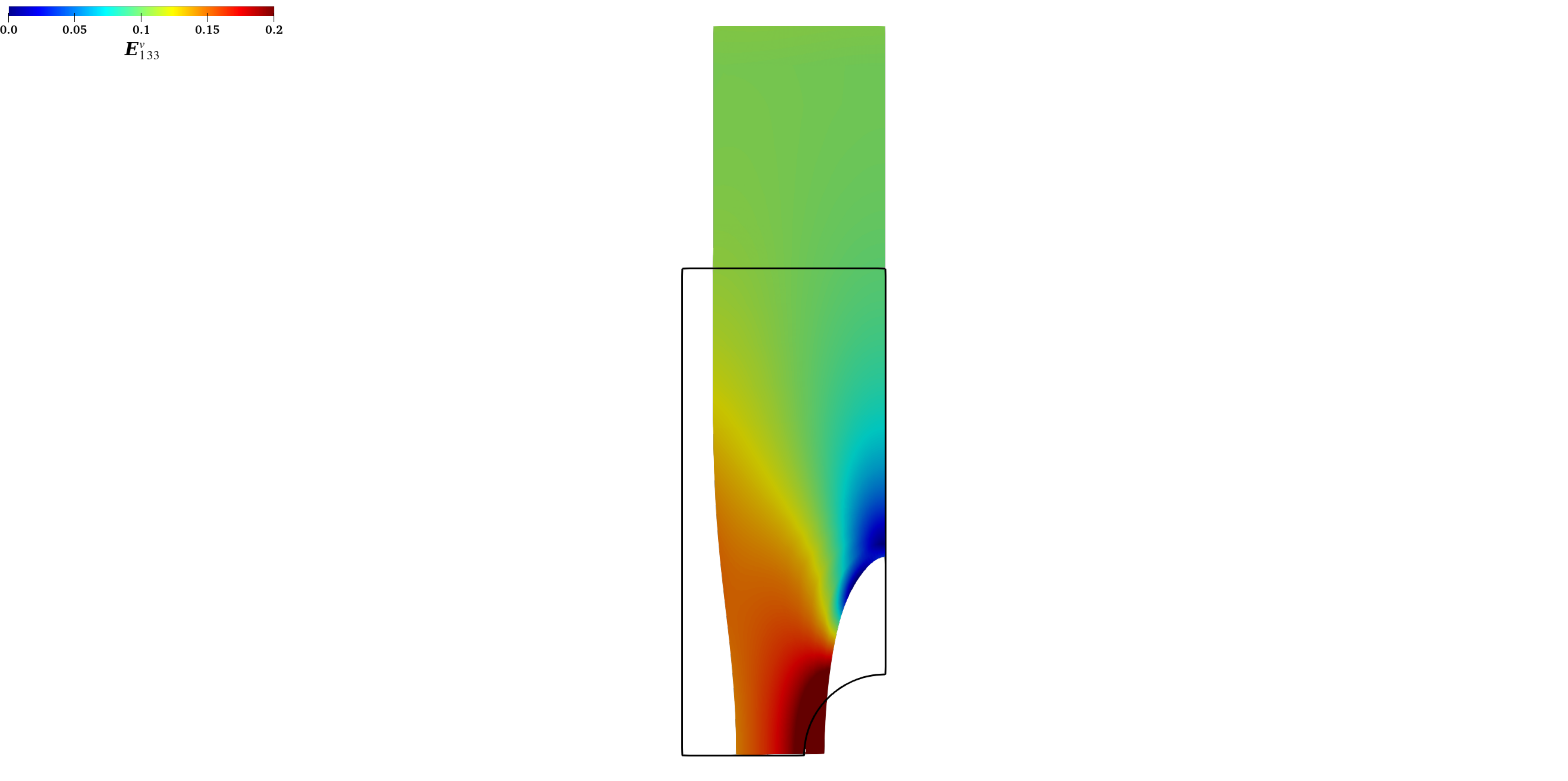}
\end{minipage}%
\begin{minipage}{0.25\textwidth}
\centering
\includegraphics[width=0.4\linewidth, trim=1130 0 1130 0, clip]{./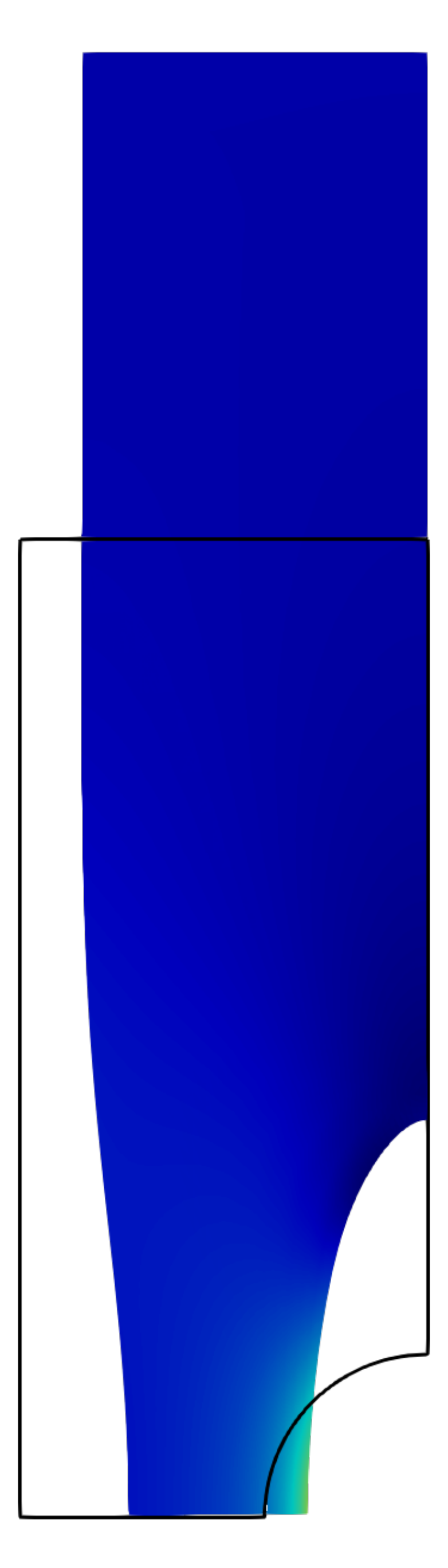}
\end{minipage}%
\begin{minipage}{0.25\textwidth}
\centering
\includegraphics[width=0.4\linewidth, trim=1130 0 1130 0, clip]{./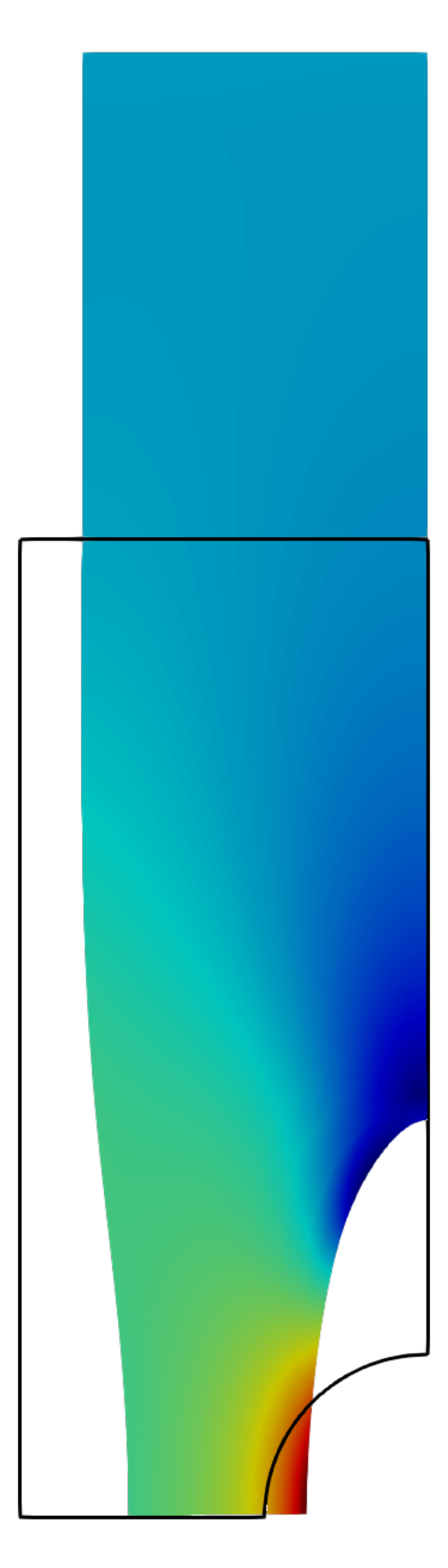}
\end{minipage}%
\begin{minipage}{0.25\textwidth}
\centering
\includegraphics[width=0.4\linewidth, trim=1130 0 1130 0, clip]{./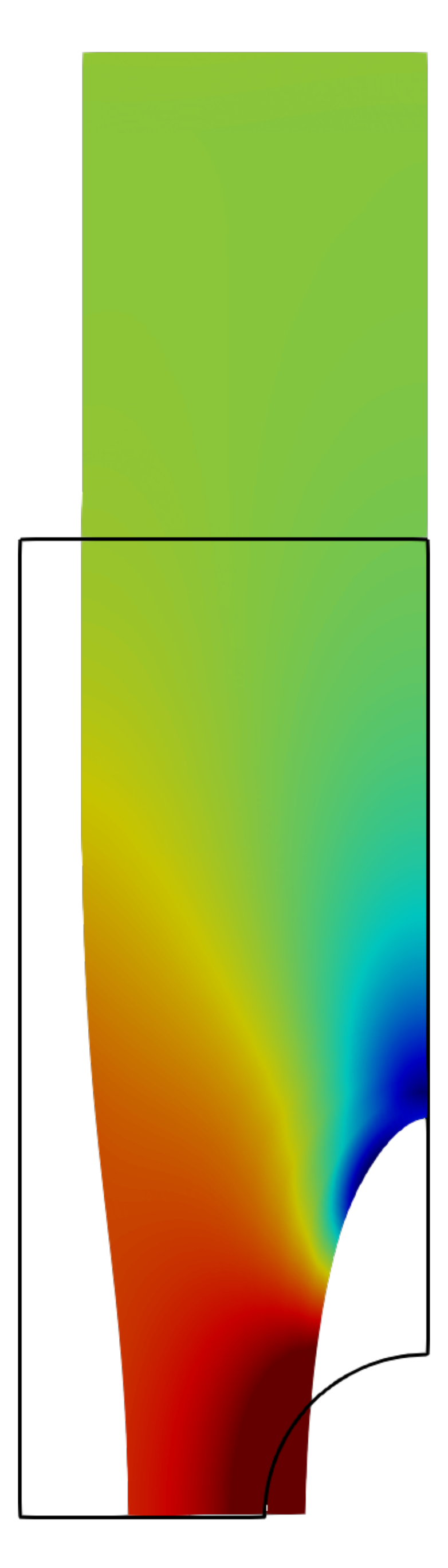}
\end{minipage}%
\\
	
\begin{minipage}{0.25\textwidth}
\centering
\includegraphics[width=0.4\linewidth, trim=1130 0 1130 0, clip]{./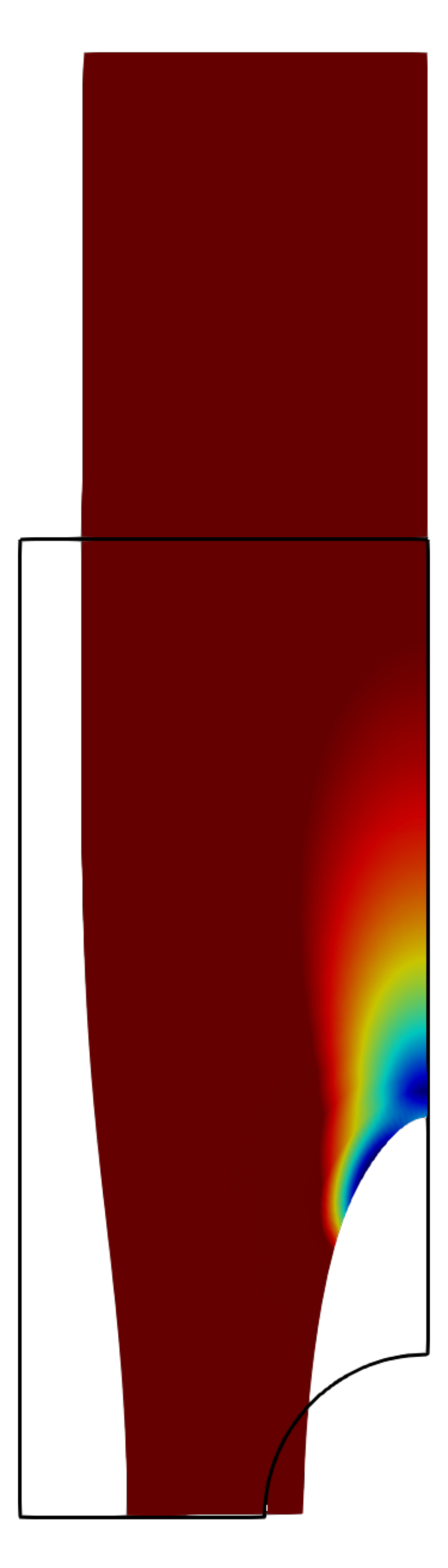}
\end{minipage}%
\begin{minipage}{0.25\textwidth}
\centering
\includegraphics[width=0.4\linewidth, trim=1130 0 1130 0, clip]{./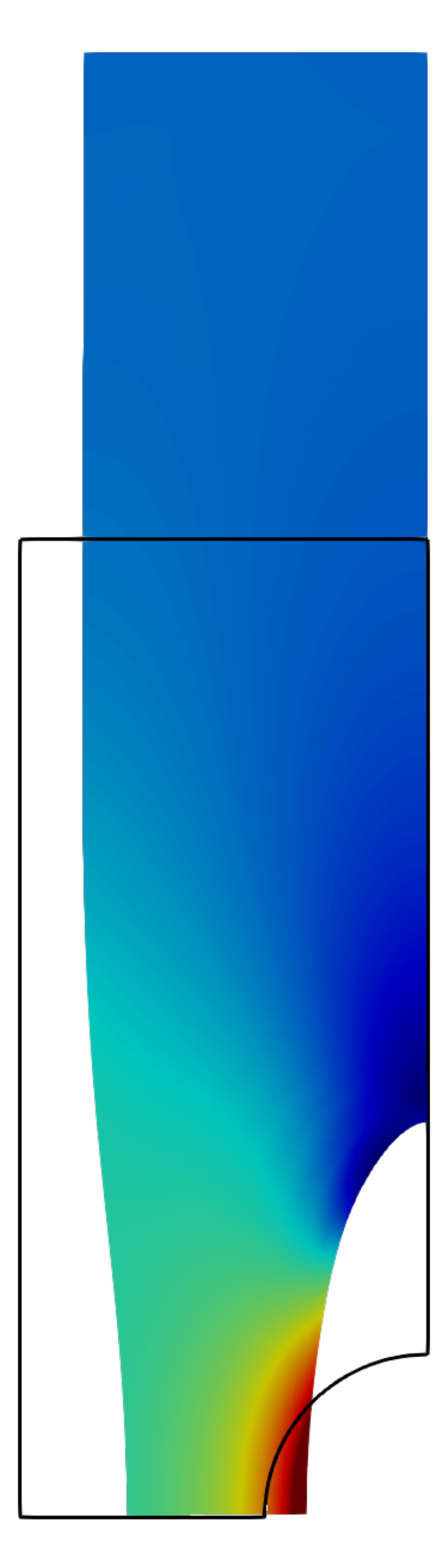}
\end{minipage}%
\begin{minipage}{0.25\textwidth}
\centering
\includegraphics[width=0.4\linewidth, trim=1130 0 1130 0, clip]{./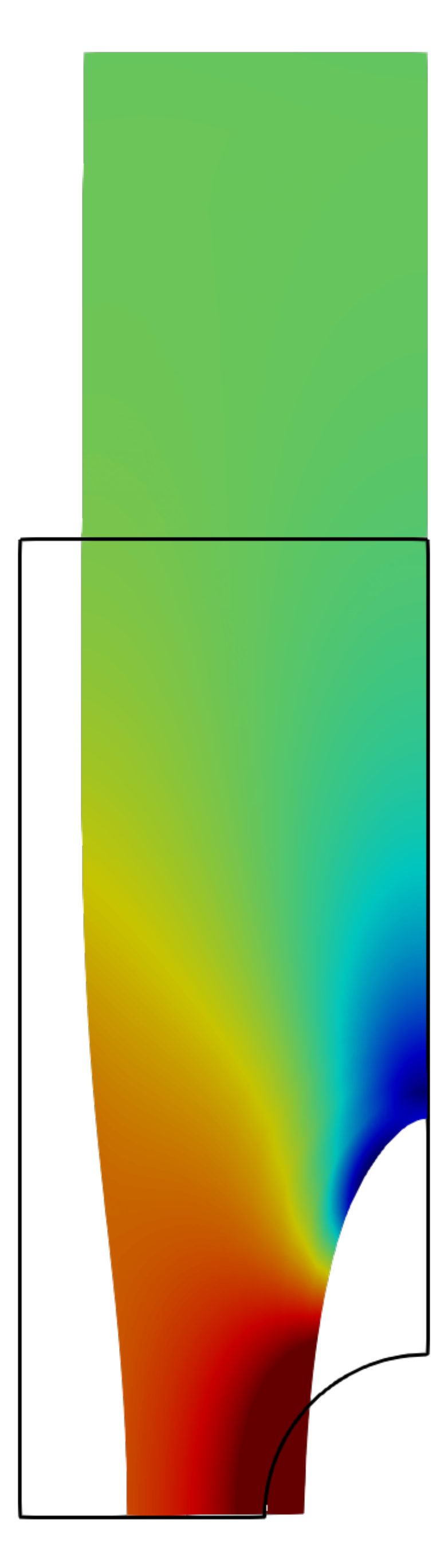}
\end{minipage}%
\begin{minipage}{0.25\textwidth}
\centering
\includegraphics[width=0.4\linewidth, trim=1130 0 1130 0, clip]{./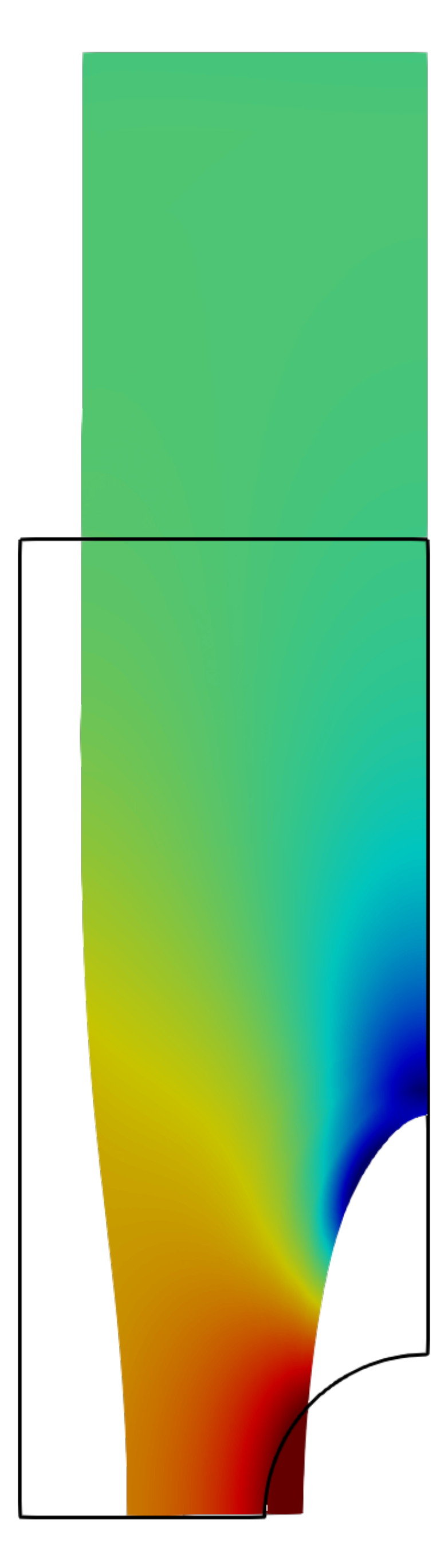}
\end{minipage}%
\\
	
\begin{minipage}{0.25\textwidth}
\centering
\includegraphics[width=0.4\linewidth, trim=1130 0 1130 200, clip]{./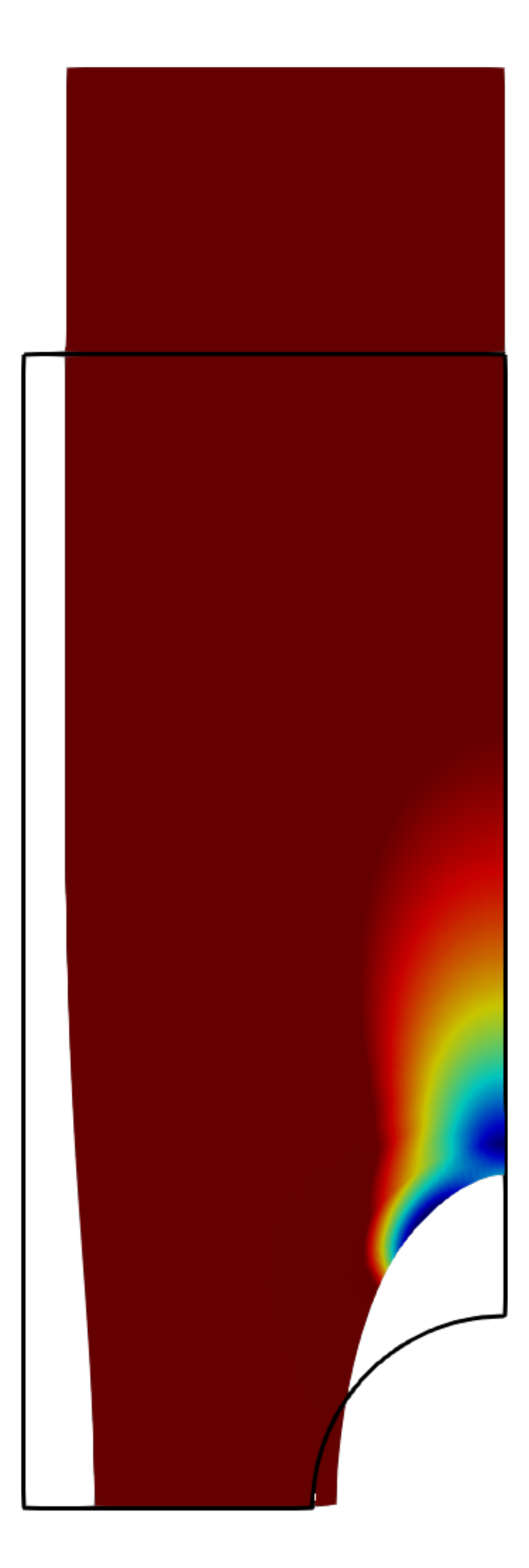}
\end{minipage}%
\begin{minipage}{0.25\textwidth}
\centering
\includegraphics[width=0.4\linewidth, trim=1130 0 1130 200, clip]{./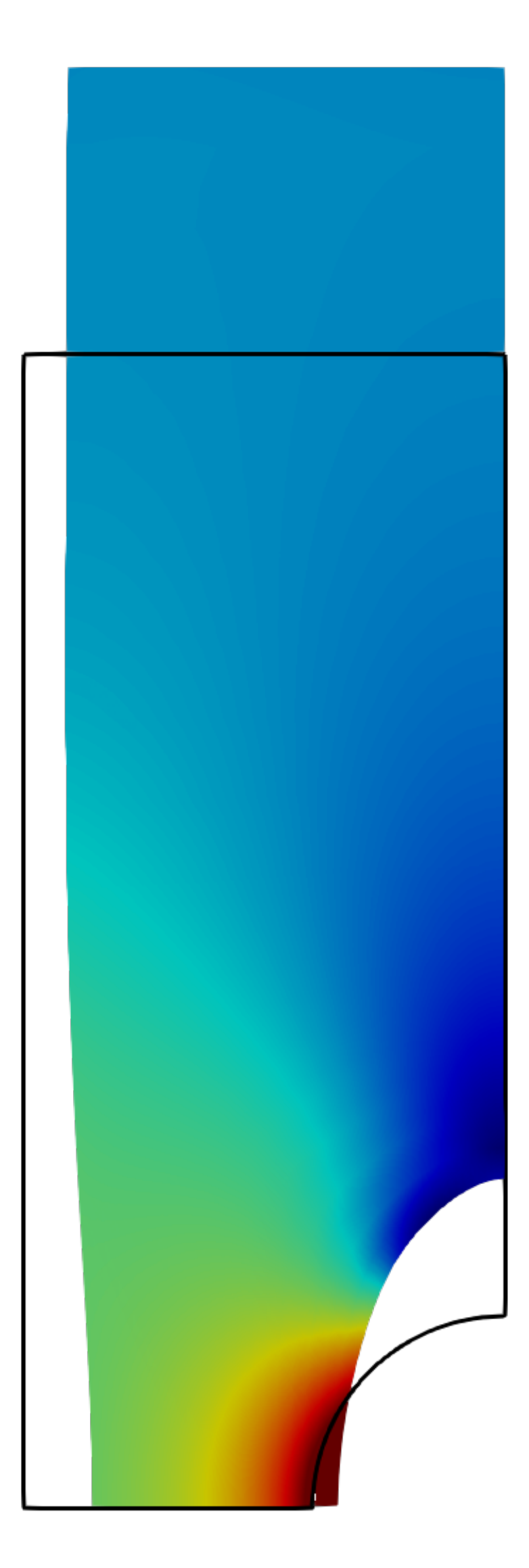}
\end{minipage}%
\begin{minipage}{0.25\textwidth}
\centering
\includegraphics[width=0.4\linewidth, trim=1130 0 1130 200, clip]{./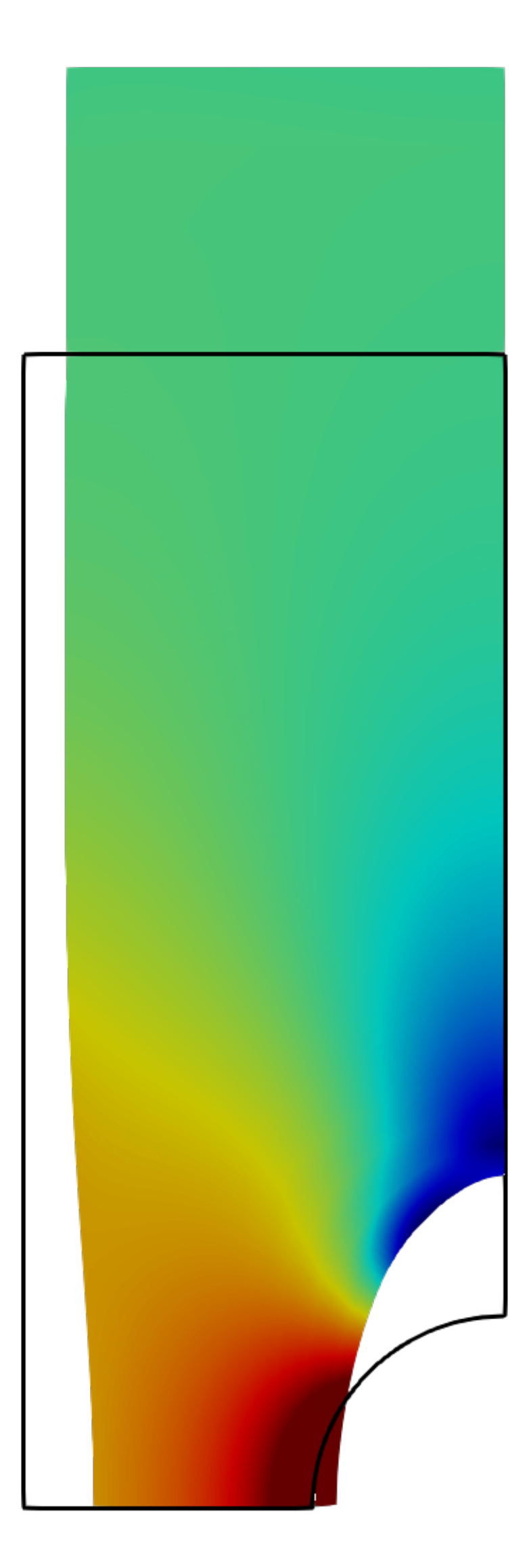}
\end{minipage}%
\begin{minipage}{0.25\textwidth}
\centering
\includegraphics[width=0.4\linewidth, trim=1130 0 1130 200, clip]{./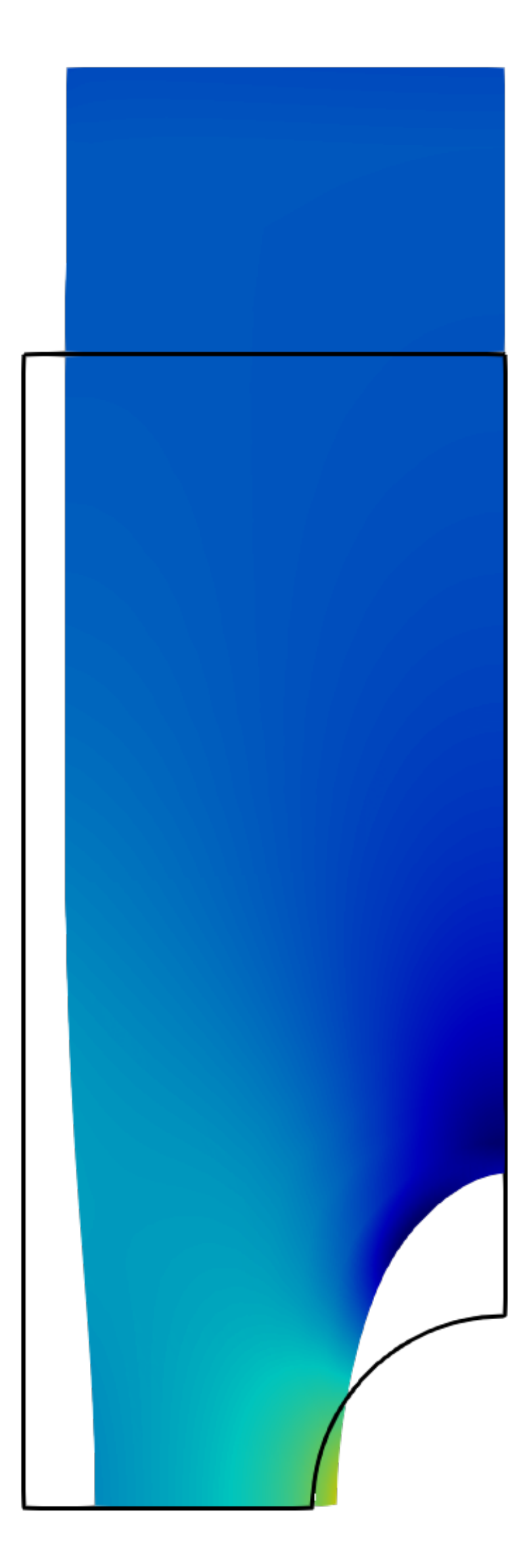}
\end{minipage}%
\\
\begin{minipage}{1.0\textwidth}
\centering
\includegraphics[width=0.35\linewidth, trim=0 1210 2160 0, clip]{./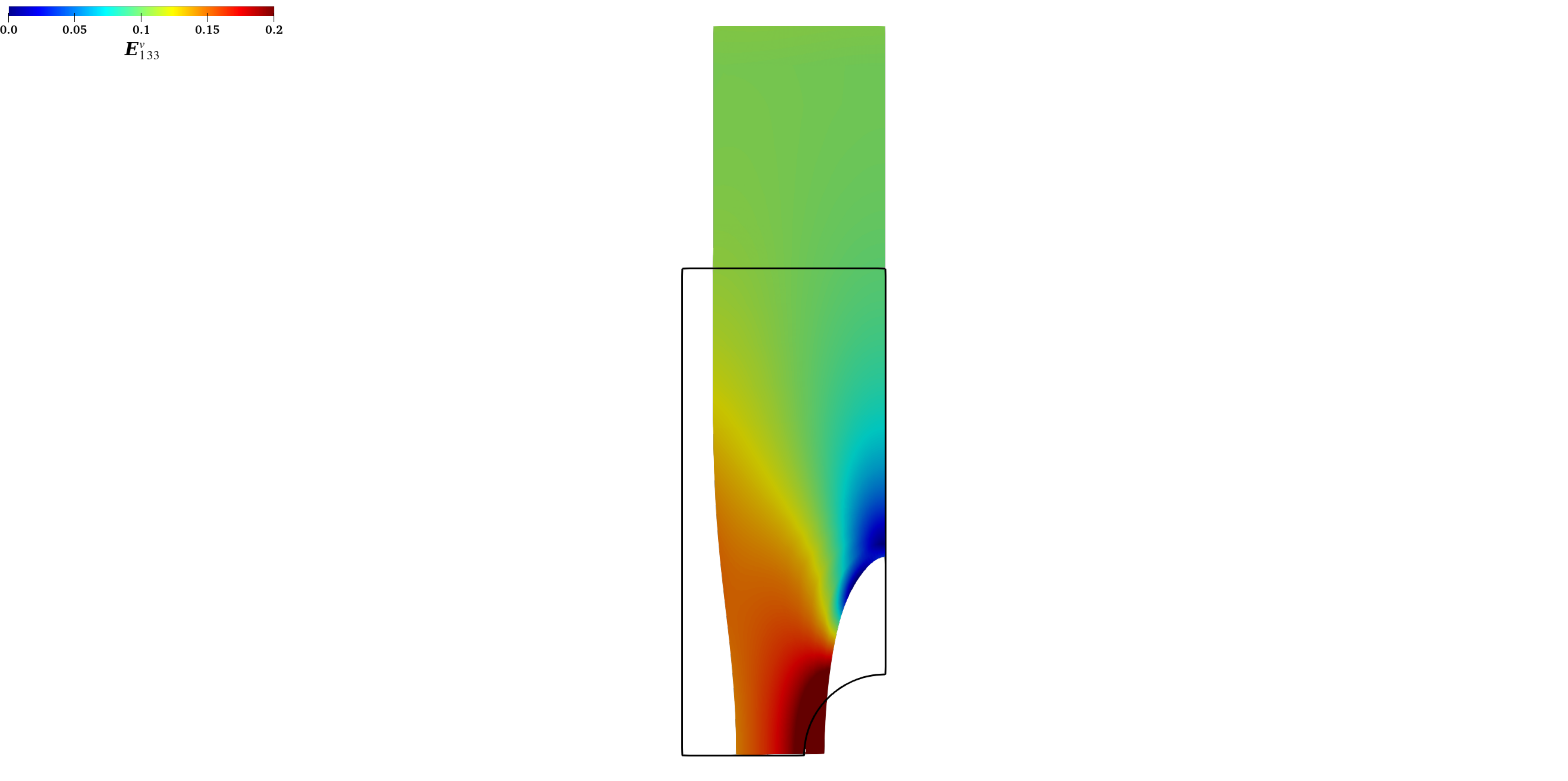}
\end{minipage}%
\caption{The distributions of $\bm E^{\mathrm v}_{1\:33}$ at $t = 10$, $20$, and $25$~s (top to bottom rows).}
\label{fig:Ev1_distribution}
\end{figure}

\begin{figure}[h]
\centering
\begin{minipage}{0.25\textwidth}
\centering
\textbf{FLV-GM}
\end{minipage}%
\begin{minipage}{0.24\textwidth}
\centering
\textbf{NV-GM}
\end{minipage}%
\begin{minipage}{0.27\textwidth}
\centering
\textbf{FLV-GKV}
\end{minipage}
\begin{minipage}{0.23\textwidth}
\centering
\textbf{NV-GKV}
\end{minipage}
\\
\hfill \\
\begin{minipage}{0.25\textwidth}
\centering
\includegraphics[width=0.4\linewidth, trim=1130 0 1130 0, clip]{./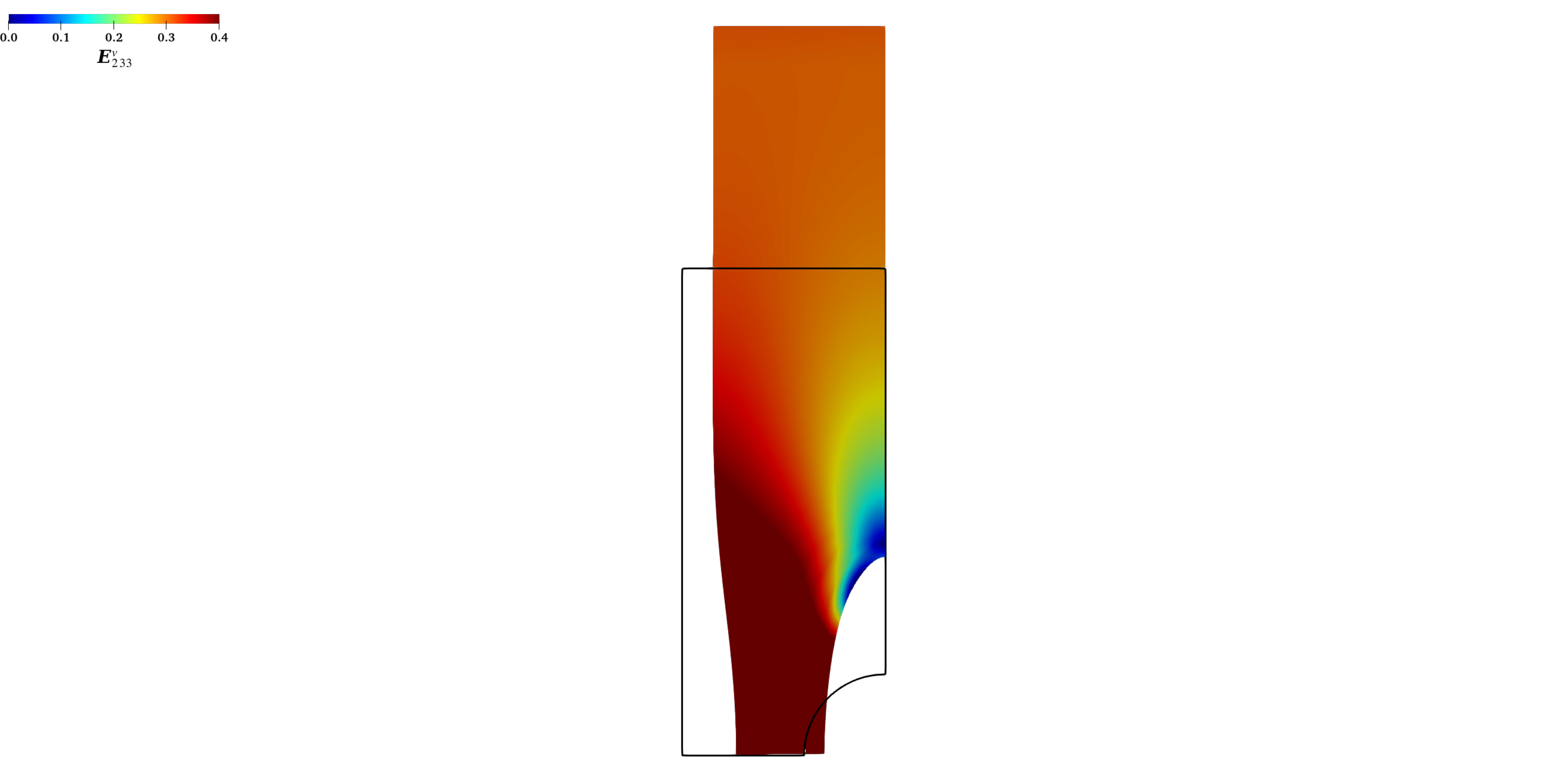}
\end{minipage}%
\begin{minipage}{0.25\textwidth}
\centering
\includegraphics[width=0.4\linewidth, trim=1130 0 1130 0, clip]{./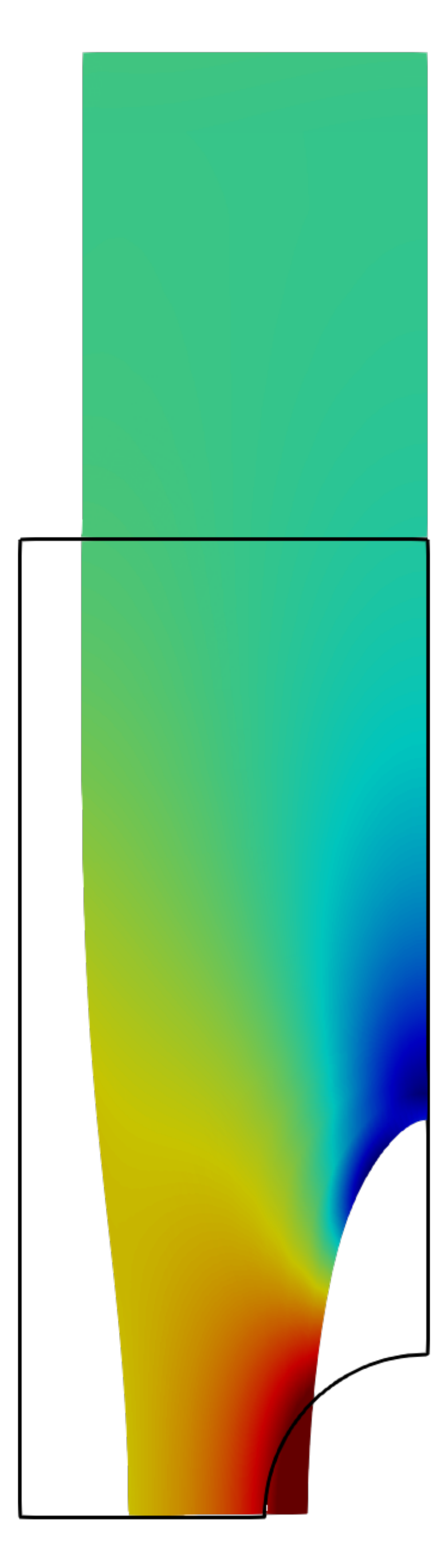}
\end{minipage}%
\begin{minipage}{0.25\textwidth}
\centering
\includegraphics[width=0.4\linewidth, trim=1130 0 1130 0, clip]{./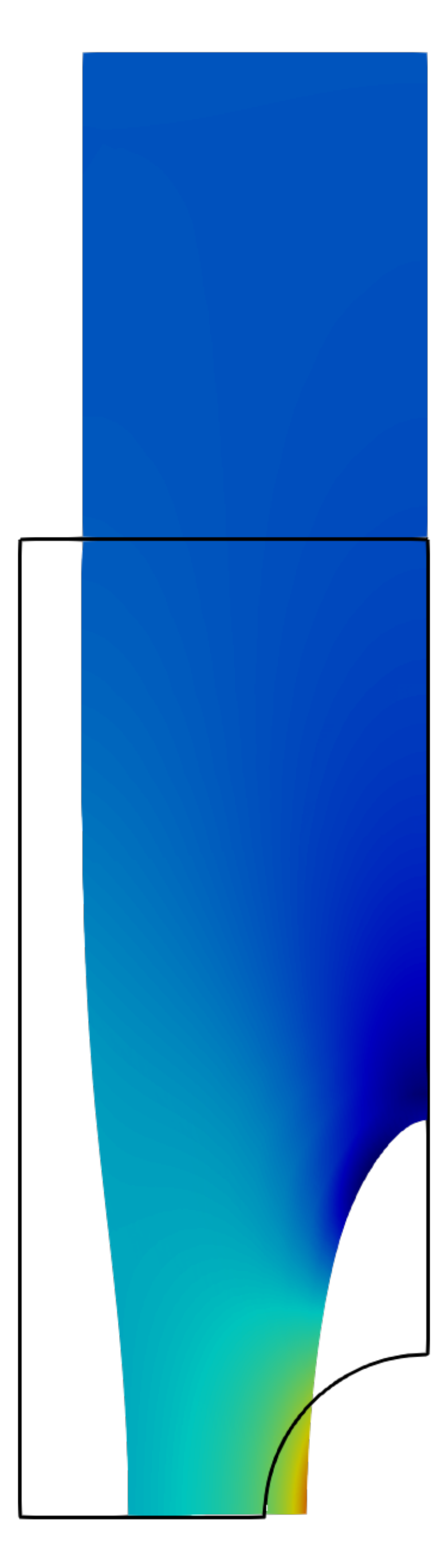}
\end{minipage}%
\begin{minipage}{0.25\textwidth}
\centering
\includegraphics[width=0.4\linewidth, trim=1130 0 1130 0, clip]{./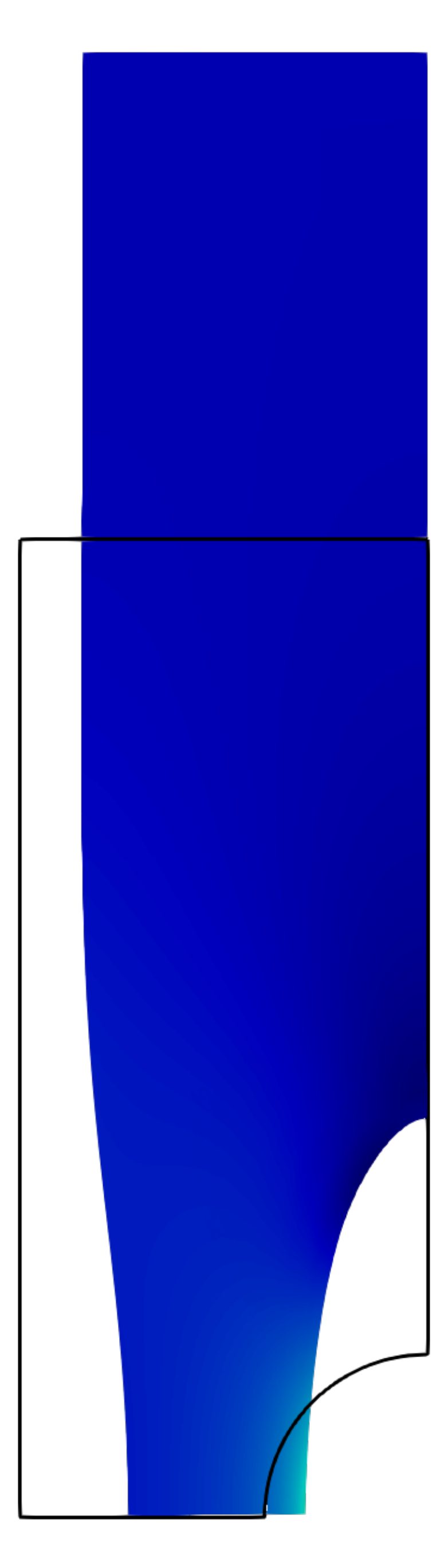}
\end{minipage}%
\\
	
\begin{minipage}{0.25\textwidth}
\centering
\includegraphics[width=0.4\linewidth, trim=1130 0 1130 0, clip]{./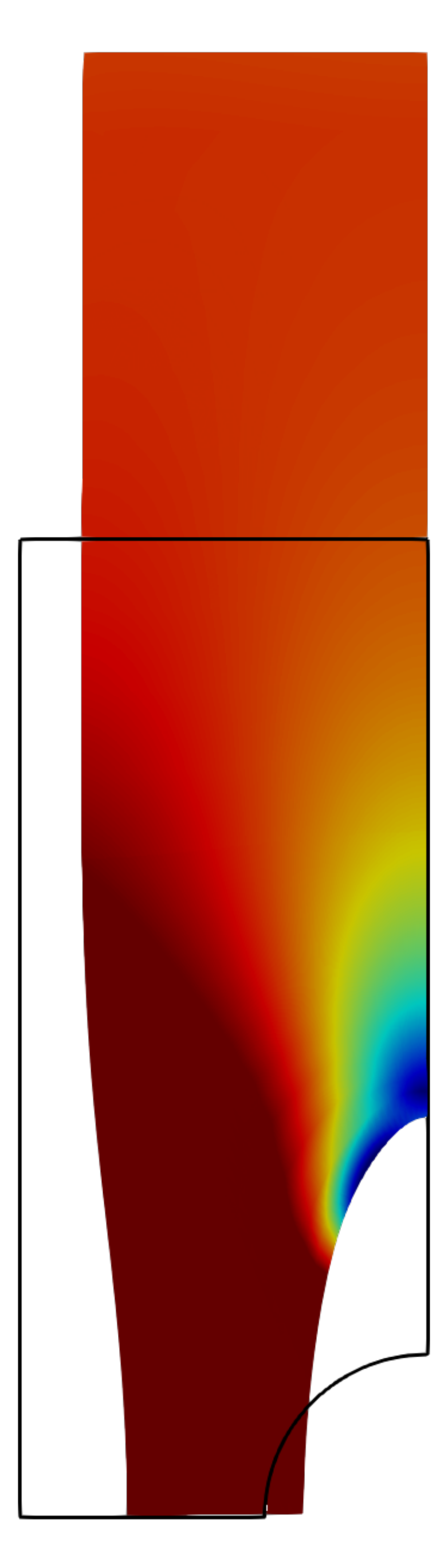}
\end{minipage}%
\begin{minipage}{0.25\textwidth}
\centering
\includegraphics[width=0.4\linewidth, trim=1130 0 1130 0, clip]{./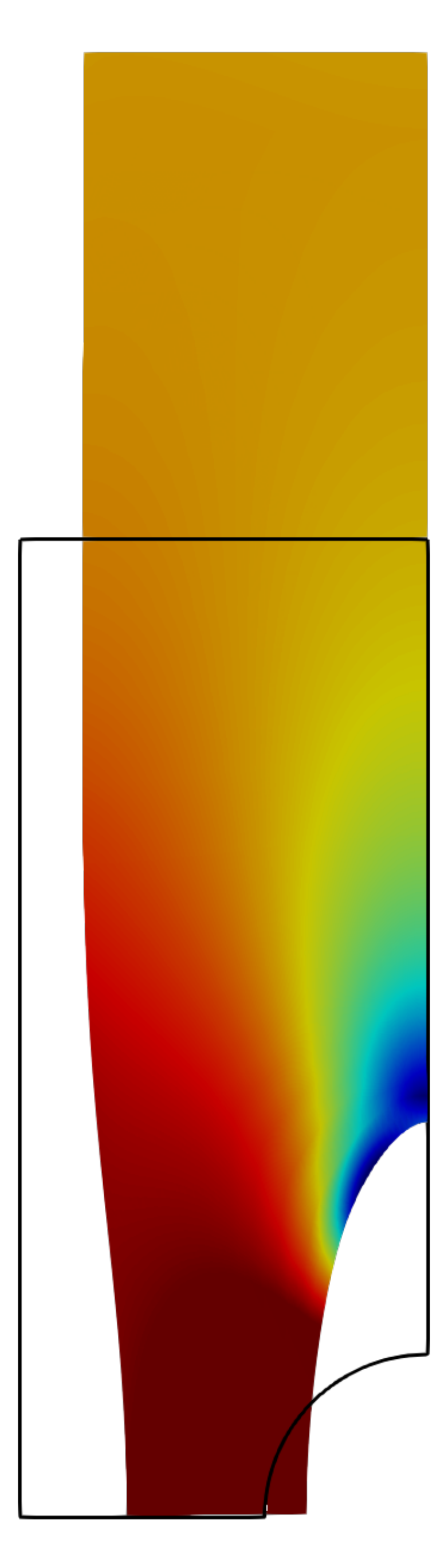}
\end{minipage}%
\begin{minipage}{0.25\textwidth}
\centering
\includegraphics[width=0.4\linewidth, trim=1130 0 1130 0, clip]{./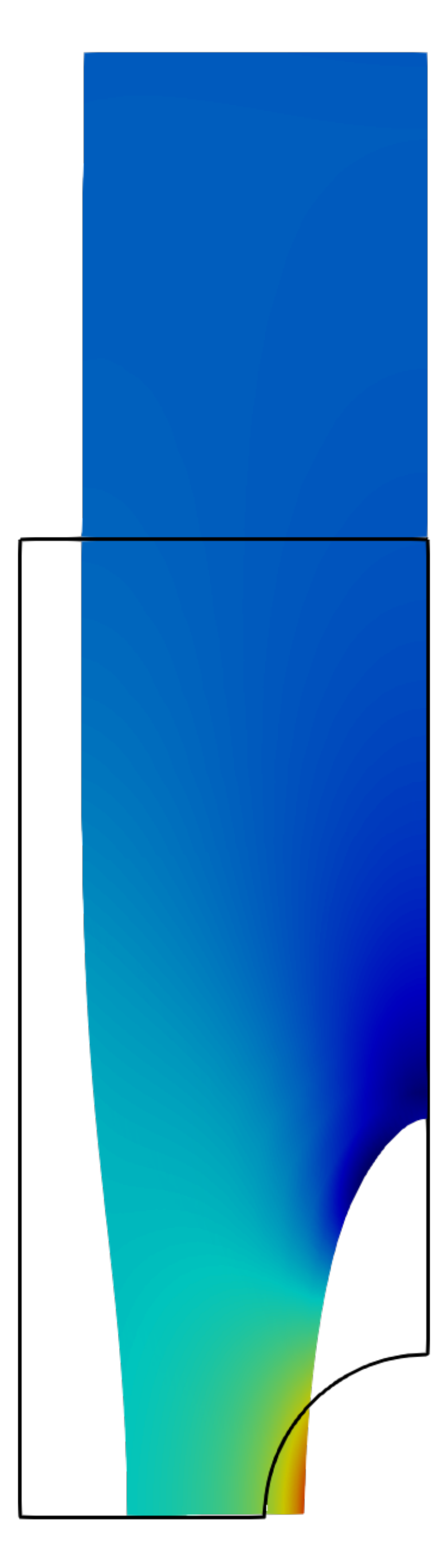}
\end{minipage}%
\begin{minipage}{0.25\textwidth}
\centering
\includegraphics[width=0.4\linewidth, trim=1130 0 1130 0, clip]{./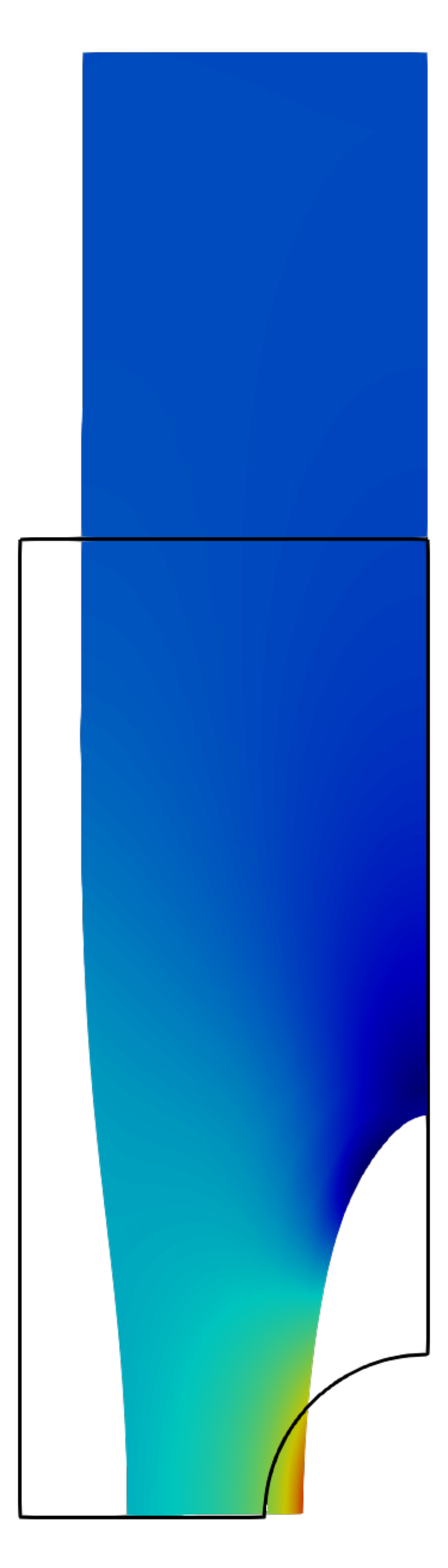}
\end{minipage}%
\\
	
\begin{minipage}{0.25\textwidth}
\centering
\includegraphics[width=0.4\linewidth, trim=1130 0 1130 200, clip]{./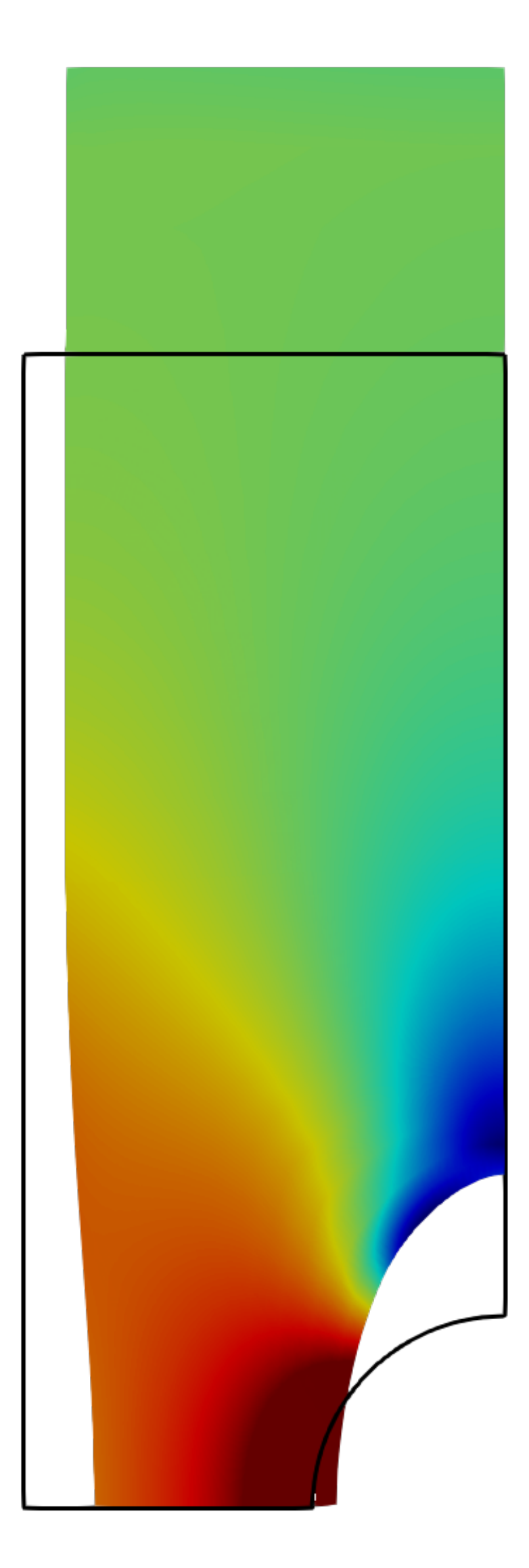}
\end{minipage}%
\begin{minipage}{0.25\textwidth}
\centering
\includegraphics[width=0.4\linewidth, trim=1130 0 1130 200, clip]{./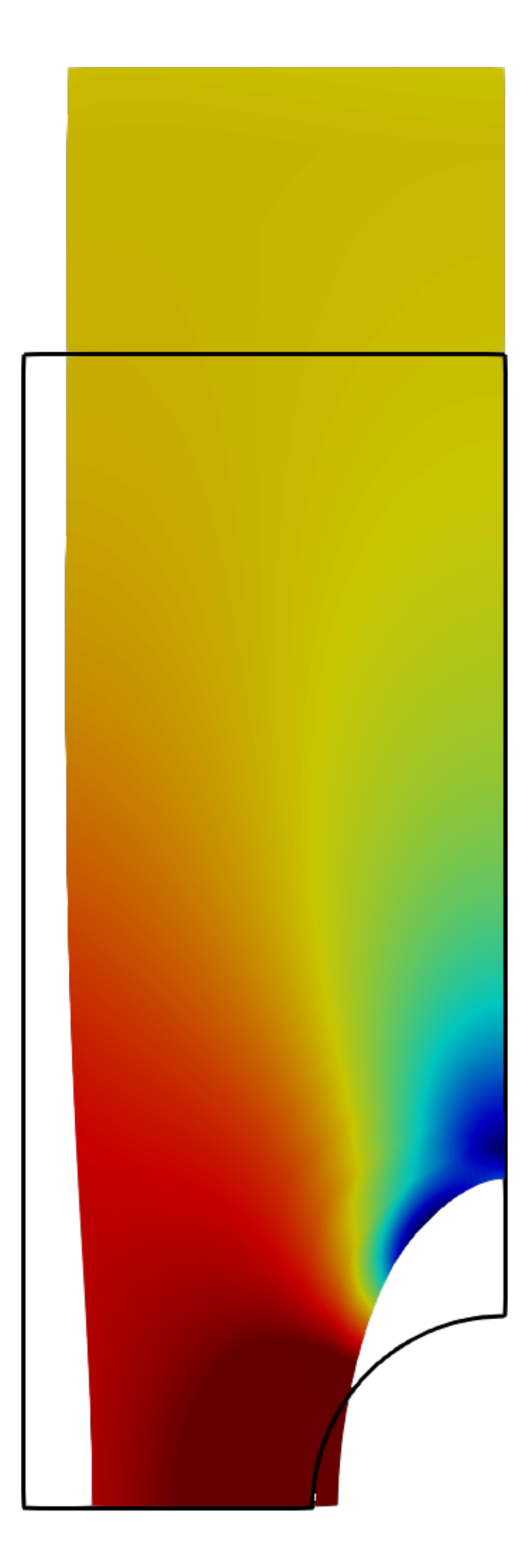}
\end{minipage}%
\begin{minipage}{0.25\textwidth}
\centering
\includegraphics[width=0.4\linewidth, trim=1130 0 1130 200, clip]{./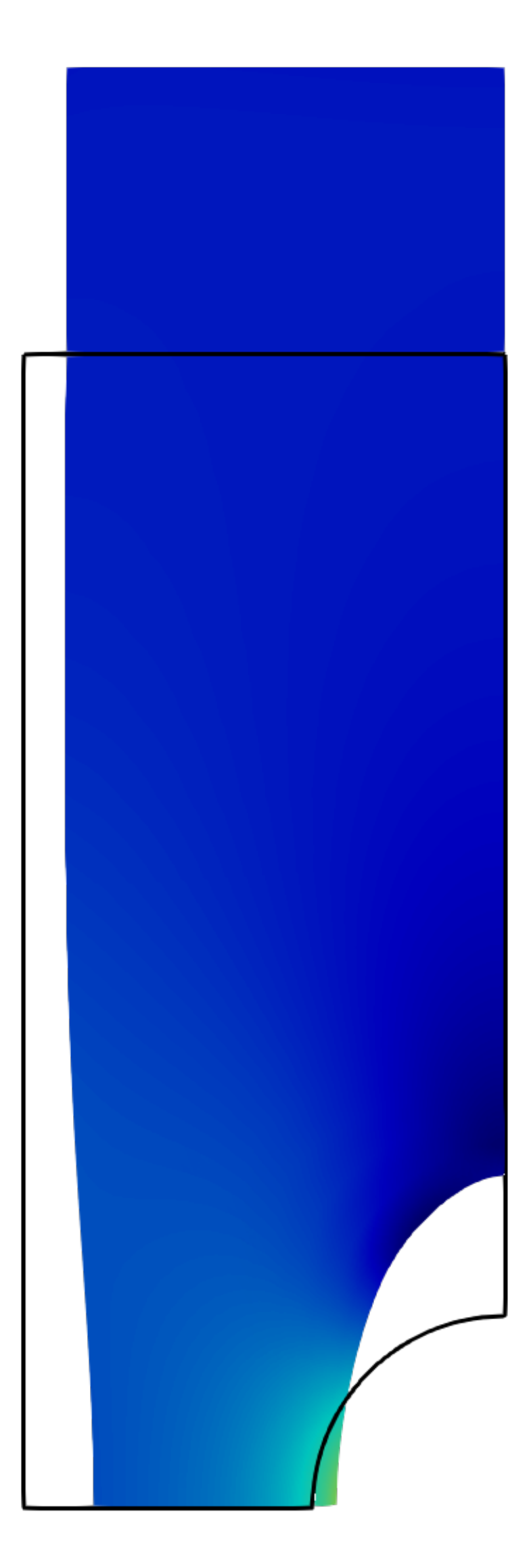}
\end{minipage}%
\begin{minipage}{0.25\textwidth}
\centering
\includegraphics[width=0.4\linewidth, trim=1130 0 1130 200, clip]{./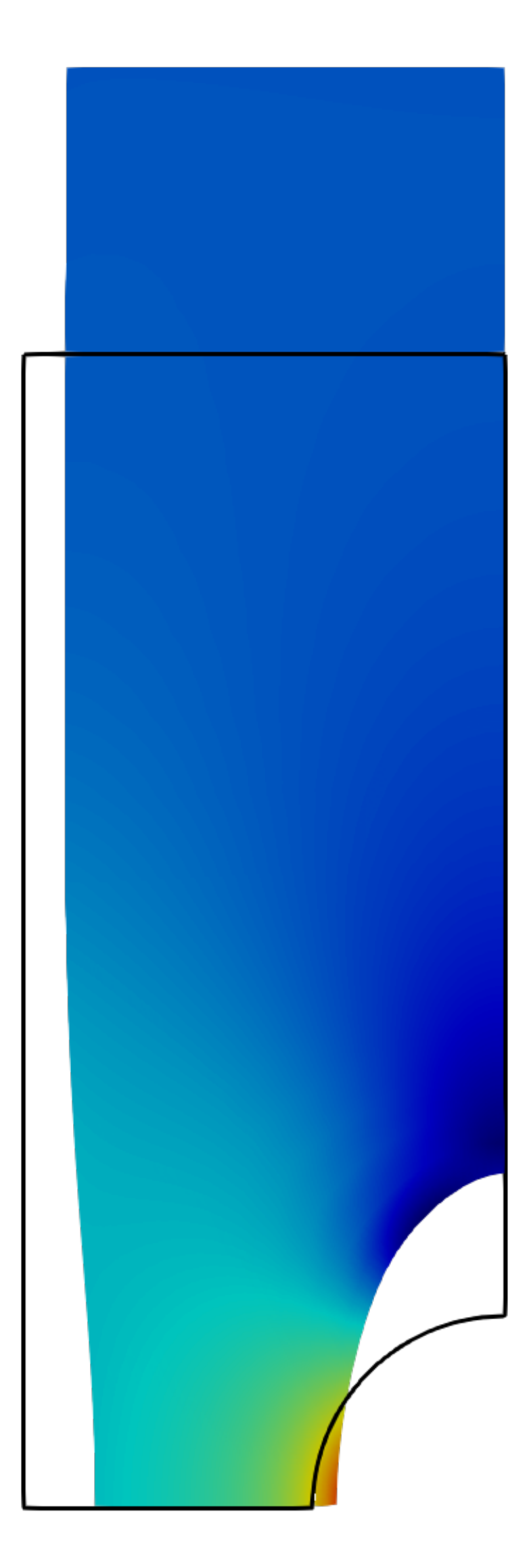}
\end{minipage}%
\\
\begin{minipage}{1.0\textwidth}
\centering
\includegraphics[width=0.3\linewidth, trim=0 1200 2250 0, clip]{./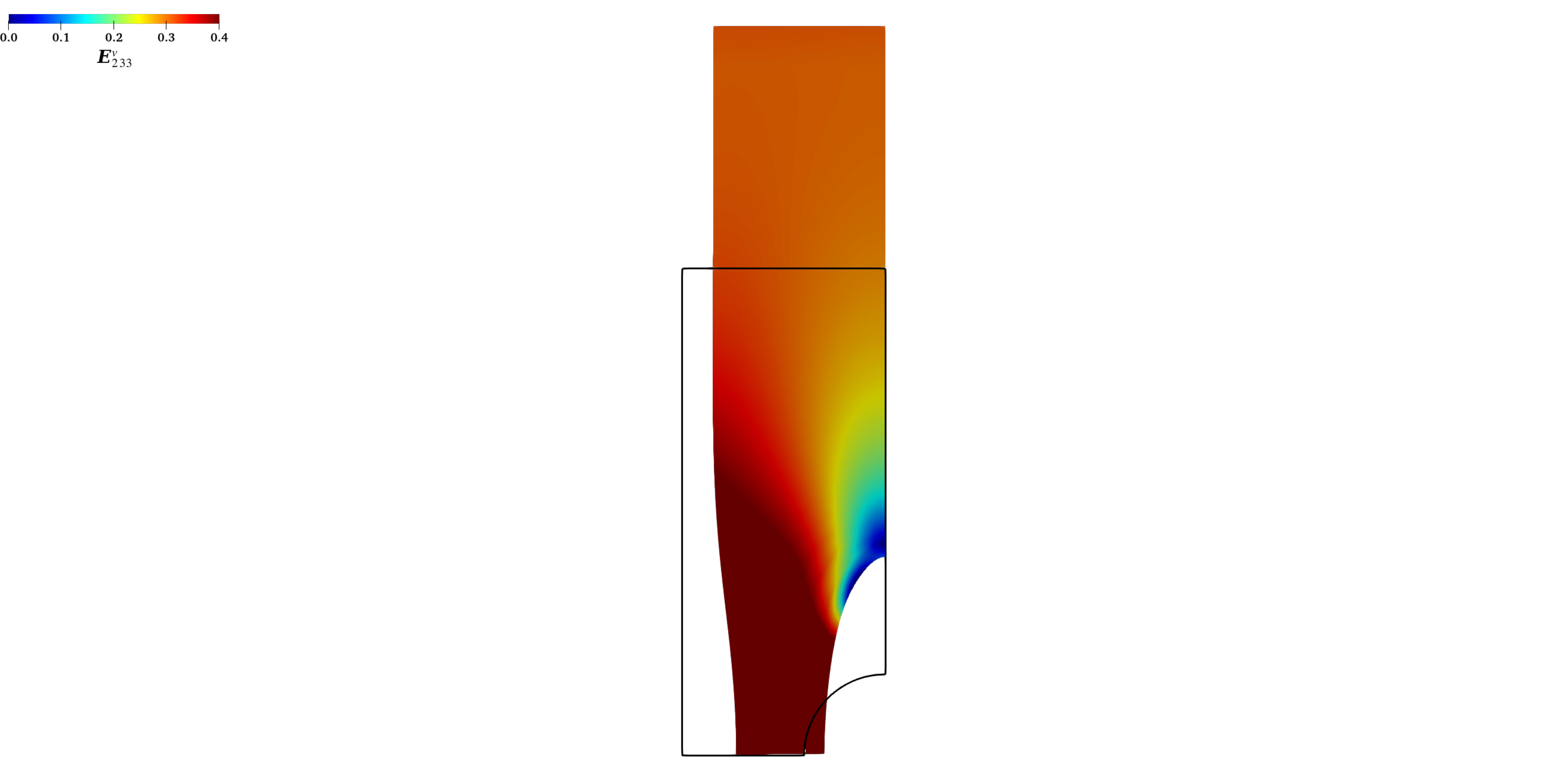}
\end{minipage}%
\caption{The distributions of $\bm E^{\mathrm v}_{2\:33}$ at $t = 10$, $20$, and $25$~s (top to bottom rows).}
\label{fig:Ev2_distribution}
\end{figure}

We monitor the total traction along the stretching direction on the top surface (Figure \ref{fig:time_traction}). All four models exhibit a similar traction response over time. We also examine the distribution of the Cauchy stress in the loading direction by depicting $\bm \sigma_{33}$ across the entire plate (Figure \ref{fig:stress_distribution}). The stress fields are evaluated at $t = 10$, $20$, and $25$~s, corresponding to the times of the maximum stretch, the end of the relaxation phase, and the completion of the compression phase, respectively. The predicted stress distributions from the four models exhibit good overall consistency, with only minor discrepancies in the vicinity of the hole. These results indicate that the models calibrated to the same experimental data yield consistent mechanical responses.

Moreover, we examine the distributions of two internal variable components in the loading direction, specifically $\bm E^\mathrm{v}_{1\:33}$ and $\bm E^\mathrm{v}_{2\:33}$, across the entire plate, as shown in Figure~\ref{fig:Ev1_distribution} and Figure~\ref{fig:Ev2_distribution}, respectively. Although the finite linear and nonlinear viscoelastic models under both rheological frameworks produce nearly identical stress responses, the evolution behaviors of their internal variables differ significantly. For the generalized Maxwell models, the evolution of the internal variables exhibits notably different intensities between the finite linear and nonlinear models. This discrepancy can be partly attributed to the difference in their relaxation or retardation times: the FLV-GM model uses $\eta_1/\mu_1 = 13.42$~s and $\eta_2/\mu_2 = 0.78$~s, whereas the NV-GM model yields much larger effective relaxation times, $\eta_1/\mu_1 = 315.88$~s and $\eta_2/\mu_2 = 14.12$~s. In contrast, for the generalized Kelvin-Voigt models, the distributions and magnitudes of the internal variable components are very similar, and become nearly identical after the relaxation phase. This similarity may be attributed to the serially connected architecture of the generalized Kelvin-Voigt model.

\section{Conclusion}
\label{sec:conclusion}
In this work, we have developed a constitutive framework for finite viscoelasticity, formulated for two representative rheological representations. In contrast to the generalized Maxwell model, which has been extensively studied in finite viscoelasticity, the generalized Kelvin-Voigt model has received limited attention in the literature. Its configuration with multiple elements connected in series poses inherent complexities in modeling and computation. The framework employs a strain-like tensor in $\mathrm{Sym}(3)$ as the internal variable, which conceptually represents the viscous strain. Distinct kinematic assumptions are proposed for the two configurations. The formulation unifies a broad range of canonical models, including those of Simo \cite{Simo1987,Liu2021b}, Green and Tobolsky \cite{Green1946}, Lubliner \cite{Lubliner1985}, and Miehe and Keck \cite{Miehe2000}, as special cases. 

Employing coercive strains enables a consistent construction of elastic and viscous deformation tensors, thereby permitting the incorporation of more general hyperelastic energy designs. In this study, the micromechanically motivated eight-chain model \cite{Arruda1993,Bischoff2001} is incorporated to construct nonlinear viscoelastic models. A key feature of our approach is that the strain parameters can be calibrated from experimental data. As a result, the kinematic decomposition is not imposed a priori but is instead adapted to the specific material. Our calibration for VHB 4910 demonstrates that the flexible kinematic decomposition enables the finite linear viscoelastic models to deliver competitive performance. In the meantime, the nonlinear models achieve further improvements in accuracy, especially in the large-strain regime.

The formulation is derived within a thermodynamically consistent framework, using the dissipation potential and Ziegler’s principle of maximum dissipation to characterize the dissipative mechanisms. The resulting constitutive relations exhibit distinctive features, reflecting different underlying rheological architectures. For the generalized Kelvin–Voigt model, the series-connected configuration leads to a coupled system of evolution equations. In its constitutive integration, the size of the local system grows proportionally with the number of non-equilibrium processes. Our analysis reveal that the associated matrix is a rank-one modification of a diagonal matrix, which enables an efficient inversion by utilizing the Sherman–Morrison–Woodbury formula \cite{Sherman1950,Hager1989}. This strategy extends to tensorial equations, which facilitates handling nonlinear evolution equations efficiently. As a result, the computational cost of the generalized Kelvin–Voigt model scales linearly with the number of non-equilibrium processes, matching that of the generalized Maxwell model. Numerical results show that, although the generalized Maxwell and generalized Kelvin–Voigt models may produce similar or even identical stress responses under certain parameter conditions, the evolution of their internal variables differs markedly. This behavior reflects the fundamentally distinct microscopic mechanisms of parallel and series arrangements. Moreover, clear differences arise between the linear and nonlinear evolution equations, with the nonlinear models generally exhibiting stronger dissipation.

The two considered rheological architectures serve as prototypes for more elaborate models, providing a foundation for the systematic incorporation of multiple inelastic mechanisms. The use of the dissipation potential further paves the way for extending the framework to non-Newtonian viscous behaviors. Finally, the adopted kinematic assumption eliminates the need for the troublesome intermediate configuration inherent in the traditional multiplicative decomposition and offers a promising route for effectively addressing anisotropy.

\section*{Acknowledgements}
This work is supported by the National Natural Science Foundation of China [Grant Numbers 12172160, 12472201], Shenzhen Science and Technology Program [Grant Number JCYJ20220818100600002], Southern University of Science and Technology [Grant Number Y01326127], and the Department of Science and Technology of Guangdong Province [2021QN020642]. Computational resources are provided by the Center for Computational Science and Engineering at the Southern University of Science and Technology.

\bibliographystyle{elsarticle-num}
\bibliography{viscoelasticity-theory}
\appendix
\end{document}